\definecolor{bleu_sombre}{rgb}{0,0,0.6}  \definecolor{rouge_sombre}{rgb}{0.8,0,0}\definecolor{vert_sombre}{rgb}{0,0.6,0}
\theoremstyle{plain}
\newtheorem{theorem}{{Theorem}}[section] 
\newtheorem*{theorem*}{{Theorem}}
\newtheorem{proposition}[theorem]{Proposition}
\newtheorem*{proposition*}{Proposition}
\newtheorem{corollary}[theorem]{Corollary}
\newtheorem*{corollary*}{Corollary}
\newtheorem{lemma}[theorem]{Lemma}
\newtheorem*{lemma*}{Lemma}
\theoremstyle{definition}
\newtheorem*{definition*}{Definition}
\theoremstyle{remark}
\newtheorem{remark}[theorem]{Remark}
\newcommand{\R}{\mathbb{R}}		\newcommand{\C}{\mathbb{C}}
\newcommand{\N}{\mathbb{N}}		
\renewcommand{\a}{\alpha}\renewcommand{\b}{\beta}\newcommand{\g}{\gamma}\newcommand{\G}{\Gamma}\renewcommand{\d}{\delta}\newcommand{\D}{\Delta}\newcommand{\e}{\varepsilon}\newcommand{\z}{\zeta} \newcommand{\y}{\eta}\renewcommand{\th}{\theta}\newcommand{\Th}{\Theta}\renewcommand{\k}{\kappa}\renewcommand{\l}{\lambda}\newcommand{\m}{\mu}\newcommand{\n}{\nu}\newcommand{\x}{\xi}\newcommand{\s}{\sigma}\renewcommand{\t}{\tau}\newcommand{\f}{\varphi}\newcommand{\vf}{\phi}\newcommand{\h}{\chi}\newcommand{\p}{\psi}\renewcommand{\o}{\omega}\renewcommand{\O}{\Omega}
\newcommand{\Ac}{{\mathcal A}}\newcommand{\Dc}{{\mathcal D}}\newcommand{\Kc}{{\mathcal K}}\newcommand{\Lc}{{\mathcal L}}\newcommand{\Pc}{{\mathcal P}}\newcommand{\Rc}{{\mathcal R}}\newcommand{\Uc}{{\mathcal U}}\newcommand{\Vc}{{\mathcal V}}\newcommand{\Zc}{{\mathcal Z}}
\newcommand{\eqv}{\Longleftrightarrow}               
\renewcommand{\leq}{\leqslant}	\renewcommand{\geq}{\geqslant}
\renewcommand{\bar}[1]{\overline{#1}}
\newcommand{\inv}{^{-1}}
\newcommand {\limt}[2]{\xrightarrow[#1 \to #2]{}}
\newcommand{\abs}[1]{\left\vert #1\right\vert}        
\newcommand{\nr}[1]{\left\Vert #1\right\Vert}         
\newcommand{\innp}[2]{\left< #1 , #2 \right>}         
\newcommand{\pppg}[1] {\left< #1 \right>} 	
\newcommand{\1}{\mathds 1}
\newcommand{\st}{\,:\,}	
\newcommand{\Dom}{\Dc}			
\newcommand{\Opw}{{\mathop{\rm{Op}}}_h^w}
\newcommand{\Opwm}{{\mathop{\rm{Op}}}_{h_m}^w}
\newcommand{\singl}[1]{\left\{ #1 \right\}}		
\newcommand{\seq}[2]{\left({#1}_{#2}\right)_{#2 \in\N}} 
\DeclareMathOperator{\Ran}{Ran}
\newcommand{\restr}[2]{\left.#1\right|_{#2}}         
\renewcommand{\Re}{\mathop{\rm{Re}}\nolimits}        
\renewcommand{\Im}{\mathop{\rm{Im}}\nolimits}        
\DeclareMathOperator{\supp}{supp}                    
\newcommand{\stepp}{\noindent {\bf $\bullet$}\quad }
\newcommand{\detail}[1]
{
}
\begin{document}

\newcommand{\EE}{\mathscr E}\newcommand{\EED}{\mathscr E_D}\newcommand{\EEN}{\mathscr E_N}
\newcommand{\HH}{\mathcal H}\newcommand{\HHD}{\mathcal H_D}\newcommand{\HHN}{\mathcal H_N}\newcommand{\HHRd}{\mathcal H_d}
\newcommand{\AcN}{\Ac_N}\newcommand{\AcD}{\Ac_D}\newcommand{\AcRd}{\Ac_{\R^d}}\newcommand{\AcKG}{\Ac_{KG}}
\newcommand{\RNz}{R_N(z)}\newcommand{\RNt}{R_N(\t)}\newcommand{\RN}{R_N}
\newcommand{\RDz}{R_D(z)}\newcommand{\RDt}{R_D(\t)}
\newcommand{\HuO}{H^1(\O)}\newcommand{\HuOp}{H^1(\O)'}
\newcommand{\Huo}{H^1_0(\O)}\newcommand{\Hinv}{H^{-1}(\O)}
\newcommand{\TN}{T_N}\newcommand{\TD}{T_D}
\newcommand{\Rcz}{{(-\DN -iz)\inv}}
\newcommand{\Rchz}{(-\Dz  -i \hat z)\inv}
\newcommand{\hRNz}{\hat R_N(z)}
\newcommand{\RcC}{\Rc_{\textrm{Heat}}(z)}
\newcommand{\Heat}{\textrm{Heat}}
\newcommand{\DN}{\D_N}\newcommand{\DDir}{\D_D}\newcommand{\Dz}{\D_z}\newcommand{\Dx}{\D_x}
\newcommand{\EEO}{\EE^0}
\newcommand{\EEbot}{\EE_0^\bot}
\newcommand{\Acbot}{\Ac_0^\bot}

\title{Energy decay in a wave guide with dissipation at infinity}

\author{Mohamed Malloug}
\address[Mohamed Malloug]{\'Ecole Sup\'erieure des Sciences et de la Technologie de Hammam Sousse, Universit\'e de Sousse, Rue Lamine Abassi, 4011 H. Sousse, Tunisia.}
\email{mallougm70@gmail.com}

\author{Julien Royer}
\address[Julien Royer]{Institut de Math\'ematiques de Toulouse, Universit\'e Toulouse 3, 118 route de Narbonne, F-31062 Toulouse cedex 9, France.}
\email{julien.royer@math.univ-toulouse.fr}

\subjclass[2010]{35L05, 35J10, 35J25, 35B40, 47A10, 47B44}
\keywords{Local and global energy decay, dissipative wave equation, wave guides, diffusive phenomenon, semiclassical analysis, low frequency resolvent estimates.}

\begin{abstract}
We prove local and global energy decay for the wave equation in a wave guide with damping at infinity. More precisely, the absorption index is assumed to converge slowly to a positive constant, and we obtain the diffusive phenomenon typical for the contribution of low frequencies when the damping is effective at infinity. On the other hand, the usual Geometric Control Condition is not necessarily satisfied so we may have a loss of regularity for the contribution of high frequencies. Since our results are new even in the Euclidean space, we also state a similar result in this case.
\end{abstract}

\maketitle

\section{Introduction and statement of the main results}

In this paper we study the wave equation with stabilisation at infinity, either in the usual Euclidean space or in a wave guide. We state our main results for the case of a wave guide, which was the original motivation of this paper. However some of our estimates are not known in the Euclidean space, so we will also give the analogous statements in this context.\\

Let $d,n \in \N^*$, and let $\o$ be a bounded, open, smooth and connected subset of $\R^n$. We denote by $\O$ the straight wave guide $\R^d \times \o \subset \R^{d+n}$. The main examples which we have in mind are the tube in $\R^3$ ($d = 1$, $\o \subset \R^2$), a layer in $\R^3$ ($d = 2$ and $\o$ is a bounded interval of $\R$) or a strip of $\R^2$ ($d = 1$, $\o \subset \R$). Everywhere in the paper we denote by $(x,y)$ a general point in $\O$, with $x \in \R^d$ and $y \in \o$.\\

Given $u_0 \in H^1(\O)$ and $u_1 \in L^2(\O)$, we first consider on $\O$ the dissipative wave equation with Neumann boundary condition
\begin{equation} \label{wave-neumann} 
\begin{cases}
\partial_t^2 u  -\D u + a \partial_t u = 0  & \text{on  }  \R_+ \times \O, \\
\partial_\n u = 0 & \text{on } \R_+ \times \partial \O,\\
\restr{(u , \partial_t u )}{t = 0} = (u_0, u_1) &  \text{on } \O.
\end{cases}
\end{equation}
The similar problem with Dirichlet boundary condition and the damped Klein-Gordon equation will be discussed below.\\

The function $a$ is the absorption index. It is bounded, takes non negative values and goes to 1 at infinity. More precisely, we assume that there exists $\rho > 0$ such that for $\b \in \N^d$ with $\abs \b \leq \frac d 2 + 1$ we have
\begin{equation} \label{hyp-amort-inf}
\forall x \in \R^d, \forall y \in \o, \quad \abs{\partial_x^\b \big(a(x,y)-1\big)} \leq C_\b \pppg x^{-\rho-\abs \b}.
\end{equation}

If $u$ is a solution of \eqref{wave-neumann}, then its energy at time $t$ is defined by 
\begin{equation} \label{def-E}
E(t) =  \int_\O  \abs{\nabla u (t)}^2  + \int_\O \abs{\partial_t u (t)}^2.
\end{equation}
We can check that
\[
E(t_2) - E(t_1) = -2 \int_{t_1}^{t_2} \int_{\O} a \abs {\partial_t u(t)}^2 \, dt,
\]
so the energy is a non-increasing function of time and the decay is due to the loss in the region where $a > 0$. Our purpose in this paper is to say more about this decay. It is also an important question to understand the decay of the local energy 
\[
E_R(t) = \int_{\abs x \leq R}  \abs{\nabla u (t)}^2  + \int_{\abs x \leq R}  \abs{\partial_t u (t)}^2
\]
for any $R > 0$.\\

On a compact manifold, it is now well known since the stabilisation results of \cite{raucht74} (for a manifold without boundary) and \cite{bardoslr92} (for stabilisation at the boundary) that we have uniform (therefore exponential) decay for the energy of the damped wave equation under the so-called Geometric Control Condition. Roughly speaking, the assumption is that any ray of light (trajectory for the flow of the underlying classical problem) should meet the damping region.

For the undamped wave equation on unbounded domains, we have uniform decay of the energy on any compact under the similar non-trapping condition, which says that all the classical trajectories should escape to infinity. Notice that since the total energy is conserved in this case, it is equivalent to say that the energy on any compact goes to 0 or that all the energy escapes to infinity.

This is in particular the case for the free wave equation on $\R^d$ by the Huygens principle. 
For compact perturbations of this model case and under the non-trapping assumption, the energy on any compact decays exponentially in odd dimensions and at rate $t^{-2d}$ if the dimension $d$ is even. See \cite{morawetzrs77} and \cite{melrose79}. See also \cite{ralston69} for the necessity of the geometric assumption. In \cite{bonyh12} and \cite{bouclet11} the problem is given by long-range perturbation of the free wave equation. The local energy (defined with a polynomially decaying weight) decays at rate $O(t^{-2d + \e})$ for any $\e > 0$ in this case.\\

In this paper we are interested in the (local and global) energy decay for the damped wave equation on an unbounded domain. The local energy decay in an exterior domain (with stabilisation at the boundary or in the interior of the domain) has been studied in \cite{alouik02,khenissi03}. We also mention \cite{boucletr14,royer-dld-energy-space} for a non-compact perturbation of the free model. The decay rates are the same as for the corresponding undamped problems, but the non-trapping condition can be replaced by the assumption that all the classical trajectories go through the region where the damping is effective or escape to infinity. 

If all the classical trajectories go through the damping region, and not only the bounded ones, we can obtain decay estimates for the total energy. We mention for instance \cite{AlouiIbKh15} for the wave equation in an exterior domain with damping at infinity and \cite{BurqJo} for the damped Klein-Gordon equation in $\R^d$.\\

In our setting the Geometric Control Condition is not necessarily satisfied. For instance, if there exists $x_0 \in \R^d$ such that $a(x_0,y) = 0$ for all $y \in \o$ then any trajectory staying in $\singl{x_0} \times \o$ will neither see the damping nor escape to infinity. This means that some high frequency solutions may stay in a bounded region without going through the damping region for a very long time, so we cannot have any uniform decay for the local energy (or, a fortiori, for the global energy). However, if we allow some loss of regularity we may have some energy decay. Such results were given in \cite{lebeau96, lebeaur97} for the damped wave equation on a compact domain and in \cite{burq98} for the undamped wave equation equation in an exterior domain. These papers give the minimal decay without any geometric assumption. There are also settings for which G.C.C. fails to hold even if it is satisfied by ``most of the classical trajectories''. This is the case here, 
since all the 
rays which have a non-zero velocity in the $x$ directions escape to infinity. Moreover, outside some bounded subset all the trajectories meet the damping region. We refer for instance to \cite{schenck11,anantharamanl,leautaudl} for a partially damped wave equation, and to \cite{nonnenmacherz09} for the local energy decay with trapped trajectories (more precisely, for the corresponding high frequency resolvent estimates). We have mentioned \cite{BurqJo} above. In this paper the damped Klein-Gordon without G.C.C. is also considered (see also \cite{Wunsch} for a periodic damping). Here the geometry of our trapped and undamped classical trajectories is close to the setting of \cite{Burq-Hi-07} where the wave equation on partially rectangular domains with damping on both ends is considered (see also \cite{Nishiyama-09}).\\

The geometry of undamped and/or trapped trajectories governs the behavior of the contribution of high frequencies. However, in unbounded domains, the general rate of decay is also limited by the contribution of low frequencies. We know that with a short range damping the rate of decay for the local energy is the same as in the undamped case. We also know that with a stronger damping, the local energy decay can be slower than without damping (even if, of course, the global energy decays faster). More precisely, when the damping is effective at infinity, the contribution of low frequencies tends to behave like a solution of a heat equation. For the wave equation with constant dissipation on $\R^d$
\begin{equation} \label{eq-model}
\partial_t^2 u - \D u + \partial_t u = 0,
\end{equation}
this can be understood as follows: for the contribution $u$ of low frequencies the term $\partial_t^2 u$ becomes small compared to $\partial_t u$ and the solution behaves like a solution of the diffusive equation 
\begin{equation} \label{eq-model-heat}
- \D u + \partial_t u = 0.
\end{equation}
For the energy decay of \eqref{eq-model} we refer to \cite{Matsumura76}. For the comparison with the solution of \eqref{eq-model-heat} we refer to \cite{nishihara03,marcatin03, hosonoo04, narazaki04}. An asymptotic expansion of the solution in a periodic setting is given in \cite{OrivePaZu01}. For results on an exterior domain we refer to \cite{Ikehata02} (with constant damping) and \cite{AlouiIbKh15} (the damping index is constant outside a compact and satisfies G.C.C.). When the absorption index $a$ decays slowly ($a(x) \sim \pppg x^{-\rho}$ with $\rho \in ]0,1]$), we have some global energy decay (see for instance \cite{TodorovaYo09} when $\rho < 1$ and \cite{IkehataToYo13} when $\rho = 1$) and we recover the diffusion phenomenon (see \cite{Wakasugi14} when $\rho < 1$). As already mentioned, we recover for the short range case ($\rho > 1$) the same kind of results as in the undamped case (see \cite{Mochizuki76,boucletr14,royer-dld-energy-space}). Finally, results on an abstract setting can be found in \
cite {Chill-Ha-04, Radu-To-Yo-11, nishiyama,Radu-To-Yo-16}.\\

Here we consider the damped wave equation on a wave guide, which is neither compact nor close to the Euclidean space at infinity in any usual sense. Closely related problems have been considered in \cite{royer-diss-schrodinger-guide} (about the dissipative Schr\"odinger equation) and \cite{royer-diss-wave-guide} (about the wave equation with constant dissipation at the boundary).\\

Before stating our results, we introduce the usual notation for the wave equation in the energy space.
We define $\EEN$ as the Hilbert completion of $C_0^\infty(\bar \O) \times C_0^\infty(\bar \O)$ for the norm 
\[
\nr{(u,v)}_{\EEN}^2 = \nr{\nabla u}_{L^2(\O)}^2 + \nr{v}_{L^2(\O)}^2
\]
($C_0^\infty(\bar \O)$ is the set of restrictions to $\bar \O$ of functions in $C_0^\infty(\R^{d+n})$). Given $\d \in \R$, we denote by $\EEN^\d$ the weighted energy space defined as the Hilbert completion of $C_0^\infty(\bar \O) \times C_0^\infty(\bar \O)$ for the norm 
\[
\nr{(u,v)}_{\EEN^\d}^2 = \nr{\pppg x^\d \nabla u}_{L^2(\O)}^2 + \nr{\pppg x^\d v}_{L^2(\O)}^2,
\]
where $\pppg x$ stands for $\big( 1 + \abs x^2 \big)^{\frac 12}$. We also denote by $\HHN^\d$ the Hilbert completion of $C_0^\infty(\bar \O) \times C_0^\infty(\bar \O)$ for the norm 
\[
\nr{(u,v)}_{\HHN^\d}^2 = \nr{\pppg x^\d u}_{L^2(\O)}^2 + \nr{\pppg x^\d \nabla u}_{L^2(\O)}^2 + \nr{\pppg x^\d v}_{L^2(\O)}^2.
\]
We write $\HHN$ instead of $\HHN^0$.\\

We consider on $\EEN$ the operator $\AcN$ defined by 
\begin{equation} \label{def-Ac}
\AcN =
\begin{pmatrix} 0 & I \\ -\D & -i a \end{pmatrix}
\end{equation}
on the domain
\begin{equation} \label{dom-Ac-N}
\Dom(\AcN) = \singl{(u,v) \in \EEN \st \AcN (u,v) \in \EEN \text{ and } \partial_\n u = 0 \text{ on } \partial \O}.
\end{equation}
Now let $U_0 = (u_0 ,iu_1) \in \Dom(\AcN)$. It is standard that $u$ is a solution of \eqref{wave-neumann} if and only if $U : t \mapsto (u(t),i\partial_t u(t))$ is solution of 
\begin{equation} \label{wave-Ac}
\begin{cases}
\partial_t U(t) + i \AcN U(t) = 0, \quad t \geq 0,\\
U(0) = U_0.
\end{cases}
\end{equation}
Since the operator $\AcN$ is maximal dissipative (see Proposition \ref{prop-AcN-diss} below), we know from the Hille-Yosida theorem that $-i\AcN$ generates a contractions semigroup, so that the problem \eqref{wave-Ac} has a unique solution $U : t \mapsto e^{-it\AcN} U_0 \in C^0 (\R_+,\Dom(\AcN)) \cap C^1(\R_+,\EEN)$.\\

Our purpose is to prove local and global energy decay for the solution of \eqref{wave-neumann}. The main result of this paper is the following:

\begin{theorem}[Energy decay] \label{th-energy-decay}
Let $k \in \N^*$, $s_1,s_2 \in \big[0,\frac d 2]$, $\k > 1$, $s \in [0,\min(d,\rho)[$ with $s \leq 1$, $\d_1 \geq  \k s_1 + s$ and $\d_2 \geq  \k s_2 + s$. Then there exists $C \geq 0$ such that for $t \geq 1$ and $U_0 \in \Dom(\AcN^k)$ with $(\AcN-i)^k \in \HHN^{\d_2}$ we have 
\[
\nr{e^{-it\AcN} U_0}_{\EEN^{-\d_1}} \leq C \left( t^{-\frac 12 (1 + s_1 + s_2 + s)} + \frac {\ln(t)^{k/2+1}} {t^{k/2}} \right) \nr{(\AcN-i)^k U_0}_{\HHN^{\d_2}}.
\]
More precisely, if we write $U_0 = (u_0,iu_1)$ and $e^{-it\AcN} U_0 = (u(t),i\partial_t u(t))$ where $u$ is the solution of \eqref{wave-neumann} then we have
\[
\nr{\nabla u(t)}_{L^{2,-\d_1}(\O)} \leq C \left( t^{-\frac 12 (1 + s_1 + s_2 + s)} + \frac {\ln(t)^{k/2+1}} {t^{k/2}} \right) \nr{(\AcN-i)^k U_0}_{\HHN^{\d_2}}
\]
and 
\[
\nr{\partial_t u(t)}_{L^{2,-\d_1}(\O)} \leq C \left( t^{-\frac 12 (2 + s_1 + s_2 )} + \frac {\ln(t)^{k/2+1}} {t^{k/2}} \right) \nr{(\AcN-i)^k U_0}_{\HHN^{\d_2}}.
\]
\end{theorem}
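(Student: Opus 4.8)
The strategy is the standard one for energy decay for dissipative equations on unbounded domains: pass through a representation of the semigroup by the resolvent, split into low-frequency and high-frequency contributions, and estimate each separately. First I would write, for $U_0 \in \Dom(\AcN^k)$,
\[
e^{-it\AcN} U_0 = \frac{1}{2i\pi} \int_{\R + i0^+} e^{-it\tau} (\AcN - \tau)^{-1} U_0 \, d\tau,
\]
and, using $(\AcN - \tau)^{-1} = \sum_{j=0}^{k-1} (\tau - i)^{-j-1}(\AcN - i)^j + (\tau-i)^{-k}(\AcN-\tau)^{-1}(\AcN - i)^k$ together with $k$ integrations by parts in $\tau$ (picking up the $\ln(t)^{k/2+1}/t^{k/2}$ type factors and reducing to a resolvent applied to $(\AcN - i)^k U_0$). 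So it suffices to prove a weighted estimate on $(\AcN - \tau)^{-1}$ between the spaces $\HHN^{\delta_2}$ and $\EEN^{-\delta_1}$, with integrable behaviour near $\tau = 0$ and at high frequency. I would introduce a cutoff $\chi \in C_0^\infty$ equal to $1$ near $0$ and treat $\chi(\AcN)$ and $1-\chi(\AcN)$ separately.

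For the high-frequency part I would invoke (and presumably the paper establishes beforehand) resolvent estimates of the form $\nr{(\AcN - \tau)^{-1}}_{\EEN \to \EEN} \lesssim 1/|\Im \tau|$ away from $0$, together with a logarithmic high-frequency resolvent estimate in the spirit of Lebeau–Robbiano / Burq (loss of derivatives but no geometric control), which after the $k$-fold integration by parts yields exactly the $\ln(t)^{k/2+1}/t^{k/2}$ term; this is where the domain assumption $U_0 \in \Dom(\AcN^k)$ and the weight $\delta_2$ are consumed. For the low-frequency part the key input is a precise description of the resolvent near $\tau = 0$: on the part of the spectrum transverse to $\R^d$ (the modes with nonzero "$\o$-frequency") the operator is elliptic and contributes an exponentially small term, while the contribution of the constant mode in $y$ reduces to a $d$-dimensional problem. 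There, because $a \to 1$, the low-frequency resolvent behaves like that of $(-\Delta_x - i\tau - \tau^2)^{-1}$ on $\R^d$ — i.e. like the resolvent of the heat generator — and the spatial decay of $a - 1$ at rate $\rho$ allows a perturbative (Born-series / Fredholm) argument to compare it with the free resolvent, giving weighted bounds with gain $\langle x\rangle^{-\delta}$ controlled by $\min(d,\rho)$ and $s \leq 1$. Integrating $e^{-it\tau}$ against this low-frequency resolvent then produces the heat-type rate $t^{-\frac12(1+s_1+s_2+s)}$, and differentiating once in $t$ (equivalently pairing with the second component, which carries an extra $\tau$) improves the power by $\frac12$, giving the stated bound on $\nr{\partial_t u(t)}$.

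Concretely the steps are: (1) reduce the semigroup to a contour integral of the resolvent and perform the $k$ integrations by parts; (2) split low/high frequency with a spectral cutoff; (3) high frequency: combine the dissipative bound $1/|\Im\tau|$ with a logarithmic high-frequency resolvent estimate to get the $\ln(t)^{k/2+1}/t^{k/2}$ term; (4) low frequency: separate the transverse modes (exponentially small) from the constant-in-$y$ mode; (5) for that mode, set up the perturbative comparison of $(\AcN - \tau)^{-1}$ with the free $d$-dimensional heat-type resolvent using the decay hypothesis \eqref{hyp-amort-inf} on $a - 1$, obtaining the weighted low-frequency resolvent estimates; (6) insert these into the contour integral and optimize over the contour/parameter $s$ to extract the heat rate $t^{-\frac12(1+s_1+s_2+s)}$, tracking the extra power of $\tau$ for the $\partial_t u$ estimate; (7) assemble the pieces and convert the $\EEN^{-\delta_1}$ bound into the separate $\nr{\nabla u}$ and $\nr{\partial_t u}$ bounds.

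The main obstacle, and the technical heart of the paper, is step (5): obtaining uniform weighted estimates on the low-frequency resolvent of $\AcN$ on the wave guide, where one must carefully handle the interplay between the $\R^d$-dynamics (governed by the heat-like operator whose free resolvent has a known but delicate low-frequency expansion in weighted spaces, with the threshold behaviour depending on the parity of $d$) and the slowly decaying perturbation $a - 1 \sim \langle x\rangle^{-\rho}$, which is only long-range and so forbids a crude Neumann series — one needs a Fredholm/meromorphic-continuation argument uniform down to $\tau = 0$, and one must check that no resonance or eigenvalue sits at the threshold. Controlling the dependence of all constants on the weights $\delta_1,\delta_2$ and the exponents $s_1,s_2,s$, so that the final bound takes the clean form stated, is the bookkeeping part of that same difficulty.
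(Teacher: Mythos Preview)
Your overall architecture---contour integral for the semigroup, low/high frequency split, heat-like comparison for low frequencies---matches the paper's, but two of your steps have genuine gaps.

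\textbf{The spectral cutoff.} You propose splitting with $\chi(\AcN)$. But $\AcN$ is not self-adjoint, so there is no functional calculus available to define $\chi(\AcN)$ as a bounded projection (the paper flags this explicitly just before Theorem~\ref{th-chaleur}). The paper avoids this by cutting off in the contour variable: after writing $e^{-it\AcN}(\AcN-i)^{-k}W_0$ as an integral over a contour $\Gamma$ (deformed slightly below the real axis for $|\Re z|\ge 1$, using the spectrum-free strip of Theorem~\ref{th-high-freq}), it multiplies the integrand by $\eta_0(\Re z)$ and $1-\eta_0(\Re z)$. This is elementary but essential; your plan does not have a substitute.

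\textbf{The high-frequency mechanism.} You invoke a ``logarithmic high-frequency resolvent estimate in the spirit of Lebeau--Robbiano/Burq'' together with $k$ integrations by parts. That is not the mechanism here. The relevant resolvent bound (Theorem~\ref{th-high-freq}) is \emph{polynomial}, $\nr{(\AcN-\tau)^{-1}}\lesssim |\tau|^2$, with a spectrum-free strip of width $\gamma|\tau|^{-2}$ below the real axis; there is no exponentially-large-resolvent/log-decay phenomenon. Plain integration by parts in $\tau$ does not convert this into $t^{-k/2}$: each derivative of $(\AcN-\tau)^{-1}$ costs two more powers of $|\tau|$, so the integrand does not improve. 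What the paper actually does (following \cite{lebeau96} and \cite{Burq-Hi-07}) is insert a Gaussian average $\frac{1}{\sqrt{2\pi}}\int e^{-(z-\sigma)^2/2}\,d\sigma$, split according to $|\sigma|\lessgtr c\rho(t)$ with $\rho(t)=(t/\ln t)^{1/2}$, and then deform the $z$-contour into the upper or lower half-plane depending on whether the obstruction is the resolvent or the factor $e^{-itz}$. This contour-deformation argument is what produces the precise rate $\ln(t)^{k/2+1}/t^{k/2}$; your sketch does not contain it, and the abstract Borichev--Tomilov route is blocked by the singularity at $0$ and the weighted spaces.

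\textbf{Low frequencies.} Your plan here is closer to the paper's, but the technical route differs. You propose a Fredholm/meromorphic-continuation argument and worry about a threshold resonance. The paper instead uses a rescaling $\Phi_z u(x,y)=|z|^{d/4}u(|z|^{1/2}x,y)$ to reduce to a uniformly elliptic problem at scale $1$, together with Proposition~\ref{prop-dec-sob} (multiplication by $a-1$ acts like $|z|^{s/2}$ in the rescaled Sobolev scale) to control the perturbation directly; no Fredholm alternative or absence-of-resonance argument is needed. The resulting derivative bounds on $\Theta(z)=(\AcN-z)^{-1}-\Rc_{\mathrm{Heat}}(z)$ (Theorem~\ref{th-low-freq}) are then converted to time decay by a stationary-phase-type lemma (Lemma~\ref{lem-res-time}), not by a naive inverse Fourier estimate.
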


In this result we obtain a polynomial rate of decay. We will see in Theorem \ref{th-chaleur} below that this rate of decay is sharp in general.

The term ${\ln(t)^{k/2+1}} / {t^{k/2}}$ is due to the contribution of high frequencies and depends on the regularity of $U_0$. Under G.C.C., it could be replaced by an exponentially decaying term and the estimate would be uniform. On the other hand the damping is effective at infinity, so for the contribution of high frequencies the energy which escapes to infinity is dissipated. Thus it is equivalent to look at the local or global energy decay and the weights do not play any role.

The first term in the brackets describes the decay for the contribution of low frequencies. It depends on the weights (if $\d_j > 0$ then we can choose $s_j > 0$, which improves the decay). If we want to estimate the global energy without assumption of localization for the initial data, we have to take $s_1 = s_2 = s = 0$ in the theorem. This gives the following estimates:

\begin{corollary}[Uniform global energy decay] \label{cor-global-decay}
There exists $c \geq 0$ such that for $t \geq 1$ and $U_0 = (u_0,iu_1) \in \Dom(\AcN^3)$ we have 
\[
\nr{\nabla u(t)}_{L^2(\O)} \leq \frac c {\sqrt t} \nr{(\AcN-i)^3 U_0}_{\HHN} \quad \and \quad \nr{\partial_t u(t)}_{L^2(\O)} \leq \frac c {t} \nr{(\AcN-i)^3 U_0}_{\HHN},
\]
where $u$ is the solution of \eqref{wave-neumann}.
\end{corollary}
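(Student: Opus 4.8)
The plan is to read off Corollary~\ref{cor-global-decay} from Theorem~\ref{th-energy-decay} by specializing to data without any spatial localization. Explicitly, I would apply Theorem~\ref{th-energy-decay} with the parameters
\[
s_1 = s_2 = s = 0, \qquad \k = 2, \qquad \d_1 = \d_2 = 0, \qquad k = 3.
\]
These are admissible: $s_1 = s_2 = 0 \in [0,\frac d2]$; $\k = 2 > 1$; $s = 0$ belongs to $[0,\min(d,\rho)[$, which is nonempty since $d \geq 1$ and $\rho > 0$, and $s \leq 1$; and $\d_j = 0 \geq \k s_j + s = 0$ for $j = 1,2$. With $\d_1 = \d_2 = 0$ all the weights are trivial, so $L^{2,-\d_1}(\O) = L^2(\O)$ and $\HHN^{\d_2} = \HHN$, and the last two estimates of Theorem~\ref{th-energy-decay} become, for $t \geq 1$ and $U_0 = (u_0,iu_1) \in \Dom(\AcN^3)$,
\[
\nr{\nabla u(t)}_{L^2(\O)} \leq C \Big( t^{-\frac 12} + \frac{\ln(t)^{5/2}}{t^{3/2}} \Big) \nr{(\AcN-i)^3 U_0}_{\HHN}, \qquad
\nr{\partial_t u(t)}_{L^2(\O)} \leq C \Big( t^{-1} + \frac{\ln(t)^{5/2}}{t^{3/2}} \Big) \nr{(\AcN-i)^3 U_0}_{\HHN}
\]
(here $\nr{(\AcN-i)^3 U_0}_{\HHN}$ is to be read as $+\infty$ when $(\AcN-i)^3 U_0 \notin \HHN$, in which case there is nothing to prove).

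It then only remains to check that, for $t \geq 1$, the high-frequency term $\ln(t)^{5/2}/t^{3/2}$ is dominated by the low-frequency rate in each bracket, i.e.\ by $t^{-1/2}$ in the first estimate and by $t^{-1}$ in the second. Since $t \mapsto t^{-1/2}\ln(t)^{5/2}$ vanishes at $t = 1$ and tends to $0$ as $t \to +\infty$, it is bounded on $[1,+\infty)$; writing $\ln(t)^{5/2}/t^{3/2} = t^{-1}\big(t^{-1/2}\ln(t)^{5/2}\big)$ we get $\ln(t)^{5/2}/t^{3/2} \leq c' t^{-1}$ for $t \geq 1$, hence also $\ln(t)^{5/2}/t^{3/2} \leq c' t^{-1/2}$. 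Inserting these bounds into the two displayed inequalities absorbs the high-frequency contribution into the polynomial rate and gives
\[
\nr{\nabla u(t)}_{L^2(\O)} \leq \frac{c}{\sqrt t} \, \nr{(\AcN-i)^3 U_0}_{\HHN} \qquad \text{and} \qquad \nr{\partial_t u(t)}_{L^2(\O)} \leq \frac{c}{t} \, \nr{(\AcN-i)^3 U_0}_{\HHN},
\]
which is the assertion.

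There is no genuine obstacle here: the whole analytic content (the diffusive low-frequency decay, the high-frequency resolvent estimates responsible for the logarithmic loss, and the passage from resolvent bounds to time decay) is already contained in Theorem~\ref{th-energy-decay}. The only things requiring a little care are purely organizational, namely checking that the parameter constraints of that theorem are met by the above choice and --- the one slightly non-obvious point --- taking the regularity index $k$ large enough: $k = 2$ would already suffice for the estimate on $\nabla u(t)$, but one needs $k \geq 3$ (so that $k/2 > 1$) for the logarithmic high-frequency term to be subdominant with respect to the $t^{-1}$ decay of $\nr{\partial_t u(t)}_{L^2(\O)}$.
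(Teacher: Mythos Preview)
Your proof is correct and follows exactly the approach the paper indicates: specialize Theorem~\ref{th-energy-decay} with $s_1 = s_2 = s = 0$ (hence $\d_1 = \d_2 = 0$) and $k = 3$, then observe that the high-frequency term $\ln(t)^{5/2}/t^{3/2}$ is dominated by $t^{-1}$ for $t \geq 1$. Your added remarks --- checking admissibility of the parameters, handling the case $(\AcN-i)^3 U_0 \notin \HHN$ by interpreting the right-hand side as $+\infty$, and explaining why $k = 3$ is the minimal choice --- are all appropriate.
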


In this result we chose the regularity assumption to ensure that the decay is limited by the contribution of low frequencies. Again, under G.C.C. we obtain the same estimates without loss of regularity.

In \cite{Matsumura76}, \cite{TodorovaYo09} or \cite{AlouiIbKh15}, where global energy decay is studied, the initial data is localized (compactly supported, or at least in $L^2 \cap L^q$ for some $q \in [1,2[$). Theorem \ref{th-energy-decay} also contains this kind of result if we take $\d_1 = 0$ and $\d_2 > 0$. For instance, for compactly supported initial data we can take $s_2 = \frac d 2$ and we obtain the following estimates (as before we take an initial data regular enough to avoid problems with the contribution of high frequencies).

\begin{corollary}[Global energy decay for localized initial data]
Let $K$ be a compact subset of $\bar \O$. Let $k>\frac d 2 + 2$. Then there exists $c \geq 0$ such that for $t \geq 1$ and $U_0 = (u_0,iu_1) \in \Dom(\AcN^k)$ supported in $K \times K$ we have 
\[
\nr{\nabla u(t)}_{L^2(\O)} \leq \frac c {t^{\frac d 4 + \frac 1 2}} \nr{(\AcN-i)^k U_0}_{\HHN} \quad \and \quad \nr{\partial_t u(t)}_{L^2(\O)} \leq \frac c {t^{\frac d 4 + 1}} \nr{(\AcN-i)^k U_0}_{\HHN},
\]
where $u$ is the solution of \eqref{wave-neumann}.
\end{corollary}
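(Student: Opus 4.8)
The plan is to obtain this corollary as a direct consequence of Theorem~\ref{th-energy-decay}, for a convenient choice of its free parameters.

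Since we want the plain norms $L^2(\O)$ on the left-hand side, I would apply Theorem~\ref{th-energy-decay} with $\delta_1 = 0$ (recall that $L^{2,-\delta_1}(\O) = L^2(\O)$ in that case); this forces $s_1 = 0$ and $s = 0$, and $s = 0$ is admissible because $d \geq 1$ and $\rho > 0$ imply $\min(d,\rho) > 0$. To make the low-frequency decay as favourable as possible I would take $s_2 = \frac d2 \in [0,\frac d2]$, fix $\kappa = 2 > 1$, and choose $\delta_2 = d$, so that the requirement $\delta_2 \geq \kappa s_2 + s$ is met. With these choices Theorem~\ref{th-energy-decay} gives, for $t \geq 1$,
\[
\nr{\nabla u(t)}_{L^2(\O)} \leq C \Big( t^{-\frac d4 - \frac 12} + \frac{\ln(t)^{k/2+1}}{t^{k/2}} \Big) \nr{(\AcN - i)^k U_0}_{\HHN^{d}},
\]
together with the analogous estimate for $\partial_t u(t)$ in which $t^{-\frac d4 - \frac 12}$ is replaced by $t^{-\frac d4 - 1}$.

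Two elementary points then remain. First, $\AcN$ acts only through differentiation and through multiplication by the bounded function $a$, so $(\AcN - i)^k U_0$ is still supported in $K \times K$; hence $\pppg x^{d}$ is bounded on its support by a constant $C_K$ depending only on $K$, and $\nr{(\AcN - i)^k U_0}_{\HHN^{d}} \leq C_K \nr{(\AcN - i)^k U_0}_{\HHN}$. Second, the hypothesis $k > \frac d2 + 2$ says precisely that $\frac k2 > \frac d4 + 1$, so for $t \geq 1$ one has $\ln(t)^{k/2+1}/t^{k/2} \leq C\, t^{-\frac d4 - 1} \leq C\, t^{-\frac d4 - \frac 12}$, since any positive power of $t$ absorbs the logarithmic factor. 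Putting these two remarks together with the displayed inequalities yields the announced estimates. I do not expect any genuine difficulty here: all the analysis is already packaged in Theorem~\ref{th-energy-decay}; the compact support only serves to convert the weighted space $\HHN^{\delta_2}$ into $\HHN$, and the regularity assumption $k > \frac d2 + 2$ only serves to render the high-frequency remainder negligible against the low-frequency rate, the binding constraint being the faster rate $t^{-\frac d4 - 1}$ occurring in the estimate for $\partial_t u$.
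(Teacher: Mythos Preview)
Your argument is correct and follows exactly the approach the paper indicates just before stating the corollary: apply Theorem~\ref{th-energy-decay} with $\delta_1=s_1=s=0$ and $s_2=\frac d2$, use the compact support of $(\AcN-i)^kU_0$ to replace the weighted norm $\HHN^{\delta_2}$ by $\HHN$, and take $k>\frac d2+2$ so that the high-frequency remainder is absorbed by the low-frequency rate $t^{-\frac d4-1}$.
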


Notice that for compactly supported initial data we do not have better estimates than for $(\AcN-i)^k U_0 \in \HHN^{-\d_2}$ with $\d_2 > \frac d 2$.

Finally, the dependance in $s_1$ in the estimates of Theorem \ref{th-energy-decay} emphasizes the fact the local energy decays faster than the global energy. We said that these two quantities should decay at the same speed for the contribution of high frequencies since the damping is effective at infinity. However this does not apply to the contribution of low frequencies, since then the damping term $a \partial_t u$ is small.

\begin{corollary}[Local energy decay]
Let $K$ be a compact subset of $\bar \O$. Let $k> d + 2$. Let $s$ be as in Theorem \ref{th-energy-decay} Then there exists $c \geq 0$ such that for $t \geq 1$ and $U_0 = (u_0,iu_1) \in \Dom(\AcN^k)$ supported in $K \times K$ we have 
\[
\nr{\nabla u(t)}_{L^2(K)} \leq \frac c {t^{\frac {d+1+s} 2}} \nr{(\AcN-i)^k U_0}_{\HHN} \quad \and \quad \nr{\partial_t u(t)}_{L^2(K)} \leq \frac c {t^{\frac d 2 + 1}} \nr{(\AcN-i)^k U_0}_{\HHN},
\]
where $u$ is the solution of \eqref{wave-neumann}.
\end{corollary}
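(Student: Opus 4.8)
The plan is to deduce this corollary directly from Theorem \ref{th-energy-decay} by specialising the parameters, together with two elementary comparisons of weighted and unweighted norms on the compact set $K$. Concretely, in Theorem \ref{th-energy-decay} I would take $s_1 = s_2 = \frac d2$, fix an arbitrary $\k > 1$, keep the given $s \in [0,\min(d,\rho)[$ with $s \le 1$, and set $\d_1 = \d_2 = \k \frac d2 + s$, which satisfies the constraints $\d_1 \ge \k s_1 + s$ and $\d_2 \ge \k s_2 + s$. With this choice the low-frequency exponents of Theorem \ref{th-energy-decay} become $\frac 12(1 + s_1 + s_2 + s) = \frac{d+1+s}{2}$ for the gradient and $\frac 12(2 + s_1 + s_2) = \frac d2 + 1$ for the time derivative, exactly the rates claimed here.

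Next I would pass from the weighted norms of Theorem \ref{th-energy-decay} to the unweighted ones in the statement. Since $\AcN - i$ is a matrix of differential operators it is local, so $(\AcN-i)^k U_0$ is supported in $K\times K$ as soon as $U_0$ is; being compactly supported with an $L^2$ gradient it lies in $\HHN^{\d_2}$, and on its support the weight $\pppg x^{\d_2}$ is bounded above by a constant depending only on $K$, whence $\nr{(\AcN-i)^k U_0}_{\HHN^{\d_2}} \le C_K \nr{(\AcN-i)^k U_0}_{\HHN}$. On the left-hand side, since $K$ is compact the weight $\pppg x^{-\d_1}$ is bounded below by a positive constant on $K$, so $\nr{\nabla u(t)}_{L^2(K)} \le C_K \nr{\nabla u(t)}_{L^{2,-\d_1}(\O)}$, and similarly for $\partial_t u(t)$. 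Substituting these into the two displayed estimates of Theorem \ref{th-energy-decay} produces the announced bounds, except that the right-hand sides still carry the high-frequency term $\ln(t)^{k/2+1}/t^{k/2}$.

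The last step is to absorb this term into the polynomial one. For $t \ge 1$ one has $\ln(t)^{k/2+1}/t^{k/2} \le C\, t^{-r}$ whenever $k/2 > r$, and the hypothesis $k > d+2$ gives $k/2 > \frac d2 + 1 \ge \frac{d+1+s}{2}$ (using $s \le 1$), which handles both exponents $r = \frac{d+1+s}{2}$ and $r = \frac d2 + 1$. I expect the only delicate point to be exactly this bookkeeping: the logarithmic loss from the contribution of high frequencies is what forces the strict threshold $k > d+2$ rather than $k \ge d+2$, and the restriction $s \le 1$ is what makes the high-frequency decay beat the gradient rate $\frac{d+1+s}{2}$. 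Beyond this, the argument is a routine specialisation of Theorem \ref{th-energy-decay} with no genuine obstacle.
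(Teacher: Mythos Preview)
Your proposal is correct and matches the paper's implicit argument: the corollary is stated without proof as a direct specialisation of Theorem \ref{th-energy-decay} with $s_1 = s_2 = \frac d2$, using compact support to remove the weights and choosing $k > d+2$ so that the high-frequency term $\ln(t)^{k/2+1}/t^{k/2}$ is dominated by the low-frequency rate. Your bookkeeping on the exponents and the observation that $\AcN - i$ is local (so $(\AcN-i)^k U_0$ remains supported in $K\times K$) are exactly what is needed.
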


If we can take $s = 1$ then we recover the same rate of local energy decay as in \cite{royer-diss-wave-guide}. This parameter $s$ will be discussed after Theorem \ref{th-chaleur} and in Remarks \ref{rem-tilde-s} and \ref{rem-s} below.\\

In all these statements, we had to deal simultaneously with the contributions of low and high frequencies, even if they have very different behaviors. The difficulty is that the operator $\AcN$ is not self-adjoint, so there is no obvious way to localize spectrally on low or high frequencies. One possibility is to localize with respect to the transverse (Neumann) Laplacian. This will be done in Theorem \ref{th-energy-decay-bis} under the additionnal assumption that the absoption index $a(x,y)$ only depends on $x$.\\

As already said, the contribution of low frequencies for the wave equation with damping at infinity is expected to behave like the solution of a corresponding heat equation. The purpose of the next result is to emphasize this fact. Before giving the statement, we remark that the low frequency part in the estimate of Theorem \ref{th-energy-decay} is exactly what we would obtain for the solution of a heat equation on $\R^d$. On the wave guide $\O$, we show that our solution behaves like a function which does not depend on $y \in \o$ and is indeed the solution of a heat equation with respect to $x \in \R^d$.\\

For $u \in L^2(\O)$ we set 
\begin{equation} \label{def-P0}
P_0 u : x \mapsto \frac 1 {\abs \o} \int_{y \in \o} u(x,y)\, dy.
\end{equation}
Then $P_0 u$ is defined for almost all $x \in \R^d$ and belongs to $L^2(\R^d)$. The function $P_0 u$ can also be seen as a function on $\O$ which does not depend on the transverse variable $y$, so that $P_0$ is a projection of $L^2(\O)$. We also set $P_0^\bot = 1- P_0$.\\

Now let $v$ be the solution on $\R_+ \times \R^d$ for the heat equation
\begin{equation} \label{heat}
\begin{cases}
\partial_t v - \D v = 0, & \text{on } \R_+ \times \R^d,\\
v(0) = P_0 (a u_0 + u_1) ,& \text{on } \R^d.
\end{cases}
\end{equation}
Again, this solution can be seen as a function on $\R_+ \times \O$ which does not depend on $y \in \o$.

\begin{theorem}[Comparison between the damped wave equation and the heat equation] \label{th-chaleur}
Let $s_1,s_2 \in \big[0,\frac d 2]$, $\k > 1$, $s \in [0,1]$, $\d_1 \geq  \k s_1 + s$ and $\d_2 \geq  \k s_2 + s$. Then there exists $C \geq 0$ such that for $t \geq 1$ and $u_0,u_1 \in L^{2,\d_2}(\O)$ we have 
\begin{equation} \label{estim-th-v}
\nr{\nabla v(t)}_{L^{2,-\d_1}(\O)} \leq C t^{-\frac 12 (1 + s_1 + s_2 + s)} \nr{au_0 + u_1}_{L^{2,\d_2}(\O)}
\end{equation}
and 
\begin{equation} \label{estim-th-dv}
\nr{\partial_t v(t)}_{L^{2,-\d_1}(\O)} \leq C t^{-\frac 12 (2 + s_1 + s_2)} \nr{au_0 + u_1}_{L^{2,\d_2}(\O)},
\end{equation}
where $v$ is the solution of \eqref{heat}. Now let $\tilde s \in [0, \min(2,d,\rho)[$. Then for $k \in \N^*$ there exists $C \geq 0$ such that for $t \geq 1$ and $U_0 = (u_0,iu_1) \in \Dom(\AcN^k)$ with $(\AcN-i)^k U_0 \in \HHN^{\d_2}$ we have 
\begin{equation} \label{estim-th-uv}
\nr{\nabla u(t) - \nabla v(t)}_{L^{2,-\k s_1}} \leq C t^{-\frac 12 (1 + s_1 + s_2 + \tilde s)} \nr{(\AcN-i)^k U_0}_{\HHN^{\k s_2}}
\end{equation}
and 
\begin{equation} \label{estim-th-duv}
\nr{\partial_t u(t) - \partial_t v(t)}_{L^{2,- \k s_1}} \leq C t^{-\frac 12 (2 + s_1 + s_2 + \tilde s)} \nr{(\AcN-i)^k U_0}_{\HHN^{\k s_2}},
\end{equation}
where $u$ is the solution of \eqref{wave-neumann}.
\end{theorem}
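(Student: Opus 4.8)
The plan is to reduce everything to a resolvent analysis on the Fourier side in $x$, separating the contribution of low and high frequencies. First I would establish the estimates \eqref{estim-th-v}--\eqref{estim-th-dv} for the heat solution $v$ directly: since $v$ solves the free heat equation on $\R^d$ with initial datum $P_0(au_0+u_1)$ and does not depend on $y$, the weighted $L^2$-in-$x$ decay of $\nabla e^{t\Delta}f$ and $\partial_t e^{t\Delta} f = \Delta e^{t\Delta} f$ is classical: one writes the heat kernel, uses that multiplication by $\pppg x^{-\delta_1}$ gains a factor $t^{-\min(\delta_1,d/2)/2}$ (Hardy--Littlewood--Sobolev / direct kernel estimates) and similarly on the source side $\pppg x^{\delta_2}$ costs $t^{+\min(\delta_2,d/2)/2}$, and $\nabla$ (resp. $\Delta$) brings an extra $t^{-1/2}$ (resp. $t^{-1}$). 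With $\delta_j \geq \kappa s_j + s \geq s_j$ this yields exactly the stated powers $t^{-\frac12(1+s_1+s_2+s)}$ and $t^{-\frac12(2+s_1+s_2)}$, after absorbing $\|P_0(au_0+u_1)\|_{L^{2,\delta_2}(\R^d)} \lesssim \|au_0+u_1\|_{L^{2,\delta_2}(\O)}$ (Cauchy--Schwarz in $y$, using $\omega$ bounded). The parameter $s\le 1$ reflects that $\nabla$ is only first order, so only one extra half-power of decay beyond the $s_j$-gains can be squeezed through the $\nabla e^{t\Delta}$ bound.

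The substantive part is the comparison estimates \eqref{estim-th-uv}--\eqref{estim-th-duv}. The strategy is to write the wave propagator via a contour integral / functional calculus, $e^{-it\AcN}U_0 = \frac{1}{2i\pi}\int e^{-it\tau} (\AcN-\tau)^{-1} U_0\, d\tau$, and split into $|\tau|$ small (low frequencies) and $|\tau|$ bounded away from zero (high frequencies), using the regularization factor $(\AcN-i)^{-k}$ to control the high-frequency contour as in Theorem \ref{th-energy-decay}. The high-frequency part of $u$ already decays like $\ln(t)^{k/2+1}t^{-k/2}$, which for the purposes of the difference $u-v$ is negligible (faster than any fixed polynomial once $k$ is large, but in fact we only need it to be $\le C t^{-\frac12(2+s_1+s_2+\tilde s)}$ and $\tilde s < \min(2,d,\rho)$ leaves room; actually the statement allows any $k\in\N^*$, so one must be a little careful and note that the low-frequency analysis below produces the stated rate while the high-frequency tail is handled by choosing the contour radius $t$-dependently, exactly as in the proof of Theorem \ref{th-energy-decay}). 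For the low-frequency part, I would compare the resolvent $(\AcN-\tau)^{-1}$ acting on $U_0$ with the resolvent of the model heat operator. Concretely, conjugating by the partial Fourier transform in $x$ and using the transverse Neumann Laplacian, the relevant scalar symbol is $(-\Delta - \tau^2 + i a \tau)^{-1}$ restricted to the $P_0$ sector, and on the complementary sector $P_0^\bot$ the transverse Laplacian is bounded below by a positive constant, so that sector contributes only an exponentially (or at least much faster) decaying term; this is where the hypothesis that $a$ can be replaced by $1$ at infinity and the precise decay \eqref{hyp-amort-inf} with $\rho$ enters, and it explains the constraint $\tilde s < \min(2,d,\rho)$ — the gain $\tilde s$ beyond the heat rate is limited both by the dimension (as for the heat kernel gains), by $\rho$ (the rate at which $a\to 1$, controlling the error in replacing the true damped operator by the constant-coefficient model), and by $2$ (the extra smoothing available because $\partial_t^2 u - \Delta u = -a\partial_t u$ lets us trade two powers). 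I expect this matching of resolvents near $\tau=0$ — quantifying $\|(\AcN-\tau)^{-1}U_0 - (\text{heat resolvent applied to } P_0(au_0+u_1))\|$ in the weighted spaces, with the right power of $|\tau|$ and the right loss in the weights governed by $\kappa$ — to be the main obstacle, and it will presumably rely on low-frequency resolvent expansions proved earlier in the paper (the same ones underlying Theorem \ref{th-energy-decay}).

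The order of the argument is therefore: (1) prove \eqref{estim-th-v}--\eqref{estim-th-dv} by elementary heat-kernel estimates with polynomial weights, reducing from $\O$ to $\R^d$ via $P_0$; (2) set up the contour-integral representation of $e^{-it\AcN}U_0$ and of the corresponding expression for $v$ (note $v(t) = e^{t\Delta}P_0(au_0+u_1)$ also has such a representation, or more simply a Laplace/Fourier representation in $t$); (3) split into high and low frequencies, dispatching the high-frequency piece exactly as in Theorem \ref{th-energy-decay} (choosing the truncation radius $\sim 1/t$ and exploiting $(\AcN-i)^{-k}$) so that it is absorbed into the claimed bound; (4) on low frequencies, insert $P_0 + P_0^\bot$, show the $P_0^\bot$ part is faster-decaying using the spectral gap of the transverse Laplacian, and for the $P_0$ part compare with the scalar heat resolvent, the error being $O(|\tau|^{\tilde s/\kappa'}\cdot\text{weights})$ by the low-frequency resolvent expansion; (5) integrate the resulting $\tau$-estimates against $e^{-it\tau}$ over $|\tau|\lesssim 1/t$, tracking that each extra power of $|\tau|$ integrated gives an extra $t^{-1}$, each weight gain $\pppg x^{-\kappa s_j}$ gives $t^{-s_j/2}$ after optimizing the Fourier localization, $\nabla$ gives $t^{-1/2}$, $\partial_t$ gives $t^{-1}$, to arrive at $t^{-\frac12(1+s_1+s_2+\tilde s)}$ and $t^{-\frac12(2+s_1+s_2+\tilde s)}$ respectively. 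Finally I would remark (as the paper does in Remarks \ref{rem-tilde-s} and \ref{rem-s}) on why $\tilde s$ and $s$ differ — $\tilde s$ can be up to $2$ because the difference $u-v$ already removes the leading-order heat behavior, whereas $s$ in Theorem \ref{th-energy-decay} is capped at $1$ by the presence of $\nabla u$ — and note that all Neumann boundary conditions are respected throughout since $P_0$ and $P_0^\bot$ commute with the transverse Neumann Laplacian.
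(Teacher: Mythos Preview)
Your plan coincides with the paper's proof in all essential respects. The heat estimates \eqref{estim-th-v}--\eqref{estim-th-dv} are established by direct weighted heat-kernel computations (Proposition~\ref{prop-heat-decay}), and the comparison \eqref{estim-th-uv}--\eqref{estim-th-duv} is obtained via the contour-integral representation of $e^{-it\AcN}(\AcN-i)^{-k}$, a high/low frequency split, the Lebeau--Burq--Hitrik Gaussian-insertion and contour-deformation argument for high frequencies, and the low-frequency resolvent comparison of Theorem~\ref{th-low-freq} (which already packages the $P_0/P_0^\bot$ decomposition and the $a\to 1$ error you describe via the operator $\Th(z)$).

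Two small corrections to your outline. First, the low/high cutoff is taken at a \emph{fixed} scale (around $|\tau|\sim 2$), not at a $t$-dependent radius $\sim 1/t$; the conversion of the low-frequency resolvent bounds of Theorem~\ref{th-low-freq} (power-type singularities $|\tau|^{\nu_0-1-m}$ of the derivatives $\Th^{(m)}$ at $\tau=0$) into polynomial-in-$t$ decay is done by an integration-by-parts lemma (Lemma~\ref{lem-res-time}), not by truncating the $\tau$-integral to $|\tau|\lesssim 1/t$. Second, the paper in fact proves the difference estimates with the high-frequency remainder $\ln(t)^{k/2+1}/t^{k/2}$ still present (see \eqref{estim-UV1}--\eqref{estim-UV2}); your remark that one must be careful about whether this tail is dominated by $t^{-\frac12(1+s_1+s_2+\tilde s)}$ for arbitrary $k\in\N^*$ is well taken, and the paper does not dispose of it by a $t$-dependent truncation either.
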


Theorem \ref{th-energy-decay} can be seen as a consequence of Theorem \ref{th-chaleur}. More precisely, if we can take $\tilde s$ greater than $s$ (this is for instance the case if we are interested in the global energy decay) then the energy of $u-v$ decays faster than that of $v$. This implies that $u$ behaves like $v$ at the first order for large times.

We notice that the restriction $s \leq 1$ in Theorem \ref{th-energy-decay} is due to the behavior of the solution of the heat equation (see Remark \ref{rem-s}), while the assumption $s < \min(d,\rho)$ comes from the analysis of the rest. We will see that in the case $a(x,y) \equiv 1$ we can take $s = 1$ even if $d \leq 2$. See Remark \ref{rem-tilde-s}.\\

Theorems \ref{th-energy-decay} and \ref{th-chaleur} are given for wave guides, which are the topic of this paper. However, for some aspects these results are better than what is known in the Euclidean space. First, we give optimal decay in any weighted space (for the initial conditions and for the energy itself), which is more precise than in the previous papers. Moreover, we allow a slow convergence of the absorption index to a constant. For these reasons it is important to notice that our analysis will also give these improvements in the Euclidean space. For $u_0 \in H^1(\R^d)$ and $u_1 \in L^2(\R^d)$ we consider the problem

\begin{equation} \label{wave-eucl}
\begin{cases}
\partial_t^2 u  -\D u + a \partial_t u = 0,  & \text{on  }  \R_+ \times \R^d, \\
\restr{(u , \partial_t u )}{t = 0} = (u_0, u_1), &  \text{on } \R^d,
\end{cases}
\end{equation}
where absorption index $a$ satisfies the same kind of estimate as on the wave guide:
\begin{equation} \label{hyp-amort-inf-Rd}
\forall x \in \R^d, \quad \abs{\partial_x^\b \big(a(x)-1\big)} \leq C_\b \pppg x^{-\rho-\abs \b}.
\end{equation}
for $\d \in \R$ we define $\HHRd^\d$ as $\HHN^\d$, except that the norms are in $L^2(\R^d)$ instead of $L^2(\O)$.\\

In this setting we obtain a result analogous to Theorem \ref{th-energy-decay}, except that we do not have any problem with high frequencies:

\begin{theorem} [Energy decay for the damped wave equation in the Euclidean space] \label{th-Rd}
Let $s_1,s_2 \in \big[0,\frac d 2]$, $\k > 1$, $s \in [0,\min(d,\rho)[$ with $s \leq 1$, $\d_1 \geq  \k s_1 + s$ and $\d_2 \geq  \k s_2 + s$. Then there exists $C \geq 0$ such that for $t \geq 1$ and $U_0 = (u_0,iu_1) \in \HHRd^{\d_2}$ we have 
\[
\nr{\nabla u(t)}_{L^{2,-\d_1}(\R^d)} \leq C t^{-\frac 12 (1 + s_1 + s_2 + s)} \nr{U_0}_{\HHRd^{\d_2}}
\]
and 
\[
\nr{\partial_t u(t)}_{L^{2,-\d_1}(\R^d)} \leq C t^{-\frac 12 (2 + s_1 + s_2 )} \nr{U_0}_{\HHRd^{\d_2}},
\]
where $u(t)$ is the solution of \eqref{wave-eucl}.
\end{theorem}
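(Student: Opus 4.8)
\textbf{Proof strategy for Theorem \ref{th-Rd}.}

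The plan is to reduce everything to resolvent estimates for the generator $\Ac_{\R^d}$ (the Euclidean analogue of $\AcN$, with $\D = \D_x$ on all of $\R^d$) via a representation of the semigroup as a contour integral, and then split the frequency axis into a low-frequency and a high-frequency regime, handled by completely different techniques. Write $U(t) = e^{-it\Ac_{\R^d}} U_0$ and, using the spectral/resolvent calculus for the maximal dissipative operator $\Ac_{\R^d}$ together with a smooth dyadic partition of unity $1 = \chi_{\mathrm{low}}(\tau) + \chi_{\mathrm{high}}(\tau)$ on the real line, represent $\chi_{\mathrm{high}}(\Ac_{\R^d}) U(t)$ and $\chi_{\mathrm{low}}(\Ac_{\R^d}) U(t)$ by integrals of the form $\frac{1}{2i\pi}\int e^{-it\tau} (\Ac_{\R^d} - \tau)^{-1} U_0 \, d\tau$ over suitable contours in the lower half plane (the resolvent being holomorphic there since $-i\Ac_{\R^d}$ generates a contraction semigroup). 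The high-frequency piece is the easy one here: in the Euclidean space there is no trapping (the classical flow for $-\D_x$ is the free flow, which is non-trapping), so one has the standard high-frequency resolvent estimate $\nr{(\Ac_{\R^d} - \tau)^{-1}}_{\Lc(\Ec^{\d}, \Ec^{-\d})} \lesssim 1$ uniformly for $\abs{\Re \tau}$ large and $\Im \tau$ in a strip, together with the dissipative bound away from the real axis; integrating by parts in $\tau$ as many times as the regularity of $U_0$ allows produces decay $O(t^{-N})$ for any $N$, which is why the logarithmic loss term of Theorem \ref{th-energy-decay} disappears in this setting. The bulk of the work, and the main obstacle, is the low-frequency regime.

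For low frequencies I would conjugate the problem to a Schrödinger-type resolvent. Near $\tau = 0$ the resolvent $(\Ac_{\R^d} - \tau)^{-1}$ can be expressed, via the usual $2\times 2$ block computation, in terms of $(-\D + \tau a \cdot - \tau^2)^{-1}$, i.e. of the resolvent of an operator close to $-\D + \tau$ (using $a \to 1$ at infinity). The key point, already visible in the model \eqref{eq-model}--\eqref{eq-model-heat}, is that for $\tau$ small and in the lower half plane $-\D + \tau a(\cdot) - \tau^2$ behaves like $-\D + \tau$, whose resolvent at $\tau = i\lambda$, $\lambda > 0$, is exactly $(-\D + \lambda)^{-1}$, the resolvent governing the heat semigroup $e^{\lambda^{-1}\text{-time}}$. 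Concretely, I would parametrise the low-frequency contour by $\tau = \mu e^{i\theta}$ (or a half-circle of radius $\sim 1/t$ together with two horizontal segments) and prove weighted estimates
\[
\nr{\pppg{x}^{-\d_1} (-\D + \tau a - \tau^2)^{-1} \pppg{x}^{-\d_2} f}_{L^2} \leq C \abs{\tau}^{-1 + \frac{s_1 + s_2 + s}{2}} \nr{f}_{L^2}
\]
for $\tau$ in a neighbourhood of $0$ in $\overline{\C_-}$, with the gain $\abs{\tau}^{(s_1+s_2)/2}$ coming from the weights (this is the standard low-frequency smoothing: $\pppg{x}^{-s_j}(-\D+\tau)^{-1/2}$ is bounded with a $\abs{\tau}^{-?}$ type gain, via the explicit kernel of the Bessel-potential / Yukawa resolvent in dimension $d$, restricted to $s_j \leq d/2$), and the extra $\abs{\tau}^{s/2}$, $s < \min(d,\rho)$, $s \le 1$, encoding the error between $a$ and $1$ — this is where the hypothesis \eqref{hyp-amort-inf-Rd} on the rate $\rho$ enters, through an expansion $(-\D + \tau a - \tau^2)^{-1} = (-\D+\tau)^{-1} + (-\D+\tau)^{-1}\big(\tau(1-a) + \tau^2\big)(-\D+\tau)^{-1} + \cdots$ and estimating the remainder with the weight $\pppg{x}^{-\rho}$ absorbed into the resolvent. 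One must also control derivatives in $\tau$ of this resolvent (with one extra power of $\abs{\tau}^{-1}$ lost per derivative, as usual) in order to integrate by parts on the horizontal segments of the contour.

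Once these weighted low-frequency resolvent bounds and their $\tau$-derivative analogues are in hand, the decay follows by a now-classical contour-deformation argument: deform the integration contour for $\chi_{\mathrm{low}}(\Ac_{\R^d})U(t)$ to the boundary of a small sector/half-disc of radius $1/t$ in the lower half plane, estimate the circular part directly by $\abs{\tau}^{-1+(s_1+s_2+s)/2} \cdot \abs{\tau} \sim t^{-(s_1+s_2+s)/2}$ after integrating $e^{-it\tau}$ over an arc of length $\sim 1/t$, and on the two horizontal rays integrate by parts $k$ times in $\tau$ against $e^{-it\tau}/(it)$, using the $\tau$-derivative bounds and the exponential decay $e^{-t \abs{\Im\tau}}$ to make the tail converge; summing over the dyadic low-frequency pieces gives the stated $t^{-(1+s_1+s_2+s)/2}$ for $\nabla u$ and the extra half-power for $\partial_t u$ (which corresponds to the second block of the resolvent and carries one more factor of $\tau$). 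I expect the genuinely delicate point to be the uniformity of the weighted resolvent expansion as $\tau \to 0$ in the closed lower half plane — in particular handling the case $s$ close to its threshold $\min(d,\rho)$ and, when $d=1$ or $d=2$, the borderline behaviour of the low-frequency resolvent of $-\D$ on $\R^d$, which is why the constraint $s \le 1$ and the role of the weights (restricting $s_j \le d/2$) cannot be removed; the rest is bookkeeping of the contour integral.
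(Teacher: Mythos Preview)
Your overall strategy---contour representation of $e^{-it\Ac_{\R^d}}$, split into low and high frequencies, uniform high-frequency resolvent bound on $\R^d$ because the flow is non-trapping, and a low-frequency expansion around the heat resolvent---is correct and is exactly what the paper does (Section~4.1 refers back to the low-frequency analysis of Theorem~\ref{th-low-freq} and to Proposition~\ref{prop-estim-res-Rd} for high frequencies).

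Two points where your write-up diverges from the paper and where you should be more careful. First, for high frequencies you say you obtain $O(t^{-N})$ by ``integrating by parts as many times as the regularity of $U_0$ allows''. But in Theorem~\ref{th-Rd} one only assumes $U_0\in\HHRd^{\d_2}$, i.e.\ no extra regularity at all; integration by parts in $\tau$ produces higher powers of the resolvent, not derivatives of $U_0$, and by itself does not make the $\tau$-integral converge. The actual mechanism (which you allude to with the word ``strip'') is that the uniform bound $\nr{(\Ac_{\R^d}-\tau)^{-1}}\lesssim 1$ persists in a strip $\{\Im\tau>-\gamma\}$ below the real axis, so one can deform the contour there and pick up exponential decay of $e^{-it\tau}$; this is why the $\ln(t)^{k/2+1}/t^{k/2}$ term of Theorem~\ref{th-energy-decay} disappears \emph{without} any regularity hypothesis. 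Also note the sign convention: $\Ac_{\R^d}$ is maximal dissipative, so the resolvent is a priori holomorphic in the \emph{upper} half-plane $\C_+$, and one deforms downward across the real axis once the uniform bound is available.

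Second, for low frequencies your proposed route (explicit Bessel/Yukawa kernel bounds for $\pppg{x}^{-s_j}(-\D+\tau)^{-1/2}$, Neumann expansion in $\tau(1-a)+\tau^2$, half-disc contour of radius $1/t$) is a legitimate alternative, but the paper does it differently: it writes the difference $\Th(z)=(\Ac_{\R^d}-z)^{-1}-\RcC$ with the heat-resolvent matrix \eqref{def-RcC}, estimates $\Th$ and its $z$-derivatives via a \emph{scaling} argument (the dilations $\Phi_z$ and Proposition~\ref{prop-dec-sob}) rather than explicit kernels, and then converts these to time decay by the oscillatory-integral Lemma~\ref{lem-res-time} rather than a half-disc deformation. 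Your approach should give the same exponents, but the scaling argument is what cleanly produces the precise thresholds $s_j\le d/2$ and $s<\min(d,\rho)$, $s\le 1$; if you go the explicit-kernel route you will have to redo that bookkeeping by hand.
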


All these results will be proved from a spectral point of view. In Section \ref{sec-resolvent} we prove all the required resolvent estimates, in Section \ref{sec-energy-decay} we deduce the local and global energy decay for \eqref{wave-neumann} and, finally, in Section \ref{sec-dirichlet} we discuss some closely related problems: the above mentioned problem on $\R^d$, the problem similar to \eqref{wave-neumann} with Dirichlet boundary condition, and finally the damped Klein-Gordon equation in all these settings.\\

\section{Resolvent estimates} \label{sec-resolvent}

The proofs of Theorems \ref{th-energy-decay} and \ref{th-chaleur} rely on a spectral analysis (and in particular some resolvent estimates) for the operator $\AcN$ on $\EEN$ and for the corresponding Schr\"odinger operator on $L^2(\O)$.

\subsection{General properties}

Because of the damping, the operator $\AcN$ is not selfadjoint. However, since the absorption index $a$ has a sign, it is at least dissipative.

We recall that an operator $T$ on some Hilbert space $\Kc$ with domain $\Dom(T)$ is said to be dissipative if for all $\f \in \Dom(T)$ we have 
\[
\Im \innp{T\f} \f_\Kc \leq 0.
\]
Moreover $T$ is said to be maximal dissipative if it has no other dissipative extension on $\Kc$ than itself. We know that the dissipative operator $T$ is maximal dissipative if and only if $(T-\z)$ is boundedly invertible for some (therefore any) $\z \in \C_+$, where
\[
\C_+ : = \singl{\z \in \C \st \Im(\z) > 0}.
\]
Finally, the operator $T$ is said to be (maximal) accretive if $-iT$ is (maximal) dissipative. If $T$ is both dissipative and accretive, then it is maximal dissipative if and only if it is maximal accretive.\\

As usual for the damped wave equation, the resolvent $(\AcN-z)\inv$ on $\EEN$ will be expressed in terms of the resolvent of $-\DN- iza$, where $\DN$ is the Neumann realization of the Laplace operator on $\O$.

\begin{proposition}
Let $z \in \C_+$. Then $z^2$ belongs to the resolvent set of the operator $-\DN- iz a$.
\end{proposition}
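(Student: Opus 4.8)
The plan is to realise $-\DN - iza - z^2$ through a sesquilinear form on $\HuO$ and to apply the Lax--Milgram theorem. Set $\b := \Im z > 0$. Since $a$ is bounded, multiplication by $iza$ is a bounded operator on $L^2(\O)$, so $-\DN - iza$ is closed with domain $\Dom(\DN)$, and it suffices to prove that $-\DN - iza - z^2 \colon \Dom(\DN) \to L^2(\O)$ is a bijection with bounded inverse. On $\HuO$, which is the form domain of the Neumann Laplacian $\DN$, I introduce the continuous sesquilinear form
\[
B_z(u,v) = \innp{\nabla u}{\nabla v}_{L^2(\O)} - iz \innp{a u}{v}_{L^2(\O)} - z^2 \innp{u}{v}_{L^2(\O)} .
\]

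The first step is a coercivity estimate, which becomes visible after multiplying $B_z$ by the nonzero constant $i\bar z$. Since $a$ takes nonnegative values we have $\innp{a u}{u}_{L^2(\O)} \geq 0$, so writing $G = \nr{\nabla u}_{L^2(\O)}^2$, $A = \innp{a u}{u}_{L^2(\O)}$, $N = \nr{u}_{L^2(\O)}^2$ (all real and $\geq 0$) a direct computation gives, for every $u \in \HuO$,
\[
\Re\!\big( i \bar z\, B_z(u,u) \big) = -\Im\!\big( \bar z\, B_z(u,u) \big) = \b\, G + \abs{z}^2 A + \b \abs{z}^2 N \geq \b \min(1,\abs{z}^2)\, \nr{u}_{\HuO}^2 .
\]
Hence $(u,v) \mapsto i\bar z\, B_z(u,v)$ is a bounded coercive form on $\HuO$, and Lax--Milgram provides, for each $f \in L^2(\O)$, a unique $u_f \in \HuO$ with $B_z(u_f,v) = \innp{f}{v}_{L^2(\O)}$ for all $v \in \HuO$, together with the bound $\nr{u_f}_{\HuO} \leq C_z \nr{f}_{L^2(\O)}$.

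It then remains to identify $u_f$ as an element of $\Dom(\DN)$ solving the equation. Testing against $v \in C_0^\infty(\O)$ shows that $\D u_f = -f - iza\, u_f - z^2 u_f \in L^2(\O)$ in the sense of distributions; testing against a general $v \in \HuO$ and integrating by parts (Green's formula) then forces $u_f$ to satisfy the Neumann condition, so that $u_f \in \Dom(\DN)$ and $(-\DN - iza - z^2) u_f = f$ (if $\DN$ is defined through its quadratic form $u \mapsto \nr{\nabla u}_{L^2(\O)}^2$ on $\HuO$, this identification is automatic). Injectivity of $-\DN - iza - z^2$ follows by applying the coercivity estimate to a solution of the homogeneous equation, and the bound on $\nr{u_f}_{\HuO}$ shows that the inverse is bounded on $L^2(\O)$. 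Therefore $z^2$ belongs to the resolvent set of $-\DN - iza$.

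The one mildly delicate point is this last identification, namely passing from the weak formulation on $\HuO$ to membership in $\Dom(\DN)$; this is routine elliptic theory on the smooth domain $\O$ and is transparent once $\DN$ is set up via its form. Alternatively one can avoid Lax--Milgram altogether: the a priori estimate above, combined with the same estimate for the formal adjoint $-\DN + i\bar z a - \bar z^2$ (obtained identically since $\Im \bar z < 0$), shows that $-\DN - iza - z^2$ has closed range and dense range, hence is boundedly invertible. One should also note that the naive argument via dissipativity of $-\DN - iza$ is not available here, since $z \in \C_+$ does not imply $z^2 \in \C_+$; the combination of the $-iza$ and $-z^2$ terms in $B_z$ is precisely what makes the estimate work.
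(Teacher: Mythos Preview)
Your proof is correct and takes a genuinely different route from the paper's. The paper argues by a case split on the sign of $\Re z$: if $\Re z = 0$ then $-iza = (\Im z)a \geq 0$ so $-\DN - iza$ is selfadjoint nonnegative and $z^2 < 0$ is in its resolvent set; if $\Re z > 0$ then $-iza$ is dissipative, $-\DN - iza$ is maximal dissipative, and $z^2 \in \C_+$ is in its resolvent set; if $\Re z < 0$ one argues symmetrically with the adjoint. Your approach, by contrast, avoids cases entirely: the single identity $\Re(i\bar z\, B_z(u,u)) = \b\,\nr{\nabla u}^2 + \abs z^2 \innp{au}{u} + \b\abs z^2 \nr u^2$ gives coercivity directly and Lax--Milgram does the rest. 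Your route is more quantitative (it yields an explicit resolvent bound $\nr{R_N(z)} \leq (\b \abs z^2)^{-1}$, sharper for large $\abs z$ than what comes from the paper's argument) and arguably cleaner, while the paper's version is shorter and highlights the dissipative structure that recurs throughout. Your closing remark that ``the naive argument via dissipativity \dots\ is not available'' deserves a small qualification: dissipativity \emph{is} what the paper uses, but only after the case distinction on $\Re z$ that ensures $z^2$ lies in the correct half-plane.
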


\begin{proof}
The operator $-\DN$ is selfadjoint and non-negative on $L^2(\O)$. If $\Re(z) = 0$ then $-iza$ is a bounded and non-negative operator, so $-\DN-iza$ is selfadjoint and non-negative. Since $z^2$ is real negative, it belongs to its resolvent set. Now assume that $\Re(z) > 0$. Then $-iza$ is bounded and dissipative, so $-\DN-iza$ is maximal dissipative. Thus its resolvent set contains $\C_+$ and in particular $z^2$. Finally, if $\Re(z) < 0$ then $-(-\DN-iza)$ is maximal dissipative and $-z^2$ belongs to $\C_+$, so we can conclude similarly.
\end{proof}

For $z \in \C_+$ we set 
\[
\RNz = \big( - \DN  -iaz -z^2 \big)\inv.
\]

\begin{proposition} \label{prop-AcN-diss}
The operator $\AcN$ is maximal dissipative on $\EEN$. Moreover for $z \in \C_+$ and $F \in \HHN$ we have 
\begin{equation} \label{expr-res}
(\Ac_N-z)\inv F =
\begin{pmatrix}
\RNz (ia + z) & \RNz \\
I + \RNz (iza + z^2) & z \RNz
\end{pmatrix}F.
\end{equation}
\end{proposition}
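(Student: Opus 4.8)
The plan is to check dissipativity by an integration by parts, then to obtain the resolvent identity \eqref{expr-res} by solving the linear system $(\AcN-z)U=F$ directly, and finally to read maximal dissipativity off that computation. First I would verify that $\AcN$ is dissipative. For $U=(u,v)\in\Dom(\AcN)$ we have $\AcN U=(v,-\D u-iav)$, so
\[
\innp{\AcN U}{U}_{\EEN}=\innp{\nabla v}{\nabla u}_{L^2(\O)}+\innp{-\D u-iav}{v}_{L^2(\O)}.
\]
Applying Green's formula to $\innp{-\D u}{v}_{L^2(\O)}$ (equivalently, the variational characterisation of $\DN$) and using the Neumann condition $\partial_\n u=0$ kills the boundary term, so the right-hand side equals $\innp{\nabla v}{\nabla u}_{L^2(\O)}+\innp{\nabla u}{\nabla v}_{L^2(\O)}-i\int_\O a\abs v^2$. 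Taking imaginary parts gives $\Im\innp{\AcN U}{U}_{\EEN}=-\int_\O a\abs v^2\leq 0$ since $a\geq 0$. In particular, for $z\in\C_+$ and $U\in\Dom(\AcN)$ one gets the a priori bound $\nr{(\AcN-z)U}_{\EEN}\geq\Im(z)\nr{U}_{\EEN}$, so $\AcN-z$ is injective.

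Next I would establish \eqref{expr-res}. Fix $z\in\C_+$ and $F=(f,g)\in\HHN$, and look for $U=(u,v)\in\Dom(\AcN)$ with $(\AcN-z)U=F$, i.e. $v-zu=f$ and $-\D u-iav-zv=g$. Eliminating $v=f+zu$ reduces the system to the scalar equation $(-\DN-iaz-z^2)u=(ia+z)f+g$. Its right-hand side lies in $L^2(\O)$ (here one uses that $f,g\in L^2(\O)$, which is part of the $\HHN$ norm, and that $a$ is bounded), and by the previous proposition $z^2$ belongs to the resolvent set of $-\DN-iaz$; hence $u:=\RNz\big((ia+z)f+g\big)$ is a well-defined element of $\Dom(\DN)$, and we set $v:=f+zu$. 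One then checks the routine points: $\partial_\n u=0$ since $u\in\Dom(\DN)$; $(u,v)\in\HHN\subset\EEN$ and $\AcN U=(v,-\D u-iav)\in\HHN\subset\EEN$, using $u\in\Dom(\DN)\subset H^1(\O)$ with $\D u\in L^2(\O)$ and $f\in H^1(\O)$; so $U\in\Dom(\AcN)$. Rewriting $v=f+z\RNz\big((ia+z)f+g\big)$ and using $z(ia+z)=iza+z^2$ shows that $(u,v)$ is precisely the right-hand side of \eqref{expr-res} applied to $F$. By the injectivity from the first step this $U$ is unique, so \eqref{expr-res} indeed computes $(\AcN-z)\inv$ on $\HHN$.

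Finally, for maximal dissipativity: the previous step gives $\Ran(\AcN-z)\supseteq\HHN$ for every $z\in\C_+$, and $\HHN$ is dense in $\EEN$. Combining this with the a priori bound and the closedness of $\AcN$ (its domain is cut out by a closed set of conditions), one upgrades the range to all of $\EEN$: for $F\in\EEN$ pick $F_n\to F$ in $\EEN$ with $F_n\in\HHN$; the corresponding solutions $U_n$ form a Cauchy sequence in $\EEN$ by the a priori bound, and their limit lies in $\Dom(\AcN)$ with image $F$ under $\AcN-z$ by closedness. Hence $\AcN-z$ is boundedly invertible for every $z\in\C_+$, which together with dissipativity gives that $\AcN$ is maximal dissipative.

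I do not expect a deep obstacle: the statement is essentially a bookkeeping exercise. The point that needs genuine care is the distinction between the two completion spaces $\HHN\subset\EEN$. The matrix in \eqref{expr-res} is \emph{not} a bounded operator on $\EEN$ — its output involves the first component $f$ of $F$, whose $L^2$ norm is not controlled by the $\EEN$ norm of $F$ — so one must argue first on $\HHN$ and then pass to $\EEN$ by density together with the dissipativity estimate, rather than reading everything off the formula. Correctly identifying $\Dom(\AcN)$, in particular that $\RNz$ automatically produces the weak Neumann condition, is the other small point, but it is immediate from the definition of $\DN$.
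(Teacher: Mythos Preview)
Your proposal is correct and follows essentially the same route as the paper: verify dissipativity by the obvious computation, solve the $2\times 2$ system explicitly for $F\in\HHN$ to obtain \eqref{expr-res}, and use density of $\HHN$ in $\EEN$ to conclude maximal dissipativity. The only cosmetic difference is that the paper works at the specific point $z=i$ and uses the sharper estimate $\nr{(\AcN-i)U}_{\EEN}^2\geq\nr{\AcN U}_{\EEN}^2+\nr U_{\EEN}^2$, which controls $\AcN U$ as well as $U$ and thereby makes the closed-range step self-contained without an explicit appeal to the closedness of $\AcN$; your argument invokes closedness directly, which is fine but was asserted rather than checked.
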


\begin{proof}
For $U = (u,v) \in \Dom(\AcN)$ we have 
\[
\Im \innp{\AcN U}{U}_{\EEN} = - \innp{av}v_{L^2(\O)} \leq 0,
\]
so $\AcN$ is dissipative on $\EEN$. Then
\[
\nr{(\AcN-i)U}_{\EEN}^2 = \nr{\AcN}_{\EEN}^2 + \nr{U}_{\EEN}^2 - 2 \Im \innp{\AcN U}{U}_{\EEN} \geq \nr{\AcN}_{\EEN}^2 + \nr{U}_{\EEN}^2,
\]
so $(\AcN-i)$ is injective with closed range. It remains to prove that $\Ran(\AcN-i)$ is dense in $\EEN$. Let $F = (f,g) \in \HHN$. For $U =(u,v) \in \Dom(\AcN)$ we have 
\begin{align*}
(\AcN-i)U = F
& \eqv 
\begin{cases}
v-iu = f \\
- \DN u - i a v - i v= g
\end{cases}
\eqv
\begin{cases}
u = \RN(i) (g+iaf+if)\\
v = iu + f 
\end{cases}
\end{align*}
Defined this way, $U = (u,v)$ indeed belongs to $\Dom(\AcN)$ so $F \in \Ran(\AcN-i)$. Since $\HHN$ is dense in $\EEN$, this proves that $(\AcN-i)$ has a bounded inverse in $\Lc(\EEN)$, so $\AcN$ is maximal dissipative.
In particular any $z \in \C_+$ belongs to the resolvent set of $\AcN$. Then if we denote by $\Rc_\Ac(z) F$ the right-hand side of \eqref{expr-res}, we can check by straightforward computation that $\Rc_\Ac(z) F \in \Dom(\AcN)$ and
\[
(\AcN-z) \Rc_\Ac(z) F = F.
\]
This proves that $(\AcN - z)\inv = \Rc_\Ac(z)$ on $\HHN$.
\end{proof}

For the proofs of Theorems \ref{th-energy-decay} and \ref{th-chaleur} we have to estimate the resolvent $(\AcN-z)\inv$ when $\Im(z) > 0$ goes to 0. This aspect is simplified by the fact that with a strong absorption any real number except 0 belongs to the resolvent set of $\AcN$.

\begin{proposition} \label{prop-RNt}
Let $\t \in \R \setminus \singl 0$. Then the resolvent $\RNt$ is well defined and extends to an operator in $\Lc(\HuOp,\HuO)$.
\end{proposition}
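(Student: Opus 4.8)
\emph{Plan.} The plan is to regard $-\DN - ia\t - \t^2$ as a compact perturbation of an isomorphism between $\HuO$ and $\HuOp$, and then to remove its kernel by hand, using an energy identity and unique continuation. First I would introduce on $\HuO$ the bounded sesquilinear form
\[
q_\t(u,v) = \innp{\nabla u}{\nabla v}_{L^2(\O)} - i\t \innp{au}{v}_{L^2(\O)} - \t^2 \innp{u}{v}_{L^2(\O)},
\]
and denote by $T_\t \in \Lc(\HuO,\HuOp)$ the associated operator, i.e. the bounded extension of $-\DN - ia\t - \t^2$ to a map from $\HuO$ to $\HuOp$. Let $T_\t^0$ be defined in the same way with $a$ replaced by the constant function $1$, so that $T_\t - T_\t^0$ is multiplication by $-i\t(a-1)$. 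Assuming $\t > 0$ for definiteness (the case $\t < 0$ being symmetric), I would separate real and imaginary parts in $q_\t^0(u,u) = \nr{\nabla u}_{L^2(\O)}^2 - \t^2 \nr{u}_{L^2(\O)}^2 - i\t \nr{u}_{L^2(\O)}^2$ and use the identity
\[
\Re q_\t^0(u,u) + \Big(\t + \tfrac 1\t\Big)\big(-\Im q_\t^0(u,u)\big) = \nr{\nabla u}_{L^2(\O)}^2 + \nr{u}_{L^2(\O)}^2 = \nr{u}_{\HuO}^2,
\]
which gives $\abs{q_\t^0(u,u)} \geq (1+\t+\t\inv)\inv \nr{u}_{\HuO}^2$, so that $T_\t^0$ is an isomorphism from $\HuO$ onto $\HuOp$ by the Lax--Milgram theorem. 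On the other hand, by \eqref{hyp-amort-inf} (with $\b = 0$) the function $a-1$ is bounded and tends to $0$ as $\abs x \to \infty$, uniformly in $y \in \o$; combining this decay with the local Rellich compactness of $H^1 \hookrightarrow L^2$ on the bounded pieces $\singl{\abs x < R}\times\o$ of the wave guide shows that $u \mapsto (a-1)u$ is compact from $\HuO$ into $L^2(\O)$, hence $T_\t - T_\t^0$ is compact from $\HuO$ into $\HuOp$. Therefore $T_\t$ is a Fredholm operator of index $0$.

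Next I would prove that $T_\t$ is injective. Let $u \in \HuO$ with $T_\t u = 0$, i.e. $q_\t(u,v) = 0$ for every $v \in \HuO$. Choosing $v = u$ and keeping the imaginary part gives $\t \int_\O a \abs u^2 = 0$, hence, since $\t \neq 0$ and $a \geq 0$, $u = 0$ almost everywhere on $\singl{a>0}$. Since $a(x,y) \to 1$ as $\abs x \to \infty$, there is $R>0$ such that $a > 0$ on the nonempty open set $\singl{\abs x > R}\times\o \subset \O$, so $u$ vanishes there. But $u$ is a weak solution of $-\D u = \t^2 u$ in the connected open set $\O$, hence real-analytic in $\O$ by elliptic regularity, and therefore vanishes identically. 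Thus $\Ker T_\t = \singl 0$, and being Fredholm of index $0$, $T_\t$ is an isomorphism from $\HuO$ onto $\HuOp$.

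It then only remains to set $\RNt := T_\t\inv \in \Lc(\HuOp,\HuO)$ and to identify it with the resolvent. For $f \in L^2(\O) \subset \HuOp$ and $u := \RNt f$, the weak formulation $q_\t(u,v) = \innp{f}{v}_{L^2(\O)}$ for all $v \in \HuO$ reads $\innp{\nabla u}{\nabla v}_{L^2(\O)} = \innp{f + i\t a u + \t^2 u}{v}_{L^2(\O)}$, so $u \in \Dom(\DN)$ with $-\DN u = f + i\t a u + \t^2 u$, that is $(-\DN - ia\t - \t^2)u = f$. Together with the injectivity of $T_\t$ on $\HuO \supset \Dom(\DN)$, this shows that $\t^2$ belongs to the resolvent set of $-\DN - ia\t$ on $L^2(\O)$ and that the resolvent $\RNt$ is the restriction to $L^2(\O)$ of the operator $T_\t\inv$, which lies in $\Lc(\HuOp,\HuO)$, as claimed.

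The main obstacle is the injectivity step. One has to be sure the energy identity is legitimate (it is, since the form domain of $-\DN$ is precisely $\HuO$, so no argument at the level of $\Dom(\AcN)$ is needed here) and to invoke unique continuation correctly; here, however, $u$ is known to vanish on an open subset lying in the \emph{interior} of $\O$, so only the interior unique continuation property for $\D + \t^2$ is used, which is immediate as the coefficients are constant, and the Neumann boundary condition plays no role. The compactness of $T_\t - T_\t^0$ is the other point requiring (routine) care: it is the usual combination of local Rellich compactness on the wave guide with the decay of $a-1$ at infinity, which is exactly where the hypothesis $a \to 1$ at infinity enters. Note finally that the exclusion $\t \neq 0$ is essential: it is used both in the coercivity of $T_\t^0$ and in deducing $\int_\O a\abs u^2 = 0$, and indeed $0$ lies in the spectrum of $-\DN$.
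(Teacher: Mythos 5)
Your proof is correct, but it follows a genuinely different route from the paper's. The paper invokes the Weyl essential spectrum theorem: $-\DN - i\t a$ is a relatively compact perturbation of $-\DN - i\t$, so its essential spectrum is $\R_+ - i\t$, which misses $\t^2$; this forces the paper to worry about the subtleties of the essential spectrum for non-selfadjoint operators (hence its Appendix B, where one must check that a connected component of the complement of the essential spectrum actually meets the resolvent set). It then gets the $\Lc(\HuOp,\HuO)$ mapping property in a second step, by duality from $\Lc(L^2,H^1)$ plus the resolvent identity. You instead work directly at the level of the sesquilinear form: the rotated coercivity identity $\Re q_\t^0 + (\t+\t\inv)(-\Im q_\t^0) = \nr{u}_{\HuO}^2$ makes $T_\t^0$ an isomorphism $\HuO \to \HuOp$, the decay of $a-1$ plus local Rellich compactness makes $T_\t - T_\t^0$ compact, and the Fredholm alternative of index $0$ reduces everything to injectivity. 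The two arguments share their essential core — the imaginary part of the energy identity kills $u$ on $\singl{a>0}$, which contains a neighbourhood of infinity, and interior unique continuation (real-analyticity of solutions of the constant-coefficient Helmholtz equation on the connected set $\O$) finishes the job — but your version has two advantages: it bypasses the non-selfadjoint Weyl machinery entirely, and it produces the operator in $\Lc(\HuOp,\HuO)$ in one stroke rather than by a posteriori duality and the resolvent identity. The only point where you are slightly cavalier is the appeal to ``the Lax--Milgram theorem'': the classical statement assumes coercivity of the real part, whereas you only have $\abs{q_\t^0(u,u)} \geq c\nr{u}_{\HuO}^2$; the standard remedy (the lower bound $\nr{T_\t^0 u}_{\HuOp} \geq c \nr{u}_{\HuO}$ gives injectivity with closed range, and the same bound for the adjoint form gives density of the range) is worth a line, but it is routine.
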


\begin{proof}
Since the operator $-\DN  -i\t (a-1)$ is a bounded and relatively compact perturbation of the selfadjoint operator $-\DN$, we deduce by the usual Weyl Theorem (see \cite{rs4}, see also Appendix \ref{sec-weyl} for a discussion about the essential spectrum) that its essential spectrum is the same as for $-\DN$, namely $\R_+$. Then the essential spectrum of $-\DN -i\t a$ is $\R_+ -i\t$. In particular $\t^2 \in \R$ belongs to the spectrum of $-\DN -i\t a$ if and only if it is an eigenvalue. Now assume that $u \in \Dom(\DN)$ is such that $(-\DN -i\t a - \t^2)u = 0$. Then 
\begin{equation} \label{estim-im-part}
\int_\O a \abs u^2 = - \frac 1 \t \Im \innp{(-\DN -i\t a - \t^2)u }{u} = 0.
\end{equation}
This implies that $u$ vanishes where $a > 0$. Then $(-\DN - \t^2)u = 0$ and, by unique continuation, $u = 0$. Finally $\t^2$ belongs to the resolvent set of $-\D -i\t a$, which means that the resolvent $\RNt$ is well defined. It defines in particular a bounded operator from $L^2(\O)$ to $H^1(\O)$ and by duality, from $H^1(\O)'$ to $L^2(\O)$. By the resolvent identity 
\[
\RNt = (-\DN  + 1)\inv + \RNt (i\t a + \t^2 - 1) (-\DN  + 1)\inv,
\]
we conclude that $\RNt$ also extends to a bounded operator from $H^1(\O)'$ to $H^1(\O)$.
\end{proof}

\begin{proposition} \label{prop-inter-freq}
Any $\t \in \R \setminus \singl 0$ belongs to the resolvent set of $\AcN$.
\end{proposition}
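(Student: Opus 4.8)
The plan is to reduce the question about $\AcN$ on $\EEN$ to the already–established invertibility of $\RNt=(-\DN-i\tau a-\tau^2)^{-1}$ from Proposition~\ref{prop-RNt}, by reusing the block formula \eqref{expr-res}. More precisely, I would first fix $\tau\in\R\setminus\{0\}$ and check that the right-hand side of \eqref{expr-res}, with $z$ replaced by $\tau$, still makes sense as a bounded operator. This is where Proposition~\ref{prop-RNt} is crucial: it tells us that $\RNt$ is well defined and, moreover, maps $\HuOp\to\HuO$ boundedly. The entries of the matrix in \eqref{expr-res} are $\RNt(ia+\tau)$, $\RNt$, $I+\RNt(i\tau a+\tau^2)$ and $\tau\RNt$; using that multiplication by $a$ and by constants is bounded on the relevant spaces and that $\RNt$ gains one derivative, each entry is a bounded operator, so the matrix defines a bounded operator $\Rc_\Ac(\tau)$ at least from $\HHN$ into $\EEN$, and in fact into $\Dom(\AcN)$ after checking the Neumann condition (which holds because the first component lies in $\HuO$ and $\RNt$ lands in the appropriate domain).

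Next I would verify the two algebraic identities $(\AcN-\tau)\Rc_\Ac(\tau)=\Id$ on $\HHN$ and $\Rc_\Ac(\tau)(\AcN-\tau)=\Id$ on $\Dom(\AcN)$, exactly as in the proof of Proposition~\ref{prop-AcN-diss} but with $z=\tau$; these are straightforward computations that only use $(-\DN-i\tau a-\tau^2)\RNt=\Id$. Since $\HHN$ is dense in $\EEN$ and $\Rc_\Ac(\tau)$ is bounded, the first identity extends to all of $\EEN$ by continuity, giving a bounded right inverse; combined with the left inverse on the domain, this shows $(\AcN-\tau)$ is boundedly invertible with $(\AcN-\tau)^{-1}=\Rc_\Ac(\tau)\in\Lc(\EEN)$, i.e.\ $\tau$ is in the resolvent set. (One should be slightly careful that the a priori estimate $\Im\innp{\AcN U}{U}\le 0$ only shows injectivity of $\AcN-\tau$ when $\tau=0$ would be dangerous, but here the explicit inverse does the whole job, so no separate injectivity argument is needed.)

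The only genuine subtlety — and the point I would treat most carefully — is the mapping properties near the "source" and "target" spaces: one must make sure that $\RNt$ as produced by Proposition~\ref{prop-RNt} really does send an element of $L^2(\O)$ (coming from the second component $g$ of $F\in\HHN$) into $\HuO$, and that the combination $I+\RNt(i\tau a+\tau^2)$ applied to an $H^1$ function again produces an $L^2$ function with the correct regularity so that the pair lies in $\Dom(\AcN)$; this uses that $i\tau a+\tau^2$ is a bounded multiplication operator on $L^2$ and that $\RNt$ maps $\HuOp\supset L^2$ into $\HuO$. Everything else is the same bookkeeping as in Proposition~\ref{prop-AcN-diss}. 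I do not expect any obstacle beyond this routine verification; the conceptual content has already been absorbed into Proposition~\ref{prop-RNt} via the Weyl theorem and the unique continuation argument.
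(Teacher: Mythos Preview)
Your overall direction---reuse the block formula \eqref{expr-res} at the real point $\tau$ via Proposition~\ref{prop-RNt}---is sound, and indeed Remark~\ref{rem-expr-res} in the paper confirms that the explicit formula does hold for real $\tau$. But there is a genuine gap in your density step. You show that $\Rc_\Ac(\tau)$ is bounded from $\HHN$ to $\EEN$ and then want to extend to $\EEN$ by density; this requires continuity with respect to the $\EEN$-norm on the source, not the $\HHN$-norm. These differ precisely because for $(f,g)\in\EEN$ one only controls $\nr{\nabla f}_{L^2}$, with no bound on $\nr{f}_{L^2}$. Reading the entries of \eqref{expr-res} separately, as you do, the first column involves $\RNt(ia+\tau)f$ and $f+\RNt(i\tau a+\tau^2)f$, which depend on $f$ itself (through $af$, $\tau f$, $\tau^2 f$) and are \emph{not} bounded by $\nr{\nabla f}_{L^2}$ alone. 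So the entry-by-entry bound does not yield an operator in $\Lc(\EEN)$, and the extension argument as written fails. The same issue obstructs your left-inverse identity: for $U=(u,v)\in\Dom(\AcN)$, the first component of $(\AcN-\tau)U$ is $v-\tau u$, which need not lie in $L^2$.

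The missing ingredient is an algebraic rewriting. Using the extension $\RNz\in\Lc(\HuOp,\HuO)$ from Proposition~\ref{prop-RNt}, one has $\RNz(-\DN-iza-z^2)=\Id$ on $H^1$, hence $\RNz(iza+z^2)=-\Id-\RNz\DN$, and the first column of \eqref{expr-res} applied to $u$ becomes
\[
\begin{pmatrix}-z^{-1}u-z^{-1}\RNz\DN u\\ -\RNz\DN u\end{pmatrix}.
\]
Now $\nr{\DN u}_{\HuOp}\le\nr{\nabla u}_{L^2}$, and taking the $\EEN$-norm of the output (i.e.\ $\nr{\nabla(\cdot)}$ on the first slot) everything is controlled by $\nr{\nabla u}$. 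This is exactly what the paper does: it performs the rewriting for $z=\tau+i\mu$ with $\mu\in]0,1]$, obtains a bound on $\nr{(\AcN-z)^{-1}}_{\Lc(\EEN)}$ that is uniform in $\mu$, and concludes $\tau\in\rho(\AcN)$ because the resolvent norm blows up at the spectrum. With the same rewriting your direct construction at $z=\tau$ would also go through.
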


\begin{proof}
Let $\t \in \R \setminus \singl 0$. Let $\m \in ]0,1]$ and $z = \t + i \m$. Let $U = (u,v) \in \HHN$. By Proposition \ref{prop-AcN-diss} we have 
\[
(\AcN-z)\inv U = 
\begin{pmatrix}
\frac 1 z \RNz (iza+z^2) u + \RNz v \\
u + \RNz (iza+z^2) u + z \RNz v
\end{pmatrix}
= 
\begin{pmatrix}
- \frac u z - \frac 1 z \RNz \DN u + \RNz v \\
-\RNz \DN u + z \RNz v
\end{pmatrix}.
\]
By Proposition \ref{prop-RNt} we have 
\[
\nr{ \nabla \RNz v}_{L^2(\O)} + \nr{z \RNz v}_{L^2(\O)} \lesssim \nr{v}_{L^2(\O)}
\]
(where $\lesssim$ means ``less or equal up to a multiplicative constant which does not depends on $\m \in ]0,1]$''). On the other hand, if we see $\DN$ as a bounded operator from $H^1(\O)$ to its dual we can write
\[
\nr{-z\inv \nabla u - z \inv \nabla \RNz \DN u}_{L^2(\O)} + \nr{\RNz \DN u}_{L^2(\O)} \lesssim \nr{\nabla u}.
\]
Thus 
\[
\nr{(\AcN-z)\inv U}_{\EEN} \lesssim \nr{U}_{\EEN}.
\]
This proves that $(\AcN -z)\inv$ is bounded in $\Lc(\EEN)$ uniformly in $\m \in ]0,1]$. Since this resolvent blows up near the spectrum of $\AcN$, this proves that $\t$ belongs to the resolvent set of $\AcN$.
\end{proof}

\begin{remark} \label{rem-expr-res}
For further use we remark that the computations for the resolvent $(\AcN-z)\inv$ still holds for $\t \in \R \setminus \singl 0$. Thus for $U = (u,v) \in \EEN$ we have 
\begin{align*}
\nr{(\AcN - \t)\inv U}_{\EEN}
& \leq \frac 1 {\abs{\t}} \nr{\nabla u} + \frac 1 {\abs{\t}} \nr{\nabla \RNt \nabla} \nr{\nabla u} + \nr{\nabla \RNt} \nr{v}\\
& + \nr{\RNt \nabla} \nr{\nabla u} + \abs \t \nr{\RNt} \nr {v}
\end{align*}
(all the norms on the right are in $L^2(\O)$ or $\Lc(L^2(\O))$).
\end{remark}

\begin{remark}
We can check that 0 belongs to the spectrum of $\AcN$.
\end{remark}

\subsection{Separation of variables} \label{sec-separation-variables}

On a wave guide $\O \simeq \R^d \times \o$, it is usual to write a Laplace operator as the sum of the Laplace operators on $\R^d$ and on $\o$. From the spectral properties of these two simpler terms we can deduce useful information about the full operator. See for instance \cite{KrejcirikKr05,borisovk08,KrejcirikRa14,royer-diss-wave-guide}.

Here we will have to be careful with the fact that the absorption index $a$ does not necessarily respect the symmetry of the domain $\O$. We will nonetheless use this idea in our proofs.\\

We denote by $-\Dx$ the usual Laplacian on $\R^d$. Then we denote by $\TN$ the Neumann realization of the Laplacian on $\o$. This is a non-negative selfadjoint operator on $L^2(\o)$ with compact resolvent, 0 being a simple eigenvalue thereof ($\TN\f = 0$ if and only if $\f$ is constant on $\o$). We denote by 
\[
0 = \l_0 < \l_1 \leq \dots \leq \l_k \leq \dots
\]
the eigenvalues of $\TN$ and consider a corresponding orthonormal basis $\seq \f k$ of eigenfunctions: for $k \in \N$ we have $\nr{\f_k}_{L^2(\o)} = 1$, $\f_k \in \Dom(\TN)$ and $\TN \f_k = \l_k \f_k$.

Let $u \in L^2(\O)$. For almost all $x \in \R^d$ we have $u(x,\cdot) \in L^2(\o)$, so there exists a sequence $(u_k(x))_{k \in \N}$ such that in $L^2(\o)$ we have
\begin{equation} \label{def-uk}
u(x,\cdot) = \sum_{k\in\N} u_k(x) \f_k
\end{equation}
and
\[
\nr{u(x,\cdot)}_{L^2(\o)}^2 = \sum_{k \in \N} \abs{u_k(x)}^2.
\]
After integration over $x \in \R^d$ we obtain that $u_k \in L^2(\R^d)$ for all $k \in \N$ and 
\[
\nr{u}_{L^2(\O)}^2 = \sum_{k\in\N} \nr{u_k}_{L^2(\R^d)}^2.
\]

Let $\a$ be a bounded function on $\R^d$. We can see $\a$ as a fonction on $\O$ which does not depend on $y \in \o$. It is not difficult to check that for $\th \in \s(-\Dx -i\a)$ and $k \in \N$ we have $\th + \l_k \in \s(-\DN -i\a)$ (see for instance Proposition 4.1 in \cite{royer-diss-schrodinger-guide} in a similar context). Here we have denoted by $\Dx$ the usual Laplacian on $\R^d$. Let $\z \in \C$ in the resolvent set of $-\DN -i\a$. Let $f \in L^2(\o)$. As above we can write  
\begin{equation} \label{def-fk}
f = \sum_{k\in\N} f_k \otimes \f_k,
\end{equation}
where $\seq f k$ is a sequence of functions in $L^2(\R^d)$. 
For $m \in \N$ we set 
\[
g_m = \sum_{k = 0}^m f_k \otimes \f_k \quad \text{and} \quad v_m = \sum_{k =0}^m \big(-\Dx  - i \a(x) - (\z - \l_k) \big)\inv f_k \otimes \f_k.
\]
Then for all $m \in \N$ we have $v_m \in \Dom(\DN)$ and by direct computation $(-\DN -i\a(x) - \z) v_m = g_m$. Therefore $v_m = (-\DN -i\a(x) - \z) \inv g_m$. At the limit $m \to \infty$ we obtain 
\begin{equation} \label{res-sep-variables}
(-\DN-i\a(x)-\z)\inv f = \sum_{k \in \N} \big(-\Dx  - i \a(x) - (\z - \l_k) \big)\inv f_k \otimes \f_k.
\end{equation}

In Section \ref{sec-low-freq} we will show that the resolvent $\RNz$ is close to $(-\Dx-iz)\inv$ when $z \in \C_+$ is small. For this we will use some estimates of the latter. Since the absorption index has been replaced by 1, we can use the separation of variables.

For $\s \in \R$ we denote by $\dot H^\s L^2(\O)$ the homogeneous Sobolev space with respect to $x \in \R^d$, endowed with the natural norm 
\begin{equation} \label{def-Hs}
\nr{u}_{\dot H^\s L^2(\O)}^2 = \int_{y \in \o} \nr{u(\cdot,y)}_{\dot H^\s(\R^d)}^2 \, dy = \nr{(-\Dx)^{\frac \s 2} u}_{L^2(\O)}^2.
\end{equation}

\begin{proposition} \label{prop-res-heat}
Let $\s \in \R$ and $j \in \N^*$. For $z \in \C_+$ we have 
\[
\nr{(-\DN -iz)^{-j}}_{\Lc(\dot H^\s L^2(\O))} = \frac {1}{\abs z^j}, \qquad \nr{\nabla_x (-\DN -iz)^{-j}}_{\Lc(\dot H^\s L^2(\O))} \leq \frac {\sqrt 2}{\abs z^{j-\frac 12}},
\]
and
\[
\nr{\nabla_y (-\DN -iz)^{-j}}_{\Lc(\dot H^\s L^2(\O))} \leq \l_1^{\frac 12 -j}.
\]
\end{proposition}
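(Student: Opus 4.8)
The plan is to diagonalize $-\DN$ using the Fourier transform in $x\in\R^d$ together with the expansion in the orthonormal basis $\seq\f k$ of eigenfunctions of $\TN$. Writing $u\in L^2(\O)$ as $u=\sum_{k\in\N}u_k\otimes\f_k$ with $u_k\in L^2(\R^d)$, formula \eqref{res-sep-variables} (with $\a\equiv 0$) shows that $(-\DN-iz)^{-1}$ respects this decomposition, with block $(-\Dx+\l_k-iz)^{-1}$ on $L^2(\R^d)\otimes\f_k$, so composing the resolvent with itself $j$ times gives
\[
(-\DN-iz)^{-j}u=\sum_{k\in\N}\big((-\Dx+\l_k-iz)^{-j}u_k\big)\otimes\f_k .
\]
After a Fourier transform in $x$ each $(-\Dx+\l_k-iz)^{-j}$ becomes multiplication by $(\abs\xi^2+\l_k-iz)^{-j}$. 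Since the Fourier multiplier $\abs\xi^\s$ defining the norm of $\dot H^\s L^2(\O)$ (see \eqref{def-Hs}) commutes with all of these, the operator norm of $(-\DN-iz)^{-j}$ on $\dot H^\s L^2(\O)$ equals $\sup_{k\in\N,\,\xi\in\R^d}\abs{\abs\xi^2+\l_k-iz}^{-j}$, and that of $\nabla_x(-\DN-iz)^{-j}$ equals $\sup_{k\in\N,\,\xi\in\R^d}\abs\xi\,\abs{\abs\xi^2+\l_k-iz}^{-j}$.

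Write $z=\t+i\m$ with $\m>0$. For the first identity I would note that, since $\abs\xi^2+\l_k\geq 0$,
\[
\abs{\abs\xi^2+\l_k-iz}^2=(\abs\xi^2+\l_k+\m)^2+\t^2\geq\m^2+\t^2=\abs z^2 ,
\]
which gives $\nr{(-\DN-iz)^{-j}}_{\Lc(\dot H^\s L^2(\O))}\leq\abs z^{-j}$; equality holds because already the mode $k=0$ contributes the norm $\abs z^{-j}$, namely the $L^\infty$ norm of the continuous symbol $\xi\mapsto\abs{\abs\xi^2-iz}^{-j}$, whose value at $\xi=0$ is $\abs z^{-j}$. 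For $\nabla_x(-\DN-iz)^{-j}$, using $\l_k\geq 0$ again the symbol has modulus at most $\abs\xi\,\abs{\abs\xi^2-iz}^{-j}$, and I would factor this as $\big(\abs\xi\,\abs{\abs\xi^2-iz}^{-1}\big)\cdot\abs{\abs\xi^2-iz}^{-(j-1)}$: the second factor is at most $\abs z^{-(j-1)}$ as above, and for the first one $\abs{\abs\xi^2-iz}^2\geq\abs\xi^4+\abs z^2\geq 2\abs\xi^2\abs z$ (arithmetic--geometric mean inequality) yields $\abs\xi\,\abs{\abs\xi^2-iz}^{-1}\leq(2\abs z)^{-1/2}$. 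Hence the norm is at most $\abs z^{-(j-1/2)}/\sqrt2\leq\sqrt2\,\abs z^{-(j-1/2)}$.

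For $\nabla_y(-\DN-iz)^{-j}$ one cannot use a Fourier transform in $y$, so I would argue directly. With $v=(-\DN-iz)^{-j}u=\sum_k v_k\otimes\f_k$ and $v_k=(-\Dx+\l_k-iz)^{-j}u_k$, the relations $\innp{\nabla_y\f_k}{\nabla_y\f_l}_{L^2(\o)}=\l_k\d_{kl}$ give
\[
\nr{\nabla_y v}_{\dot H^\s L^2(\O)}^2=\sum_{k\geq 1}\l_k\,\nr{v_k}_{\dot H^\s(\R^d)}^2 .
\]
For $k\geq 1$ we have $\abs{\abs\xi^2+\l_k-iz}^2=(\abs\xi^2+\l_k+\m)^2+\t^2\geq\l_k^2$, hence $\nr{v_k}_{\dot H^\s(\R^d)}\leq\l_k^{-j}\nr{u_k}_{\dot H^\s(\R^d)}$, and therefore
\[
\nr{\nabla_y v}_{\dot H^\s L^2(\O)}^2\leq\sum_{k\geq 1}\l_k^{1-2j}\nr{u_k}_{\dot H^\s(\R^d)}^2\leq\l_1^{1-2j}\nr{u}_{\dot H^\s L^2(\O)}^2 ,
\]
since $1-2j<0$ and $\l_k\geq\l_1$ for all $k\geq 1$; taking square roots gives the bound $\l_1^{1/2-j}$. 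None of these steps is a genuine obstacle; the only point demanding a little care is obtaining an equality --- not merely an inequality --- in the first estimate, which is why one must exhibit the saturating behaviour of the symbol in the bottom mode $k=0$ as $\xi\to 0$.
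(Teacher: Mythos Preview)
Your proof is correct and rests on the same diagonalization as the paper's (separation of variables via the basis $(\f_k)$ together with the spectral representation of $-\Dx$); the reduction to $\s=0$ by commutation with $(-\Dx)^{\s/2}$, the first estimate, and the third estimate are treated essentially identically. The one genuine difference is the second estimate: rather than bounding the symbol directly with the AM--GM inequality as you do, the paper first shows $\nr{-\Dx(-\DN-iz)^{-1}f}_{L^2}\leq 2\nr f_{L^2}$ (writing $-\Dx(-\Dx+\l_k-iz)^{-1}=1+(iz-\l_k)(-\Dx+\l_k-iz)^{-1}$ on each mode) and then uses the identity $\nr{\nabla_x(-\DN-iz)^{-1}f}^2=\innp{-\Dx(-\DN-iz)^{-1}f}{(-\DN-iz)^{-1}f}$ together with Cauchy--Schwarz and the first estimate. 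Your pointwise symbol argument is slightly more direct and actually produces the sharper constant $1/\sqrt 2$ in place of $\sqrt 2$; the paper's inner-product trick, on the other hand, avoids the explicit Fourier picture and would adapt more readily to situations where a clean symbol is not available.
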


\begin{proof}
Since $(-\DN -iz)\inv$, $\nabla_x$ and $\nabla_y$ commute with $(-\D_x)^{\frac \s 2}$, it is enough to consider the case $\s = 0$. The first estimate comes from the facts that $-\DN$ is selfadjoint and non-negative and that for $z \in \C_+$ we have $d(iz,\R_+) = \abs z$. For $f\in L^2(\O)$ and $\seq f k$ as in \eqref{def-fk} we have by \eqref{res-sep-variables}
\[
-\Dx (-\DN -iz)\inv f = \sum_{k = 0}^\infty -\Dx (-\Dx - iz + \l_k)\inv f_k \otimes \f_k.
\]
Since $iz - \l_k$ has negative real part
\begin{align*}
\nr{-\Dx (-\DN -iz)\inv f}_{L^2(\O)}^2
& \leq \sum_{k = 0}^\infty \nr{f_k + (iz-\l_k) (-\Dx - iz + \l_k)\inv f_k}_{L^2(\R^d)}^2\\
& \leq 4 \sum_{k=0}^\infty \nr{f_k}_{L^2(\R^d)}^2 = 4 \nr{f}_{L^2(\O)}^2.
\end{align*}
Then 
\begin{align*}
\nr{\nabla_x (-\DN -iz)\inv f}_{L^2(\O)}^2 = {\innp{-\Dx (-\DN -iz)\inv f}{(-\DN -iz)\inv f}} \leq \frac 2 {\abs z} \nr f_{L^2(\O)}^2,
\end{align*}
and the second estimate follows.
For the last estimate we recall that $\f_0$ is constant, so 
\[
\nabla_y (-\DN -iz)^{-j} f = \sum_{k = 1}^\infty  (-\Dx - iz + \l_k)^{-j} f_k \otimes \nabla_y \f_k.
\]
For $k,l \in \N$ we have
\[
\innp{\nabla_y \f_k}{\nabla_y \f_l}_{L^2(\o)} = \innp{-\D_y \f_k}{\f_l}_{L^2(\o)} = \l_k \d_{k,l},
\]
so the family $(\nabla_y \f_k)$ is orthogonal and $\nr{\nabla_y \f_k}_{L^2(\O)}^2 = \l_k$ for all $k \in \N$. This gives 
\[
\nr{ \nabla_y (-\DN -iz)^{-j} f}_{L^2(\O)}^2 = \sum_{k=1}^\infty \l_k \nr{(-\Dx - iz + \l_k)^{-j} f_k}_{L^2(\R^d)}^2 \leq \l_1^{1-2j} \nr{f}_{L^2(\O)}^2
\]
and concludes the proof.
\end{proof}

\subsection{Contribution of high frequencies} \label{sec-high-freq}

In this section we study the resolvent $(\AcN-\t)\inv$ for $\abs \t \gg 1$. We already know from Proposition \ref{prop-inter-freq} that any $\t \in \R \setminus \singl 0$ belongs to the resolvent set of $\AcN$. In the following theorem we give an estimate for the resolvent on the real axis at infinity. This gives in particular a region around the real axis free of spectrum.

\begin{theorem} \label{th-high-freq}
There exist $\t_0 > 0$, $\g > 0$ and $C \geq 0$ such that any $z \in \C$ with $\abs{\Re(z)} \geq \t_0$ and $\Im(z) \geq - \g \abs{\Re(z)}^{-2}$ belongs to the resolvent set of $\AcN$ and 
\[
\nr{(\Ac-z)\inv}_{\Lc(\EEN)} \leq C \abs {\Re(z)}^2.
\]
\end{theorem}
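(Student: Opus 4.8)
The strategy is the standard one for high-frequency resolvent estimates for damped wave equations: reduce the bound on $(\AcN-z)\inv$ in $\Lc(\EEN)$ to a resolvent estimate for the stationary Schrödinger-type operator $-\DN - iza$ at energy $z^2$, and then prove the latter by a Carleman/energy estimate or, more simply here, by exploiting that $a$ is bounded below by a positive constant outside a compact set together with a unique continuation / compactness argument. Concretely, I would first use the formula \eqref{expr-res} (valid for $\t \in \R\setminus\{0\}$ by Remark \ref{rem-expr-res}, and perturbatively for $z$ close to the real axis) to reduce everything to showing: there exist $\t_0,\g,C$ such that for $\abs{\Re z}\geq \t_0$ and $\Im z \geq -\g\abs{\Re z}^{-2}$ the operator $-\DN -iaz -z^2$ is invertible on $L^2(\O)$ with
\[
\nr{\RNz}_{\Lc(L^2(\O))} \leq C, \qquad \nr{\nabla \RNz}_{\Lc(L^2(\O))} \leq C\abs{\Re z}, \qquad \nr{\abs z^2 \RNz}_{\Lc(L^2(\O))}\leq C\abs{\Re z}^2,
\]
and similarly for $\RNz$ sandwiched by gradients. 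Plugging these into Remark \ref{rem-expr-res} (with $\t$ replaced by $z$) gives $\nr{(\AcN-z)\inv}_{\Lc(\EEN)}\lesssim \abs{\Re z}^2$ after keeping track of the worst term, which is $\abs z \RNz$ acting between the $v$-components.

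The core is therefore the estimate on $\RNz$. I would prove it by contradiction: suppose there are sequences $z_m$ with $\abs{\Re z_m}\to\infty$, $\Im z_m \geq -\g_m \abs{\Re z_m}^{-2}$ with $\g_m\to 0$ (or $\g$ fixed small), and $u_m \in \Dom(\DN)$ with $\nr{u_m}_{L^2(\O)}=1$ and $f_m := (-\DN -iaz_m -z_m^2)u_m$ with $\abs{\Re z_m}^{-2}\nr{f_m}\to 0$ (after appropriate normalization matching the claimed estimate). Writing $h_m = \abs{\Re z_m}^{-1}$, this is a semiclassical problem $(-h_m^2\DN - ih_m^2 a z_m - h_m^2 z_m^2)u_m = h_m^2 f_m$, i.e. $(-h_m^2 \DN - \hat z_m^2)u_m = o(h_m^2)$ with $\hat z_m = h_m z_m \to \pm 1$ and a small dissipative perturbation $-ih_m^2 a z_m = -i h_m a \hat z_m$ of order $h_m$. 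Taking the imaginary part of $\innp{(-\DN -iaz_m-z_m^2)u_m}{u_m}$ and using $\Im(z_m^2) = 2\Re(z_m)\Im(z_m)$ which is $O(\abs{\Re z_m}^{-1}) = O(h_m)$, one gets $\Re(z_m)\int_\O a\abs{u_m}^2 \lesssim \abs{\Im(z_m^2)} + \nr{f_m}\nr{u_m} = o(\abs{\Re z_m})$, hence $\int_\O a\abs{u_m}^2 \to 0$. Since $a \geq a_0 > 0$ outside a ball $\{\abs x \leq R_0\}$ by \eqref{hyp-amort-inf}, the mass of $u_m$ concentrates in $\{\abs x \leq R_0\}\times\o$, which is relatively compact, contradicting the usual propagation / unique continuation for the elliptic-type equation $(-\DN - \t^2)u \approx 0$: a semiclassical measure argument shows the limiting measure is invariant under the geodesic flow of the wave guide, charges no mass where $a>0$, yet is carried in the compact region — and since $a$ is nonzero on a neighborhood of the spatial infinity only, while every geodesic with nonzero $x$-momentum escapes, one forces concentration on the zero-$x$-momentum set; but there $a$ need not vanish...

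\textbf{The main obstacle.} Here is the genuine difficulty. Because G.C.C. fails (there may be an $x_0$ with $a(x_0,\cdot)\equiv 0$), the naive contradiction argument does \emph{not} close: a semiclassical measure concentrated on $\{x=x_0\}\times\o$ with zero $x$-momentum is flow-invariant and avoids $\{a>0\}$. This is exactly why the theorem only claims a spectral gap of width $\abs{\Re z}^{-2}$ (not uniform) and a resolvent bound that \emph{grows} like $\abs{\Re z}^2$, rather than a uniform bound. So the correct approach is not pure contradiction but a quantitative Carleman/multiplier estimate that is allowed to lose two powers of $\abs{\Re z}$: one localizes near the ``bad'' region $\{x = x_0\}\times\o$ using a commutator with multiplier $\chi(x)\cdot \Re z$ or an escape function in the $x$-variable, gaining damping for the part of $u$ with nonzero $x$-frequency, and pays a polynomial price $\abs{\Re z}^2$ for the trapped part — this is the mechanism of \cite{Burq-Hi-07} and \cite{BurqJo}. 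In practice I expect the cleanest route is: (i) prove a uniform resolvent estimate $\nr{\chi \RNz \chi}\lesssim 1$ away from the Neumann Laplacian spectrum of $\o$ using non-trapping in $x$ directions and \cite{royer-diss-wave-guide}-type arguments, and (ii) handle the contribution near the exceptional frequencies $\th = \t^2 - \l_k$ via the one-dimensional-in-$x$ model, where a Carleman estimate for $-\partial_x^2 - ia z$ on $\R$ yields a $\log$-type or polynomial loss; combining gives the stated $\abs{\Re z}^2$. The gap $\Im z \geq -\g\abs{\Re z}^{-2}$ then comes from a Neumann-series perturbation of the real-axis estimate. Getting the bookkeeping of powers of $\abs{\Re z}$ exactly right, and in particular verifying that two powers (and not more) suffice, is the technical heart of the proof.
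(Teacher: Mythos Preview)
Your reduction via Remark \ref{rem-expr-res} to estimates of the form $\nr{\RNt}_{\Lc(H^{\b_2}(\O)',H^{\b_1}(\O))}\lesssim \abs\t^{1+\b_1+\b_2}$ is exactly what the paper does, and you correctly identify that the loss of two powers of $\abs{\Re z}$ is forced by the failure of G.C.C. You also correctly see that the imaginary-part identity gives $\t\int_\O a\abs u^2 \leq \nr f\nr u$, which is one of the two ingredients.

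Where your route diverges from the paper is in how to close the estimate on $\nr u$. You aim for a semiclassical measure contradiction argument, run into the expected obstruction (measures supported on $\{x=x_0\}\times\o$ with zero $x$-momentum), and then reach for Carleman or multiplier estimates to absorb the trapped modes at a $\abs\t^2$ price. This is reasonable and in the spirit of \cite{Burq-Hi-07}, but it is vaguer than what the paper actually does, and it misses a simplification that avoids Carleman estimates entirely.

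The paper's key step is the \emph{a priori} inequality
\[
\nr u_{L^2(\O)}^2 \lesssim \nr f_{L^2(\O)}^2 + \t^2 \int_\O a\abs u^2,
\]
which, combined with $\t^2\int_\O a\abs u^2 \leq \t\nr f\nr u$, immediately gives $\nr u \lesssim \abs\t\nr f$. The two powers of $\t$ appear right here, algebraically. To prove this inequality the paper does \emph{not} stay on $\O$: it first replaces $a$ by a smaller $\a(x)\leq a(x,y)$ depending only on $x$ with $\a\geq c_0$ outside a compact (this is allowed because the inequality for $\a$ implies the one for $a$), and then separates variables. For each transverse mode $k$ one is left with $(-\Dx - i\t\a - (\t^2-\l_k))u_k = f_k$ on $\R^d$, and on $\R^d$ the absorption $\a$ \emph{does} satisfy G.C.C. (every trajectory goes to infinity), so Proposition \ref{prop-estim-res-Rd} gives a \emph{uniform} bound $\nr{u_k}\lesssim \nr{f_k} + \abs\t\nr{\sqrt\a u_k}$ with no loss. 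Summing over $k$ yields the inequality above. Thus the trapped geometry is never confronted head-on; it is sidestepped by projecting onto an $x$-dependent absorber for which the Euclidean problem is non-trapping. Your proposal (i)--(ii) gestures at separation of variables but still invokes Carleman estimates for the delicate modes, which is unnecessary.
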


By Remark \ref{rem-expr-res}, Theorem \ref{th-high-freq} is a direct consequence of the following result (with $H^0(\O) \simeq H^0(\O)' := L^2(\O)$):

\begin{proposition} \label{prop-RNt-high-freq}
Let $\b_1,\b_2 \in \{0,1\}$. There exist $\t_0 > 0$ and $c \geq 0$ such that for any $\t \in \R \setminus [-\t_0,\t_0]$ we have 
\[
\nr{\RNt}_{\Lc(H^{\b_2}(\O)',H^{\b_1}(\O))} \leq c\t^{1+\b_1+\b_2}.
\]
\end{proposition}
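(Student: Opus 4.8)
The plan is to reduce the estimate for $\RNt = (-\DN - i\tau a - \tau^2)\inv$ to a resolvent estimate for a Schrödinger operator with large imaginary part and then run a semiclassical/perturbative argument. First I would write $\tau^2 = \tau^2 + i\tau - i\tau$ and isolate the damping: the key point is that $-\DN - i\tau a - \tau^2$ is a perturbation of $-\DN - i\tau - \tau^2$ by the operator $-i\tau(a-1)$, which by hypothesis \eqref{hyp-amort-inf} is $\langle x\rangle^{-\rho}$-decaying (hence relatively compact, already used in Proposition \ref{prop-RNt}). The operator $-\DN - i\tau - \tau^2$ is easy: since $-\DN$ is selfadjoint nonnegative, $\operatorname{dist}(\tau^2 + i\tau, \sigma(-\DN)) \geq |\tau|$, so $\|(-\DN - i\tau - \tau^2)\inv\|_{\Lc(L^2)} \leq |\tau|\inv$, and the $H^1$ mapping bounds follow from the quadratic form as in Proposition \ref{prop-res-heat}. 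The substance of the proposition is therefore to control the perturbation uniformly as $|\tau| \to \infty$.

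Next I would recast this as a semiclassical problem with $h = |\tau|\inv$. Dividing by $\tau^2$, the equation $(-\DN - i\tau a - \tau^2)u = f$ becomes $(-h^2\DN - i h a - 1)u = h^2 f$, i.e. a semiclassical Schrödinger operator at energy $1$ with a lower-order absorbing term $-iha$. The claimed bound $\|\RNt\|_{\Lc(L^2)} \lesssim |\tau| = h\inv$ amounts to $\|(-h^2\DN - i h a - 1)\inv\|_{\Lc(L^2)} \lesssim h\inv$; the $H^1$-variants with gains/losses of $|\tau|$ match the natural semiclassical $H^1_h$ scaling (one factor of $h\inv$ per derivative, per each $\beta_i$). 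The heart of the matter is a limiting absorption principle / non-trapping-type estimate for $-h^2\DN - iha - 1$ near energy $1$, uniform down to the real axis. Since the absorption index $a$ converges to $1 > 0$ at infinity, the term $-iha$ is elliptic (uniformly positive) outside a compact set in $x$; so at infinity the operator is automatically invertible with the right bound, and one only needs to handle a bounded region of $\R^d$ in the $x$-variable (cross-section $\o$ stays bounded throughout). This is exactly the situation of \cite{Burq-Hi-07, Nishiyama-09} — a wave equation that is damped at infinity but with a bounded ``bad'' region — and I would adapt their Carleman-estimate or contradiction-compactness argument: suppose the bound fails, extract a sequence of quasimodes, show by the ellipticity of $a$ at infinity that they concentrate in a fixed compact set, then use propagation of semiclassical defect measures together with unique continuation (exactly as in \eqref{estim-im-part}) to derive a contradiction. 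The loss of a power of $\tau$ (i.e. $\|\RNt\|_{\Lc(L^2)} \lesssim |\tau|$ rather than $|\tau|^{-1+\varepsilon}$) encodes precisely the absence of G.C.C. — it is the polynomial loss typical of trapped-but-damped geometries, and it enters the final energy estimate through the $\ln(t)^{k/2+1}/t^{k/2}$ term.

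Finally, to assemble the proposition: once the $L^2 \to L^2$ bound $\|\RNt\|_{\Lc(L^2)} \lesssim |\tau|$ is in hand for $|\tau| \geq \tau_0$, the mapping bounds into $H^1$ and out of $H^1(\O)'$ come from the resolvent identity and interpolation/duality, just as in the proof of Proposition \ref{prop-RNt}: writing $\RNt = (-\DN + 1)\inv + \RNt(i\tau a + \tau^2 - 1)(-\DN+1)\inv$ and using $\|\nabla u\|^2 = \operatorname{Re}\langle(-\DN+1)u, u\rangle - \|u\|^2$ together with the already-established $L^2$ bound gives $\|\RNt\|_{\Lc(L^2, H^1)} \lesssim |\tau|^2$ and, by duality (the operator is the transpose of a similar one with $\tau$ replaced by $-\tau$), $\|\RNt\|_{\Lc(H^1(\O)', L^2)} \lesssim |\tau|^2$; composing, $\|\RNt\|_{\Lc(H^1(\O)', H^1)} \lesssim |\tau|^3$. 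The main obstacle is the uniform-in-$h$ resolvent estimate for the semiclassical operator near energy $1$ without a geometric control assumption; everything else is bookkeeping with quadratic forms, the resolvent identity, and duality.
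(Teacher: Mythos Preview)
Your proposal has a genuine gap at exactly the point you identify as the main obstacle. First, a bookkeeping correction: since $-\DN - i\t a - \t^2 = h^{-2}(-h^2\DN - iha - 1)$, the target bound $\nr{\RNt}_{\Lc(L^2)} \lesssim \abs\t$ is equivalent to $\nr{(-h^2\DN - iha - 1)\inv} \lesssim h^{-3}$, not $h^{-1}$. The $h^{-1}$ bound you write down is precisely the G.C.C.\ estimate, and it is false here in general: that is the content of the ``loss of a power of $\t$'' you mention a few lines later. More seriously, the contradiction--compactness argument you sketch (extract quasimodes, localize by the absorption at infinity, propagate a defect measure to a contradiction) does not close on $\O = \R^d \times \o$. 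The broken bicharacteristics with $\x_x = 0$ stay forever in $\{x_0\} \times \bar\o$, and if $a(x_0,\cdot) \equiv 0$ for some $x_0$ they never meet the damping; a semiclassical measure supported on such a ray is perfectly consistent with propagation invariance and with $\int a\,d\m = 0$, so there is nothing to contradict. Unique continuation as in \eqref{estim-im-part} applies to genuine eigenfunctions, not to defect measures concentrated on an invariant set. (Also, running measures up to the Neumann boundary of $\o$ is itself nontrivial.) The perturbative opening move from $a \equiv 1$ does not help either: the perturbation $-i\t(a-1)$ has size $\abs\t$, exactly cancelling the $\abs\t^{-1}$ bound on the unperturbed resolvent, so no Neumann series converges.

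The paper avoids the trapped-ray difficulty entirely by separation of variables, which is the key idea you are missing. It first replaces $a(x,y)$ by some $\a(x) \leq a$ that depends only on $x$ and is still bounded below at infinity (this costs only a harmless extra $\t^2\int(a-\a)\abs u^2$ on the right of the basic estimate). Once $a = \a(x)$, the equation decouples over the Neumann eigenbasis $(\f_k)$ of $\o$: writing $u = \sum u_k \otimes \f_k$ one gets $(-\Dx - i\t\a - (\t^2 - \l_k))u_k = f_k$ on $\R^d$. Now one is on $\R^d$, where every classical trajectory reaches the region $\{\a \geq c_0\}$, so the defect-measure argument does work and yields $\nr{(-\Dx - i\s\a - \s^2)\inv}_{\Lc(L^2(\R^d))} \lesssim \s\inv$ uniformly for large $\s$ (this is Proposition~\ref{prop-estim-res-Rd}). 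Applying this with $\s = \sqrt{\t^2-\l_k}$ (and treating the modes with $\t^2 - \l_k$ small or negative by accretivity) gives $\nr{u_k} \lesssim \nr{f_k} + \abs\t \nr{\sqrt\a\, u_k}$ for each $k$, and summing yields the pivotal inequality $\nr u^2 \lesssim \nr f^2 + \t^2\int_\O a\abs u^2$. Combined with $\t\int_\O a\abs u^2 \leq \nr f\nr u$ this gives $\nr u \lesssim \abs\t\nr f$; the $H^1$ upgrades then follow from the quadratic-form identity $\nr{\nabla\RNt\vf}^2 = \innp{\vf}{\RNt\vf} + i\t\innp{a\RNt\vf}{\RNt\vf} + \t^2\nr{\RNt\vf}^2$ and duality, close to what you wrote.
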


For the proof of Proposition \ref{prop-RNt-high-freq} we use the same kind of ideas as in \cite{Burq-Hi-07}. After a separation of variables, the problem will reduce to a similar problem on the Euclidean space. We will need the following estimate, which will be discussed in Appendix \ref{sec-high-freq-Rd}:

\begin{proposition} \label{prop-estim-res-Rd}
Let $\a$ be a non-negative valued function on $\R^d$ such that $\a \geq c_0$ for some $c_0 > 0$ outside some bounded subset of $\R^d$. Then there exist $\t_0 \geq 0$ and $c \geq 0$ such that for $\abs \t \geq \t_0$ we have 
\[
\nr{(-\Dx  - i\t \a - \t^2)\inv}_{\Lc(L^2(\R^d))} \leq \frac c \t.
\]
\end{proposition}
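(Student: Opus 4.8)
The plan is to prove Proposition~\ref{prop-estim-res-Rd} by a contradiction/compactness argument combined with a Carleman-type or standard resolvent estimate for the perturbed Helmholtz operator at high frequency. Write $z^2 = \t^2$ with $\t$ real and large, and set $h = 1/|\t|$ so that after dividing by $\t^2$ the operator becomes $-h^2\Dx - ih\s\a - 1$ with $\s = \mathrm{sgn}(\t)\in\{\pm1\}$; this is a semiclassical Helmholtz operator with a lower-order absorbing term $-ih\s\a$ whose real part of the ``potential'' vanishes. The claim $\nr{(-\Dx-i\t\a-\t^2)\inv}\leq c/\t = ch$ is then exactly the semiclassical estimate $\nr{u}_{L^2}\leq C h^{-1}\nr{(-h^2\Dx - ih\s\a - 1)u}_{L^2}$, i.e.\ a lossless-in-$h$ resolvent bound of the type one gets when the damping $\a$ satisfies the geometric control condition outside a compact set; here every classical trajectory of the free flow escapes to infinity (the symbol is $|\x|^2-1$, straight lines) and hence eventually enters $\{\a\geq c_0\}$, so GCC-at-infinity holds trivially.

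First I would reduce to proving the a priori estimate: suppose it fails, then there are $h_m\to 0$, $\s_m\in\{\pm1\}$ (pass to a subsequence so $\s_m$ is constant, $=\s$), and $u_m\in H^2(\R^d)$ with $\nr{u_m}_{L^2}=1$ and $f_m := (-h_m^2\Dx - ih_m\s\a - 1)u_m$ satisfying $\nr{f_m}_{L^2} = o(h_m)$. Pairing with $u_m$ and taking imaginary parts gives $h_m\int \a|u_m|^2 = \Im\innp{f_m}{u_m} = o(h_m)$, hence $\int \a|u_m|^2 \to 0$, so $u_m\to 0$ in $L^2$ of any region where $\a\geq c_0$, in particular on $\{|x|\geq R_0\}$ for $R_0$ large. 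Next I would upgrade this: since $u_m$ is concentrated (in the limit) in $\{|x|\leq R_0\}$ and satisfies an equation with $-1$ as the ``energy'', standard semiclassical propagation of singularities / non-trapping arguments (or a direct Rellich--Morawetz multiplier argument with multiplier $x\cdot\nabla$) force $\nr{u_m}_{L^2(|x|\leq R_0)}\to 0$ as well, because the defect measure of $(u_m)$ is an invariant measure of the free geodesic flow supported in $\{|\x|^2=1\}$ that must be supported in $\{|x|\leq R_0\}$; but no such invariant measure exists (trajectories escape), so the measure is zero, contradicting $\nr{u_m}_{L^2}=1$. This contradiction proves the estimate, and then $(-\Dx-i\t\a-\t^2)$ being injective with the stated a priori bound and having closed range (Fredholm via relatively compact perturbation of $-\Dx-\t^2$, whose essential spectrum away from $0$ is under control) gives that it is boundedly invertible with norm $\leq c/\t$.

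As an alternative that avoids defect measures entirely and may be cleaner to write in an appendix, I would use a global Rellich/Morawetz identity directly: multiply $(-\Dx - i\t\a-\t^2)u = f$ by $\overline{(x\cdot\nabla u) + \tfrac d2 u}$ (the generator of dilations) and integrate by parts over $\R^d$. The principal terms produce, after taking real parts, a coercive expression controlling $\nr{\nabla u}_{L^2}^2$ and, crucially, $\t^2\nr{u}_{L^2}^2$ on bounded regions, while the damping term $-i\t\a$ contributes a term that is controlled by $\t\int\a(|u|^2 + |x||u||\nabla u|)$, handled by the energy identity bound $\t\int\a|u|^2 \leq |\innp fu|$ together with Cauchy--Schwarz; boundary-at-infinity terms vanish since $u\in H^2$ decays. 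This yields directly $\t^2\nr{u}_{L^2}^2 \lesssim \t\nr{f}_{L^2}\nr{u}_{L^2} + \nr{f}_{L^2}^2 + (\text{lower order})$, hence $\t\nr{u}_{L^2}\lesssim\nr{f}_{L^2}$, which is the desired bound; one must be a little careful that the growing weight $x\cdot\nabla$ does not spoil integrability, which is where the hypothesis that $\a$ is bounded below outside a compact set is used to absorb the bad region.

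The main obstacle I anticipate is the step that rules out a nonzero concentrated limit on the bounded set $\{|x|\leq R_0\}$: on that set there is no damping to help, and one genuinely needs to exploit that the free flow is non-trapping (trajectories are straight lines), either through propagation of semiclassical defect measures or through a carefully signed multiplier identity. Getting the error terms in the Morawetz identity to be genuinely lower order (better by a power of $\t$, or absorbable) is the delicate bookkeeping; the rest — the energy identity, the Fredholm/closed-range argument, and the reduction to $\s$ constant — is routine. Since this is deferred to Appendix~\ref{sec-high-freq-Rd}, I would present the multiplier version in full there and only cite it here.
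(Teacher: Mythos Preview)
Your main approach via contradiction and semiclassical defect measures is essentially identical to the paper's first proof: the same semiclassical rescaling $h=1/|\t|$, the same use of the imaginary part to obtain $\int \a|u_m|^2\to 0$, and the same argument that the defect measure is invariant under the free flow $\vf^t(x,\x)=(x+2t\x,\x)$ and supported in a compact set in $x$, hence zero. The paper also offers a second proof, but via Mourre's commutator method with a localized (bounded) conjugate operator $A_h=\Opw\big((x\cdot\x)\h(x)\tilde\h(|\x|^2)\big)$ rather than your direct Morawetz multiplier identity; both routes exploit the generator of dilations, and your alternative would also work, though the Mourre version avoids the integrability issues you flag with the unbounded weight $x\cdot\nabla$.
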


With this estimate we can prove Proposition \ref{prop-RNt-high-freq}:

\begin{proof} [Proof of Proposition \ref{prop-RNt-high-freq}]

\stepp Assume that we have proved that for $\t \in \R \setminus \singl 0$, $f \in L^2(\O)$ and $u = \RNt f$ we have 
\begin{equation} \label{estim-BH}
\nr{u}_{L^2(\O)}^2 \lesssim \nr{f}_{L^2(\O)}^2 + \t^2 \int_\O a\abs u^2.
\end{equation}
By \eqref{estim-im-part} we have 
\[
\t^2 \int_\O a \abs u^2 \leq \t \nr f_{L^2(\O)} \nr u_{L^2(\O)},
\]
so there exists $C \geq 0$ which does not depend on $\t$, $f$ or $u$ such that
\[
\nr{u}_{L^2(\O)}^2 \leq \frac 12 \nr{u}_{L^2(\O)}^2 + C \t^2 \nr{f}_{L^2(\O)}^2.
\]
This yields the result when $(\b_1,\b_2) = (0,0)$. 
For $\vf \in L^2(\O)$ we have 
\begin{equation} \label{nabla-RNt}
\nr{\nabla \RNt \vf}^2 = \innp{\vf}{\RNt \vf} + i \t \innp{a \RNt \vf}{\RNt \vf} + \t^2 \nr{\RNt \vf}^2.
\end{equation}
This gives the case $(\b_1,\b_2) = (1,0)$. The case $(\b_1,\b_2) = (0,1)$ follows by duality, and for the case $(\b_1,\b_2) = (1,1)$ we use \eqref{nabla-RNt} again. Thus the conclusion will follow from the proof of \eqref{estim-BH}.

\stepp First assume that \eqref{estim-BH} is proved when the absorption index only depends on $x \in \R^d$. There exists $\a$ such that $0 \leq \a \leq a$ on $\O$, $\a \geq c_0 >0$ outside some bounded subset of $\O$ and $\a$ only depends on $x \in \R^d$. We have 
\[
(-\D -i\t \a - \t^2) u = f + i\t(a-\a) u.
\]
By \eqref{estim-BH} applied with $a$ replaced by $\a$ we get 
\[
\nr{u}_{L^2(\O)}^2 \lesssim \nr{f}_{L^2(\O)}^2 + \t^2 \nr{(a-\a) u}_{L^2(\O)}^2 + \t^2 \int_\O \a \abs u^2.
\]
Since $a - \a$ is bounded, this yields
\[
\nr{u}_{L^2(\O)}^2 \lesssim \nr{f}_{L^2(\O)}^2 + \t^2 \int_\O (a-\a) \abs u^2 + \t^2 \int_\O \a \abs u^2.
\]
This gives \eqref{estim-BH} for $a$. 

\stepp It remains to prove \eqref{estim-BH} when $a$ only depends on $x$. In this case we can apply the separation of variables of Section \ref{sec-separation-variables}. 
Let $\seq u k$ and $\seq f k$ be the sequences in $L^2(\R^d)$ defined as in \eqref{def-uk} and \eqref{def-fk}. 
As in the proof of Proposition \ref{prop-RNt} we observe that the real axis is included in the resolvent set of the operator $-\Dx - i\t a$. In particular \eqref{res-sep-variables} applied with $\a = \t a$ holds for any $\z \in \R$, so for all $k \in \N$ we have in $L^2(\R^d)$
\[
u_k = \big( -\Dx  -i\t a - (\t^2-\l_k) \big)\inv f_k.
\]
Let $k\in\N$. The operator $-\Dx -i\t a$ is maximal accretive (if $\t > 0$ it is accretive and maximal dissipative, and if $\t < 0$ we consider its adjoint) so if $\t^2 - \l_k \leq - 1$ we have
\[
\nr{u_k}_{L^2(\R^d)} \leq \nr{f_k}_{L^2(\R^d)}.
\]
Let $\t_0$ be given by Proposition \ref{prop-estim-res-Rd}. By continuity of the resolvent $\z \mapsto (-\Dx -i \t a - \z)\inv$ there exists $C \geq 0$ such that if $\t^2 - \l_k \in [-1,\t_0^2]$ we have 
\[
\nr{u_k}_{L^2(\R^d)} \leq C \nr{f_k}_{L^2(\R^d)}.
\]
It remains to consider the case $\t^2 - \l_k \geq \t_0^2$. Let $\s = \sqrt {\t^2 - \l_k} \in [\t_0,\t]$. We have 
\[
(-\Dx - i\s a - \s^2) u_k = f_k + i (\t-\s)a u_k.
\]
By Proposition \ref{prop-estim-res-Rd} we obtain 
\[
\nr{u_k}_{L^2(\R^d)} \lesssim \nr{f_k}_{L^2(\R^d)} + \abs \t \nr{au_k}_{L^2(\R^d)} \lesssim \nr{f_k}_{L^2(\R^d)} + \abs \t \nr{\sqrt a u_k}_{L^2(\R^d)}.
\]
Finally 
\begin{align*}
\nr{u}_{L^2(\O)}
& = \sum_{k\in\N} \nr{u_k}_{L^2(\R^d)}^2 \lesssim \sum_{k\in\N} \nr{f_k}_{L^2(\R^d)}^2 + \t^2 \nr{\sqrt a u_k}_{L^2(\R^d)}^2 \lesssim \nr{f}_{L^2(\O)}^2 + \t^2 \nr{\sqrt a u}_{L^2(\O)}^2.
\end{align*}
This is \eqref{estim-BH}, and the proof of the proposition is complete. 
\end{proof}

\begin{remark} \label{rem-triv-high-freq}
If $a \equiv 1$ then in particular GCC holds and we easily get better high frequency resolvent estimates. If $\t \in \R \setminus \singl 0$, $u \in H^2(\O)$ and $f \in L^2(\O)$ are such that 
\[
(-\D -i\t - \t^2) u = f,
\]
then
\[
\t \nr{u}_{L^2(\O)}^2 = - \Im \innp{f}{u} \leq \nr f \nr u,
\]
from which we deduce that 
\[
\nr{\RNt}_{\Lc(L^2(\O)} \lesssim \frac 1 \t
\]
and consequently that $(\AcN-\t)\inv$ is uniformly bounded for $\abs \t \geq 1$.
\end{remark}

\subsection{Contribution of low frequencies.} \label{sec-low-freq}

In this section we study the resolvent $(\AcN-z)\inv$ when $z$ is close to 0. As said in introduction, for large times the solution $u(t)$ of \eqref{wave-neumann} behaves like the solution of \eqref{heat} if the initial condition is regular enough. And this is due to the contribution of low frequencies.
This can be seen from the resolvent point of view. More precisely, we prove in Theorem \ref{th-low-freq} than for $z$ small and in a suitable sense the resolvent $(\AcN-z)\inv$ is close to
\begin{equation} \label{def-RcC}
\RcC := 
\begin{pmatrix}
i \Rcz P_0 a & \Rcz P_0 \\
iz \Rcz P_0 a & z\Rcz P_0
\end{pmatrix},
\end{equation}
where $\Rcz$ is the resolvent corresponding to the heat equation (the projection $P_0$ appears in the initial data in \eqref{heat}, and the second row in $\RcC$ corresponds to the time derivative, which explains the extra factor $z$).\\

In order to prove the estimates on $u-v$ in Theorem \ref{th-chaleur}, we need estimates for the difference between $(\AcN-z)\inv$ and $\RcC$ for $z$ small. When it has a sense we set 
\begin{equation} \label{def-Th}
\Th(z)  = 
\begin{pmatrix}
\Th_1(z) & \Th_2(z) \\
\Th_3(z) & \Th_4(z)
\end{pmatrix}
:= (\AcN-z)\inv - \RcC.
\end{equation}

\begin{theorem} \label{th-low-freq}
Let $s_1,s_2 \in \big[0,\frac d 2\big]$, $\d_1 > s_1$ and $\d_2 > s_2$. Let $s \in ]0,\min(2,\rho,d)[$. Let $m \in \N$. Then there exists $C \geq 0$ such that for $j \in \{1,2\}$ and $z \in \C_+$ with $\abs z \leq 1$ we have 
\begin{equation} \label{estim-Th12}
\nr{\pppg x^{-\d_1} \nabla \Th_j^{(m)}(z) \pppg x^{-\d_2}}_{\Lc(L^{2}(\O))} \leq C \left( 1 + \abs z^{\frac 12 (s_1 + s_2 + s -1) - m }\right),
\end{equation} 
and for $j \in \{3.4\}$:
\begin{equation} \label{estim-Th34}
\nr{\pppg x^{-\d_1} \Th_j^{(m)}(z) \pppg x^{-\d_2}}_{\Lc(L^{2}(\O))} \leq C \left( 1 + \abs z^{\frac 12 (s_1 + s_2 + s) - m }\right).
\end{equation}
\end{theorem}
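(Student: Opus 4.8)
The plan is to reduce the full resolvent difference $\Theta(z) = (\AcN-z)^{-1} - \RcC$ to a statement about the scalar resolvent $\RNz = (-\DN - iaz - z^2)^{-1}$ compared with the heat resolvent $\Rcz = (-\DN - iz)^{-1}$, composed with the projection $P_0$. Using the explicit formula \eqref{expr-res} for $(\AcN-z)^{-1}$ and the definition \eqref{def-RcC} of $\RcC$, each entry $\Theta_j(z)$ is a combination of $\RNz$, $\Rcz P_0$ and algebraic factors ($z$, $iza$, $z^2$, $I$). So the first step is to collect these formulas and observe that, after differentiating $m$ times in $z$, everything is controlled once we know how $\RNz - \Rcz P_0$ and $\RNz$ itself act between the weighted spaces $\pppg x^{-\d_1} L^2 \to \pppg x^{-\d_2} L^2$ (and after one factor of $\nabla$ for the first two entries). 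The weights $\d_j > s_j$ are exactly what is needed to absorb the $L^2$-boundedness of $\pppg x^{-s_j} (-\Dx - iz)^{-1/2}$-type factors at low frequency on $\R^d$, where the diffusive scaling $\abs z^{-(s_1+s_2)/2}$ comes from; this is the standard low-frequency smoothing estimate and I would cite or prove it via the separation of variables of Section \ref{sec-separation-variables} together with Proposition \ref{prop-res-heat}.

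The core analytic step is a resolvent identity decomposition. Write
\[
\RNz = \Rcz - \Rcz \big( i z (a-1) + (i z + z^2) \big) \RNz,
\]
and iterate, separating the ``heat part'' $-\DN - iz$ from the perturbation $W(z) := iz(a-1) + z^2$. The term $iz(a-1)$ is the genuinely long-range piece: by \eqref{hyp-amort-inf} it is $O(\abs z \pppg x^{-\rho})$, and the factor $\pppg x^{-\rho}$ (with $\rho > s$) supplies the extra spatial decay that lets us iterate the smoothing estimate and gain a power of $\abs z$ at each step — this is where the constraint $s < \rho$ enters. The term $z^2$ is short-range in $z$ (two extra powers) which is why $s < 2$ suffices there. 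Then I would split $\RNz$ (hence $\Theta$) using a spectral cutoff $\1_{[0,\abs z]}(-\TN) = P_0$ versus $P_0^\perp = \1_{[\l_1,\infty)}(-\TN)$ on the transverse Laplacian: on the range of $P_0^\perp$ one has $-\DN - iz - z^2 \geq \l_1 - O(\abs z)$ so that resolvent is uniformly bounded (contributing only the ``$1+$'' in the estimates, i.e.\ no blow-up), while on the range of $P_0$ the operator reduces to $(-\Dx - iz)$ up to the $W(z)$ perturbation, and the $\R^d$ heat estimates of Proposition \ref{prop-res-heat} give the claimed powers. Differentiation in $z$ is handled by the product rule: each derivative either hits a resolvent factor (producing one more copy of a resolvent, i.e.\ one more negative power of $\abs z$ on the $P_0$ part, consistent with the $-m$ in the exponent) or hits $W(z)$ (harmless). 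One must keep track that $\Theta_j$ for $j\in\{3,4\}$ carries an extra factor $z$ (the time-derivative row), which shifts the exponent by $+\tfrac12 \cdot 2 \cdot \tfrac12$? — more precisely it removes the $-1$ appearing in \eqref{estim-Th12}, giving \eqref{estim-Th34}; this bookkeeping should be done carefully entry by entry.

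The main obstacle I anticipate is the low-frequency analysis on the range of $P_0$, i.e.\ proving the weighted smoothing estimates for $(-\Dx - iz)^{-1}$ on $\R^d$ with the sharp diffusive powers $\abs z^{(s_1+s_2+s-1)/2 - m}$, including the borderline Sobolev exponents $s_j = d/2$, and the rigorous justification of the iteration of the resolvent identity with the long-range weight $\pppg x^{-\rho}$ — one must ensure the Neumann series in $\abs z$ genuinely converges (or truncate it after finitely many steps and estimate the remainder), and that the cross terms between the $y$-dependent part of $a$ (which does not respect the product structure) and the projection $P_0$ are controlled. Concretely, $P_0 (a-1) \neq (a-1) P_0$, so commutators $[P_0, a]$ appear; these are bounded with the decaying weight and produce $P_0^\perp$ factors, hence are again in the ``harmless'' regime, but this has to be spelled out. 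I would organize the proof by first establishing a clean lemma: for $\sigma_1,\sigma_2 \geq 0$, $\delta_i > \sigma_i$, $\sigma \in [0,\min(2,d)[$, and $W$ a multiplication operator with $\abs{W} \lesssim \pppg x^{-\rho}$, the operator $\pppg x^{-\d_1}(-\Dx-iz)^{-1} W (-\Dx - iz)^{-1} \pppg x^{-\d_2}$ (and its $\nabla$- and $z$-derivative variants) obeys the bound with exponent shifted by $+\tfrac12\min(\rho,\cdot)$ relative to the single resolvent; then Theorem \ref{th-low-freq} follows by feeding the resolvent identity into this lemma and summing the transverse modes via Section \ref{sec-separation-variables}.
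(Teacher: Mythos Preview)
Your plan correctly identifies the building blocks (the matrix formula \eqref{expr-res}, the resolvent identity comparing $\RNz$ to $\Rcz$, the role of $P_0^\bot$ as the harmless part, and the smoothing estimates of Proposition~\ref{prop-res-heat}), and your bookkeeping for the exponents and for the role of the constraints $s<\rho$, $s<2$, $s<d$ is right. But the route you sketch is genuinely different from the paper's, and the obstacle you yourself flag is exactly the one the paper's method is designed to avoid.

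The paper does \emph{not} iterate the resolvent identity and sum a (truncated) Neumann series in weighted $L^2$. Instead it uses a scaling/dilation argument: with $\Phi_z u(x,y)=\abs z^{d/4}u(\abs z^{1/2}x,y)$ one writes $\RNz=\abs z^{-1}\Phi_z\hRNz\Phi_z^{-1}$, where $\hRNz=(-\Dz-i\hat z\hat a_z-z\hat z)^{-1}$ is a \emph{rescaled} resolvent. The key step is that after rescaling, the perturbation $\hat a_{0,z}=\Phi_z^{-1}(a-1)\Phi_z$ becomes small in $\Lc(H^{\s+1}L^2,H^{\s-1}L^2)$: by Proposition~\ref{prop-dec-sob} (decay of a function becomes Sobolev regularity after dilation), $\nr{\hat a_{0,z}}\lesssim\abs z^{s/2}$. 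Hence $\hRNz$ is a small perturbation of $(-\Dz-i\hat z)^{-1}$ and is \emph{uniformly bounded} in $\Lc(H^{\s-1}L^2,H^{\s+1}L^2)$ for $\abs z$ small. The weights $\pppg x^{-\d_j}$ are then traded for homogeneous Sobolev regularity via the embeddings $\pppg x^{-\d_j}\in\Lc(\dot H^{\s_j}L^2,L^2)$, and the powers of $\abs z$ come out of $\nr{\Phi_z}_{\dot H^\s L^2}$, $\nr{\Phi_z^{-1}}_{\dot H^{-\s}L^2}$ and Proposition~\ref{prop-res-heat}. No iteration is needed: a single uniform bound \eqref{estim-hatRN} on the full $\hRNz$ closes the argument.

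In your approach, by contrast, the Neumann remainder $(\Rcz W(z))^N\RNz$ still contains $\RNz$, for which you only have the unweighted bound $\nr{\RNz}_{\Lc(L^2)}\lesssim\abs z^{-1}$; gaining enough powers of $\abs z$ from the preceding $\Rcz(a-1)$ factors to absorb that $\abs z^{-1}$ is possible but delicate (you must track Sobolev indices through many factors and stay inside $]-d/2,d/2[$, and the $z^2$ branch of $W$ gives no spatial weight to exploit). Your scheme can be made to work, but the scaling argument replaces all of that by a single perturbation estimate and is what the paper actually does; you may want to learn that trick, as it recurs throughout low-frequency analysis (cf.\ \cite{bouclet11,boucletr14}). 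Note also that the paper does not project $\RNz$ onto $\Ran P_0$ versus $\Ran P_0^\bot$; the only place $P_0^\bot$ enters is the first term $i(-\DN-iz)^{-1}P_0^\bot a$ of $\Th_1$, which is holomorphic near $0$ and dispatched immediately.
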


The proof of Theorem \ref{th-low-freq} is based on a scaling argument as in \cite{bouclet11,boucletr14,royer-dld-energy-space}.
For $\s \in \R$ we have defined the partial homogeneous Sobolev space $\dot H^\s L^2(\O)$ in \eqref{def-Hs}. We similarly define $H^\s L^2(\O)$, endowed with the norm 
\[
\nr{u}_{H^\s L^2(\O)}^2 = \int_{y \in \o} \nr{u(\cdot,y)}_{H^\s(\R^d)}^2 \, dy = \nr{(-\Dx + 1)^{\frac \s 2} u}_{L^2(\O)}^2.
\]
For $z \in \C_+$, $u \in C_0^\infty(\bar \O)$ and $(x,y) \in \O = \R^d \times \o$ we set 
\[
(\Phi_z u)(x,y) = \abs{z}^{\frac d 4} u \big( \abs z^{\frac 12} x, y \big)
\]
(notice that $\Phi_z$ only depends on $\abs z$). The function $\Phi_z$ extends to a bounded map on $\dot H^\s L^2(\O)$ for any $\s \geq 0$ and we have
\begin{equation} \label{nr-Phiz-1}
\nr{\Phi_z}_{\Lc(\dot H^\s L^2(\O))} = \abs{z}^{\frac d 4 + \frac \s 2}
\end{equation}
(we can see this directly for $\s \in \N$ and by interpolation for the general case). We also have 
\begin{equation} \label{nr-Phiz-2}
\nr{\Phi_z^*}_{\Lc(\dot H^{-\s} L^2(\O))} = \nr{\Phi_z\inv}_{\Lc(\dot H^{-\s} L^2(\O))} = \nr{\Phi_{z\inv}}_{\Lc(\dot H^{-\s} L^2(\O))} = \abs{z}^{-\frac d 4 + \frac \s 2}.
\end{equation}

The following proposition generalizes the idea that, as in the Hardy inequality, the multiplication by a decaying function behaves in some sense like a derivative, which is small for low frequencies.

\begin{proposition}[Proposition 7.2 in \cite{boucletr14}] \label{prop-dec-sob}
Let $s \geq 0$, $\rho > s$ and $\s \in \big] -\frac d 2, \frac d 2\big[$ be such that $\s -s \in \big] -\frac d 2, \frac d 2\big[$. Let $m \in \N$ be greater than $\frac d 2$. Let $\vf \in C^m(\R^d)$ be such that 
\[
\nr{\vf}_\rho := \sup_{\abs \b \leq m} \sup_{x \in \R^d} \abs{\pppg x^{\rho + \abs \b} \partial^\b \vf(x)} < + \infty.
\]
Then there exists $C \geq 0$ (which does not depend on $\vf$) such that for $\t > 0$ we have
\[
 \nr{\Phi_\t \inv \vf \Phi_\t}_{\Lc(H^\s(\R^d),H^{\s-s}(\R^d))} \leq C \t^{\frac s 2} \nr \vf_{\rho}.
\]
\end{proposition}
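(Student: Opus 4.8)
The plan is to reduce the statement to a scale-invariant multiplication estimate. First, a direct computation with the definition of $\Phi_\t$ shows that on $\R^d$ the operator $\Phi_\t\inv\vf\Phi_\t$ is just multiplication by the rescaled function $\vf_\t := \vf(\t^{-1/2}\,\cdot\,)$. The role of the rescaling is that it converts the decay of $\vf$ into a gain of powers of $\t$: since $\pppg{\t^{-1/2}x}\geq\t^{-1/2}\pppg x$ when $\t\leq1$, the chain rule gives $\nr{\vf_\t}_\rho\lesssim\t^{\rho/2}\nr{\vf}_\rho$, hence $\nr{\vf_\t}_\rho\lesssim\t^{s/2}\nr{\vf}_\rho$ for $\t\leq1$ because $\rho>s$. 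For $\t\geq1$ one simply bounds the norm of multiplication by $\vf_\t$ on $H^\s(\R^d)$ by $\nr{\vf_\t}_{W^{m,\infty}}\lesssim\nr{\vf}_\rho\leq\t^{s/2}\nr{\vf}_\rho$ (using $m>\tfrac d2>|\s|$) and composes with the trivial embedding $H^\s\hookrightarrow H^{\s-s}$. Thus it suffices to prove the $\t$-independent estimate: for every $w\in C^m(\R^d)$ with $\nr{w}_\rho<\infty$, multiplication by $w$ is bounded from $H^\s(\R^d)$ to $H^{\s-s}(\R^d)$ with norm $\lesssim\nr{w}_\rho$, the implicit constant depending only on $d,\s,s,\rho,m$.

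For this multiplication estimate the basic ingredient is a Hardy--Sobolev inequality: for $0\leq a<\tfrac d2$ one has $\nr{\pppg x^{-a}u}_{L^2(\R^d)}\lesssim\nr{u}_{H^a(\R^d)}$ (split according to $|x|\lesssim1$ and $|x|\gtrsim1$, and use the homogeneous Hardy inequality together with $\dot H^a\hookrightarrow L^{2d/(d-2a)}$). It gives at once that multiplication by $w$ maps $H^a\to L^2$ with norm $\lesssim\nr{\pppg x^a w}_{L^\infty}\leq\nr{w}_\rho$ (here $a\leq\rho$), and by duality $L^2\to H^{-a}$ with the same bound; complex interpolation between these families then yields the estimate on a subrange of the admissible indices — in particular it already covers the range needed for the uniform global energy decay. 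For the full range $\s,\s-s\in\,]-\tfrac d2,\tfrac d2[$ I would use a Littlewood--Paley / Bony paraproduct decomposition $wu=T_wu+T_uw+R(w,u)$: the paraproduct $T_wu$ carries the frequency localisation of $u$ and is bounded on every $H^\s$ with norm $\lesssim\nr{w}_{L^\infty}$, while $T_uw$ and the remainder $R(w,u)$ are controlled through the dyadic bounds on $w$ coming from $|\partial^\b w|\lesssim\pppg x^{-\rho-|\b|}$ for $|\b|\leq m$, together with Bernstein estimates for the low-frequency factors; the summability of the resulting dyadic series is exactly what the conditions $|\s|<\tfrac d2$ and $\s-s<\tfrac d2$ guarantee. (Equivalently, one can read $\pppg D^{\s-s}\,M_w\,\pppg D^{-\s}$, with $M_w$ multiplication by $w$, as a pseudodifferential operator whose symbol lies in $\nr{w}_\rho\cdot S\big(\pppg{\x}^{-s}\pppg x^{-\rho},g\big)$ for the Hörmander metric $g=\pppg x^{-2}\,dx^2+\pppg{\x}^{-2}\,d\x^2$; this weight is bounded since $s,\rho\geq0$, so the operator is $L^2$-bounded — the delicate point being that $w$ is merely $C^m$ with $m>\tfrac d2$, so one must invoke the limited-regularity version of the symbolic calculus.)

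The main obstacle is this last step. The reduction to a multiplication operator and the Hardy/duality/interpolation part are routine, but the analysis of the borderline range of Sobolev indices forces one to keep track simultaneously of the polynomial decay of $w$ and of its limited smoothness, and the index conditions $\s,\s-s\in\,]-\tfrac d2,\tfrac d2[$ are sharp: they appear precisely as the threshold of convergence for the relevant dyadic sums (equivalently, as the Hardy/Sobolev critical exponent), so there is no slack left to exploit.
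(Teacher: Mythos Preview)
The paper does not prove this proposition at all: it is quoted verbatim from \cite{boucletr14}, so there is no ``paper's proof'' to compare against. I therefore only comment on the correctness of your argument.

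There is a genuine gap in your scaling step. You claim that for $\t\leq 1$ one has $\pppg{\t^{-1/2}x}\geq \t^{-1/2}\pppg{x}$ and hence $\nr{\vf_\t}_\rho\lesssim \t^{\rho/2}\nr{\vf}_\rho$. Both assertions are false. At $x=0$ the first inequality reads $1\geq \t^{-1/2}$, which fails for $\t<1$; in fact the reverse inequality $\pppg{\t^{-1/2}x}\leq \t^{-1/2}\pppg{x}$ holds for all $x$ when $\t\leq 1$ (square both sides). As a consequence $\nr{\vf_\t}_\rho$ cannot tend to $0$: evaluating the defining supremum at $x=0$ gives $\nr{\vf_\t}_\rho\geq |\vf_\t(0)|=|\vf(0)|$ for every $\t>0$. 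Thus your reduction ``it suffices to prove the $\t$-independent multiplication estimate with norm $\lesssim\nr{w}_\rho$'' does not yield the factor $\t^{s/2}$, and the argument breaks down precisely at the point where the scaling gain is supposed to appear.

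The fix is not cosmetic. The inhomogeneous weight $\pppg{x}^\rho$ simply does not scale, so one has to build the gain $\t^{s/2}$ elsewhere. One route is to use the \emph{homogeneous} weight: the seminorms $\sup_{|\b|\leq m}\sup_{x\neq 0}|x|^{s+|\b|}|\partial^\b w(x)|$ do satisfy the exact scaling identity (giving $\t^{s/2}$), but then the multiplication lemma must be proved for this seminorm together with the uniform bounds $\nr{\partial^\b\vf}_{L^\infty}$, since the homogeneous seminorm alone does not control $w$ near $0$. Equivalently, after conjugating by $\Phi_\t$ one is reduced to bounding $\pppg{hD}^{\s-s}\vf\,\pppg{hD}^{-\s}$ on $L^2$ uniformly in the small parameter $h=\t^{1/2}$ (and with the correct gain), which is where the pseudodifferential or paraproduct machinery really has to do the work --- your sketch of that part is plausible but would need to be made uniform in $h$, which is exactly what you bypassed with the incorrect scaling claim.
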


Notice that we can similarly obtain estimates in $\Lc(H^\s L^2(\O),H^{\s-s}L^2(\O))$ if we apply this result for each fixed $y$ and integrate the estimate over $\o$.\\

For $z \in \C_+$ we set $\hat z = z / \abs z$, $\hat a_{z} =\Phi_z\inv a \Phi_z$, $a_0 = a-1$, $\hat a_{0,z} = \Phi_z\inv (a-1) \Phi_z$,
\[
-\Dz = - \frac 1 {\abs z} \Phi_z\inv \DN  \Phi_z =  -\Dx + \frac {T_N}{\abs z},
\]
and 
\[
\hRNz = \big( -\Dz - i \hat z \hat a_{z} - z \hat z \big)\inv.
\]
Then we have 
\[
\RNz = \frac 1 {\abs z} \Phi_z \hRNz \Phi_z \inv.
\]

\begin{proof}[Proof of Theorem \ref{th-low-freq}]
\stepp We first remark that it is enough to prove the result when $s_1,s_2 \in \big[0,\frac d 2 \big[$. If this is not the case, then we can apply the result with $s_1$, $s_2$ and $s$ replaced by $s_1 - \e_1$, $s_2 - \e_2$ and $s + \e_1 + \e_2$ where $\e_1,\e_2 \geq 0$ are such that $s_1 -\e_1$ and $s_2 - \e_2$ belong to $\big[0,\frac d 2 \big[$, and $s + \e_1 + \e_2 < \min(2,d,\rho)$.

\stepp We begin with the estimate of
\begin{equation} \label{def-Th1}
\Th_1(z) = i (-\DN - iz)\inv P_0^\bot a + i \big( \RNz - \Rcz \big) a + z \RNz.
\end{equation}
By \eqref{res-sep-variables} applied with $\a = 0$ we see that $\nabla (-\DN - iz)^{-1-m} P_0^\bot$ extends to a holomorphic function on a neighborhood of 0. In particular all its derivatives are uniformly bounded on a neighborhood of 0 and hence satisfy the estimates of the proposition.

\stepp For $z \in \C_+$ the resolvent identity gives 
\[
\RNz - \Rcz = iz \Rcz a_0 \RNz + z^2 \Rcz \RNz.
\]
In \eqref{def-Th1} we replace $z \RNz$ by 
\[
z \Rcz + iz^2 \Rcz a_0 \RNz + z^3 \Rcz \RNz. 
\]
This ensures that in all the terms which remain to estimate in \eqref{def-Th1} the first factor is $\Rcz$. Then, considering their derivatives of order $m$ and taking into account the gradient in \eqref{estim-Th12}, we see that we have to estimate a linear combination of terms of the form 
\begin{equation} \label{def-T}
T(z) = z^q \pppg x^{-\d_1} \partial^\b (-\DN-iz)^{-j} a_0^{\nu}  \prod_{l=1}^k \big((ia+2z)^{\th_l} \RNz \big) \pppg x^{-\d_2},
\end{equation}
where $q \in \N$, $\b = (\b_x,\b_y) \in \N^{d+n}$ (with $\abs \b = 1$, $\b_x \in \N^d$, $\b_y \in \N^n$), $j \in \N^*$, $\n \in \{0,1\}$, $k \geq \n$, $\th_1 = 0$, $\th_2,\dots,\th_k \in \{0,1\}$ and
\begin{equation} \label{ineq-m}
j + k - q - \n \leq m.
\end{equation}
For this we proceed by induction on $m$ (this is true when $m = 0$ and the left-hand side cannot increase by more than 1 for each derivation). Notice that we can forget the factor $a$ which appears in the second term of the right-hand side of \eqref{def-Th1} since it is a bounded operator which commutes with the weight $\pppg x^{-\d_2}$.

\stepp 
Let $\Zc$ be a compact of $\C$ which does not intersect $\R_+$ and is a neighborhood of $\singl{-e^{i\th}, \abs{\th} \leq \frac \pi 2}$.
Let $\s \in \big] - \frac d 2, \frac d 2 \big[$. 
As in the proof of Proposition \ref{prop-res-heat} we can check that the resolvent $(-\Dz  - \z )\inv$ is bounded in $\Lc(H^{\s-1}L^2(\O),H^{\s+1}L^2(\O))$ uniformly in $\z \in \Zc$ and $z \in \C_+$.
Let
\[
\s_1 , \s_2 \in \left] \max\left( - \frac d 2 , \s-1 \right) , \min \left( \frac d 2 , \s+1 \right) \right[
\]
be such that $\s_1 - \s_2 = s$.
By \eqref{hyp-amort-inf} and Proposition \ref{prop-dec-sob} there exists $C \geq 0$ such that for $z \in \C_+$ with $\abs z \leq 1$ and $\z \in \Zc$ we have 
\begin{multline*}
\nr{ (-\Dz  - \z) - (-\Dz  - i \hat z \hat a_{0,z} - \z)} _{\Lc(H^{\s+1}L^2(\O),H^{\s-1}L^2(\O))}\\
= \nr{\hat a_{0,z}} _{\Lc(H^{\s+1}L^2(\O),H^{\s-1}L^2(\O))}
\leq \nr{\hat a_{0,z}} _{\Lc(H^{\s_1}L^2(\O),H^{\s_2}L^2(\O))} \leq C \abs  z^{\frac s 2}.
\end{multline*}
Since $(-\Dz-\z)$ is an isomorphism from $H^{\s+1}L^2(\O)$ to $H^{\s-1}L^2(\O)$ with inverse uniformly bounded in $\z$, we deduce that for $z \in \C_+$ small enough
\[
\nr{(-\Dz  - i \hat z \hat a_{0,z} - \z)\inv}_{\Lc(H^{\s-1}L^2(\O),H^{\s+1}L^2(\O))} \lesssim 1.
\]
With $\z = i \hat z + z \hat z$ (which belongs to $\Zc$ for $z \in \C_+$ small enough) we obtain
\begin{equation} \label{estim-hatRN}
\nr{\hRNz}_{\Lc(H^{\s-1}L^2(\O),H^{\s+1}L^2(\O))} \lesssim 1.
\end{equation}

\stepp 
Now let $T(z)$ be a term of the form \eqref{def-T}. 
Let $\s_1 \in [0,s_1]$ and $\s_2 \in [0,s_2]$ be such that
\[
\s_1 + \s_2 = \min\big( 2j + 2k - \n s - \abs{\b_x}, s_1 + s_2 \big).
\]
Since $\d_1 > \s_1$, we obtain by the Sobolev embeddings that the weight $\pppg x^{-\d_1}$ defines a continuous operator from $\dot H^{\s_1} L^2(\O)$ to $L^2(\O)$. Similarly, $\pppg x^{-\d_2} \in \Lc(L^2(\O), \dot H^{-\s_2}L^2(\O))$. Moreover, since $\s_1,\s_2 \geq 0$ we have $H^{\s_1}L^2(\O) \subset \dot H^{\s_1} L^2(\O)$ and $\dot H^{-\s_2}L^2(\O) \subset H^{-\s_2} L^2(\O)$.
Thus 
\[
\nr{T(z)} \lesssim \abs z^{q} \nr{\partial^{\b} (-\DN  -i z)^{-j}}_{\Lc(\dot H^{\s_1}L^2(\O))} \nr{ a_0^\nu \prod_{l=1}^k \big( (ia + 2z)^{\th_l} \RNz \big)}_{\Lc(\dot H^{-\s_2}L^2(\O) , \dot H^{\s_1}L^2(\O))}.
\]
By Proposition \ref{prop-res-heat} we obtain
\[
\nr{T(z)} \lesssim \abs z^{q + \frac 12 - j - k }\nr{\Phi_z \hat a_{0,z}^\nu \prod_{l=1}^k \big( (i\hat a_z + 2z)^{\th_l} \hRNz \big) \Phi_z\inv}_{\Lc(\dot H^{-\s_2}L^2(\O) , \dot H^{\s_1}L^2(\O))}.
\]
By \eqref{nr-Phiz-1}-\eqref{nr-Phiz-2}, \eqref{estim-hatRN}, Proposition \ref{prop-dec-sob} for $a_{0,z}$ and finally \eqref{ineq-m} this yields
\[
\nr{T(z)} \lesssim \abs z^{q + \frac 12 - j - k + \frac {\s_1 + \s_2 + \n s}2} \lesssim 1 + \abs z ^{\frac {s_1+s_2 +s \n} 2 + \frac 12 - \n - m}.
\]
This gives \eqref{estim-Th12}. We proceed similarly for \eqref{estim-Th34}, except that there is no derivative and \eqref{ineq-m} is replaced by 
\[
1 + j + k - q - \n \leq m.
\]
The proof is complete.
\end{proof}

\begin{remark} \label{rem-tilde-s}
We see in the proof that the restriction $\tilde s < \min(d,\rho)$ in Theorem \ref{th-chaleur} comes from the restrictions in Proposition \ref{prop-dec-sob} applied to $a_0$ (in particular we can remove this assumption if $a_0 = 0$). The restriction $\tilde s \leq 2$ is due to the terms for which there is no factor $a_0$ but an extra power of $z$.
\end{remark}

\section{Energy decay} \label{sec-energy-decay}

In this section we use the resolvent estimates of Proposition \ref{prop-inter-freq} and Theorems \ref{th-high-freq} and \ref{th-low-freq} to prove Theorems \ref{th-energy-decay} and \ref{th-chaleur}. 

Because of the singularity at 0 for the resolvent and since we work in weighted spaces, we cannot use the abstract results of \cite{BorichevTo10} or \cite{Batty-Ch-To} to convert resolvent estimates for $\AcN$ into estimates for the propagator (this will be possible in the next section for the similar problem with Dirichlet boundary conditions since then 0 is not in the spectrum). Here, as in \cite{Burq-Hi-07}, the strategy is inspired by \cite{lebeau96}. In these two papers there is no singularity at 0 and no weighted spaces, so we have to adapt the idea to take into account low frequencies.

\subsection{The Heat equation}

We begin with the decay estimates for the solution $v$ of the heat equation \eqref{heat}. Estimates \eqref{estim-th-v} and \eqref{estim-th-dv} are consequences of the following proposition:

\begin{proposition} \label{prop-heat-decay}
Let $s_1,s_2 \in \big[ 0, \frac d 2 \big]$ and $\k > 1$. Let $\b \in \N^d$ with $\abs \b \leq 1$ and $s \in [0,\abs \b]$. Then there exists $C \geq 0$ such that for all $t \geq 1$ we have 
\[
\nr{\pppg x^{-\k s_1 - s} \partial^\b e^{t\Dx } \pppg x^{- \k s_2 - s}}_{\Lc(L^2(\R^d))} \leq C \, t^{-\frac {s_1 + s_2 + \abs \b + s} 2}
\]
and
\[
\nr{\pppg x^{-\k s_1} \Dx e^{t\Dx} \pppg x^{-\k s_2}}_{\Lc(L^2(\R^d))} \leq C \, t^{-\frac {s_1 + s_2 + 2} 2}.
\]
\end{proposition}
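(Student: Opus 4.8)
The plan is to work directly with the Gaussian kernel $G_t(z)=(4\pi t)^{-d/2}e^{-|z|^2/4t}$, so that $\partial^\beta e^{t\Delta_x}$ and $\Delta_x e^{t\Delta_x}$ are convolution operators with kernels $\partial^\beta G_t$ and $\Delta_x G_t$. By rescaling from $t=1$ one gets the pointwise bounds $|\partial^\beta G_t(z)|\lesssim t^{-\frac{d+|\beta|}{2}}\,(t^{-1/2}|z|)^{|\beta|}\,e^{-c|z|^2/t}$ and $|\Delta_x G_t(z)|\lesssim t^{-\frac{d+2}{2}}\,e^{-c|z|^2/t}$; the point of the first one is that it retains the vanishing of $\partial^\beta G_1$ at the origin (of order $|\beta|$). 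For the first inequality I then have to estimate, in $\mathcal L(L^2(\mathbb R^d))$, the integral operator with kernel $\langle x\rangle^{-\alpha}(\partial^\beta G_t)(x-y)\langle y\rangle^{-\gamma}$ where $\alpha=\kappa s_1+s$, $\gamma=\kappa s_2+s$; for the second one, the operator with kernel $\langle x\rangle^{-\kappa s_1}(\Delta_x G_t)(x-y)\langle y\rangle^{-\kappa s_2}$. In each case I will bound the $\mathcal L(L^2)$ norm by the Hilbert–Schmidt norm, \emph{after} decomposing the two weights dyadically in $|x|$ and $|y|$.

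\textbf{Dyadic Hilbert–Schmidt estimate.} Write $\langle x\rangle^{-\alpha}\le\sum_{j\ge 0}2^{-j\alpha}\mathbf 1_{\{|x|\sim 2^j\}}$ (the $j=0$ shell absorbing $\{|x|\lesssim 1\}$) and similarly $\langle y\rangle^{-\gamma}\le\sum_{k\ge 0}2^{-k\gamma}\mathbf 1_{\{|y|\sim 2^k\}}$. For a fixed pair $(j,k)$ I estimate the Hilbert–Schmidt norm of $\mathbf 1_{\{|x|\sim 2^j\}}(\partial^\beta G_t)(x-y)\mathbf 1_{\{|y|\sim 2^k\}}$ using the kernel bound: on the effective region $|x-y|\lesssim t^{1/2}$ the factor $(t^{-1/2}|x-y|)^{|\beta|}$ is $\lesssim \big((2^j+2^k)t^{-1/2}\big)^{|\beta|}$ and the Gaussian is harmless, the two shells contribute volumes $\lesssim 2^{jd}$ and $\lesssim 2^{kd}$, and cells with $2^j,2^k$ far apart or both much larger than $t^{1/2}$ contribute at most the "saturation" rate $t^{-(\alpha+\gamma+|\beta|)/2}$, which already beats the target since $\alpha+\gamma=\kappa(s_1+s_2)+2s>s_1+s_2+s$. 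Summing $2^{-j\alpha-k\gamma}$ against these contributions yields a double geometric series in $j,k$ whose exponent is dictated by the diagonal cells $2^j\sim 2^k\sim t^{1/2}$; it evaluates to exactly $t^{-(s_1+s_2+|\beta|+s)/2}$ (and to $t^{-(s_1+s_2+2)/2}$ in the $\Delta_x e^{t\Delta_x}$ case). Here $s_1,s_2\le d/2$ guarantees that the claimed rate does not exceed the saturation rate, and $\kappa>1$ provides the strict room $\kappa s_i>s_i$ that turns a potential logarithmic divergence at a borderline exponent ($\kappa s_i+s=d/2$ or $=d/2+1$) into an innocuous loss — the place where the hypothesis $\kappa>1$ is actually used.

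\textbf{Main obstacle.} The delicate part is this exponent bookkeeping, and in particular the necessity of keeping the $(t^{-1/2}|x-y|)^{|\beta|}$ gain in each dyadic cell rather than bounding it crudely. A naive estimate — $|\partial^\beta G_t(z)|\lesssim t^{-(d+|\beta|)/2}$ together with a Littlewood–Paley decomposition in $-\Delta_x$, equivalently a plain Hilbert–Schmidt bound discarding the $|x-y|^{|\beta|}$ factor — only produces the rate $t^{-(\min(\alpha,d/2)+\min(\gamma,d/2)+|\beta|)/2}$, which falls short by an arbitrarily small power precisely at the critical configurations, the most visible being $s_1=0$ (or $s_2=0$) with the other weight exponent equal to $d/2$, where the claimed rate is genuinely sharp (it is the size of the leading "rank-one" part $\langle x\rangle^{-\alpha}\partial^\beta G_t\otimes\langle y\rangle^{-\gamma}$). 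Letting the top shell $2^j\sim t^{1/2}$ supply the decisive extra $t^{-|\beta|/2}$ closes this gap without any $\varepsilon$ or logarithm once $\kappa>1$ is invoked, and it also handles cleanly the corner $s_1=s_2=d/2$. All remaining situations — the $\Delta_x e^{t\Delta_x}$ estimate, and the range where one weight exponent stays below $d/2$ — follow from the same scheme with only notational changes.
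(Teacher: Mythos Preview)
Your scheme has a genuine gap in the small-weight regime. The Hilbert--Schmidt norm of the full kernel $\langle x\rangle^{-\alpha}(\partial^\beta G_t)(x-y)\langle y\rangle^{-\gamma}$ is finite only when $\alpha+\gamma>d/2$: on the diagonal shells $|x|\sim|y|\sim R\gg t^{1/2}$ the Gaussian localises $|x-y|\lesssim t^{1/2}$, the squared kernel is $\sim R^{-2\alpha-2\gamma}t^{-d-|\beta|}$ on a set of measure $\sim R^{d}t^{d/2}$, and the resulting factor $R^{d-2\alpha-2\gamma}$ must be summable over dyadic $R\to\infty$. Thus as soon as $\kappa(s_1+s_2)+2s\le d/2$ --- in particular for $s_1=s_2=s=0$, where the statement reduces to the elementary $\|\partial^\beta e^{t\Delta_x}\|_{L^2\to L^2}\lesssim t^{-|\beta|/2}$ --- the sum of your block Hilbert--Schmidt norms diverges and the argument cannot close. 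Your claimed ``saturation rate'' $t^{-(\alpha+\gamma+|\beta|)/2}$ for the large-$R$ cells only emerges under the tacit hypothesis $\alpha+\gamma>d/2$, which you have not assumed and which does not cover the full parameter range of the proposition.

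The paper proceeds by a different and more elementary route that sidesteps this obstruction. It first proves the four ``corner'' estimates $\|\langle x\rangle^{-s}\partial^\beta e^{t\Delta_x}\langle x\rangle^{-s}\|_{L^p\to L^q}$ for $(p,q)\in\{(1,2),(2,2),(2,\infty),(1,\infty)\}$ directly from the kernel, exploiting the $|x-y|^{|\beta|}$ vanishing through the Peetre-type inequality $\langle x\rangle^{-s}|x-y|^{s}\langle y\rangle^{-s}\lesssim 1$. It then uses that $\langle x\rangle^{-\kappa d/2}\colon L^2\to L^1$ is bounded to convert these into $L^2\to L^2$ bounds at the four corners $s_1,s_2\in\{0,d/2\}$, and finishes by complex interpolation. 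The $(2,2)$ endpoint is exactly what covers the small-weight regime where your Hilbert--Schmidt sum fails. For the second estimate the paper simply writes $e^{t\Delta_x}=(e^{t\Delta_x/3})^3$ and bounds the middle factor $\Delta_x e^{t\Delta_x/3}$ by functional calculus. If you wish to salvage a dyadic argument, you would need to replace the Hilbert--Schmidt bound on the large-$R$ blocks by an operator-norm (Schur) bound and then assemble the blocks via almost-orthogonality rather than by a plain $\ell^1$ sum of norms.
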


\begin{proof}
We recall that the kernel of the heat propagator $e^{t\Dx }$ is given by
\[
K(t,x) = \frac 1 {(4\pi t)^{\frac d 2}} e^{-\frac {\abs x^2}{4t}}.
\]
For $t > 0$, $\vf \in C_0^\infty(\R^d)$ and $x \in \R^d$ we have 
\begin{eqnarray*}
\lefteqn{\nr{\pppg x^{-s} \partial^\b e^{t\Dx} \pppg x^{-s} \vf}_{L^2(\R^d)}^2}\\
&& = \int_{x\in\R^d} \pppg x^{-2s} \left( \int_{\tilde x \in \R^d} \frac 1 {(4\pi t)^{\frac d 2}} \frac {(x-\tilde x)^\b}{(2t)^{\abs \b}} e^{-\frac {\abs {x - \tilde x}^2}{4t}} \pppg {\tilde x}^{-s} \vf(\tilde x) \, d\tilde x\right)^2 \, dx\\
&& \leq \int_{x\in\R^d} \left( \int_{\tilde x \in \R^d} \frac 1 {(4\pi t)^{\frac d 2}} \frac {\pppg x^{-s} \abs{x-\tilde x}^{s}\pppg {\tilde x}^{-s}}{(2t)^{\frac {\abs \b + s}2}} \frac {\abs{x-\tilde x}^{\abs \b-s}}{(2t)^{\frac {\abs \b - s}2}} e^{-\frac {\abs {x - \tilde x}^2}{4t}}  \abs{\vf(\tilde x)} \, d\tilde x\right)^2 \, dx\\
&& \lesssim  \frac 1 {t^{d + \abs \b + s}} \int_{x\in\R^d} \left( \int_{\tilde x \in \R^d}  \frac {\abs{x-\tilde x}^{\abs \b-s}}{(2t)^{\frac {\abs \b - s}2}} e^{-\frac {\abs {x - \tilde x}^2}{4t}}  \abs{\vf(\tilde x)} \, d\tilde x\right)^2 \, dx\\
&& \lesssim  \frac {\nr{\vf}_{L^1(\R^d)}} {t^{d + \abs \b + s}} \int_{x\in\R^d} \int_{\tilde x \in \R^d}  \frac {\abs{x-\tilde x}^{2(\abs \b-s)}}{(2t)^{{\abs \b - s}}} e^{-\frac {\abs {x - \tilde x}^2}{2t}}  \abs{\vf(\tilde x)} \, d\tilde x \, dx\\
&& \lesssim  \frac {\nr{\vf}_{L^1(\R^d)}} {t^{d + \abs \b + s}} \int_{\tilde x \in \R^d}  \abs{\vf(\tilde x)} \int_{x\in\R^d} \frac {\abs{x-\tilde x}^{2(\abs \b-s)}}{(2t)^{{\abs \b - s}}} e^{-\frac {\abs {x - \tilde x}^2}{2t}}  \, dx  \, d\tilde x \\
&& \lesssim \frac {\nr{\vf}_{L^1(\R^d)}^2} {t^{\frac d 2 + \abs \b + s}}.
\end{eqnarray*}
We have used the Cauchy-Schwarz inequality and, for the last step, we have made the change of variables $\y = (x- \tilde x)/ \sqrt{2t}$. This proves that 
\[
\nr{\pppg x^{-s} \partial^\b e^{t\Dx} \pppg x^{-s}}_{\Lc(L^1(\R^d),L^2(\R^d))} \lesssim t^{- \frac d 4 - \frac {\abs{\b}} 2 - \frac s 2}.
\]
We similarly get 
\begin{eqnarray*}
\nr{\pppg x^{-s} \partial^\b e^{t\Dx} \pppg x^{-s}}_{\Lc(L^2(\R^d),L^2(\R^d))} &\lesssim& t^{ - \frac {\abs{\b}} 2 - \frac s 2},\\
\nr{\pppg x^{-s} \partial^\b e^{t\Dx} \pppg x^{-s}}_{\Lc(L^2(\R^d),L^\infty(\R^d))} &\lesssim& t^{- \frac d 4 - \frac {\abs{\b}} 2 - \frac s 2},\\
\nr{\pppg x^{-s} \partial^\b e^{t\Dx} \pppg x^{-s}}_{\Lc(L^1(\R^d),L^\infty(\R^d))} &\lesssim& t^{- \frac d 2 - \frac {\abs{\b}} 2 - \frac s 2}.
\end{eqnarray*}
The second estimate gives the case $s_1 = s_2 = 0$. We consider the case $s_1 = 0$, $s_2 = \frac d 2$. Since $\pppg x^{-\frac {\k d}2}$ defines a bounded operator from $L^2(\R^d)$ to $L^1(\R^d)$ we have 
\begin{equation*}
\nr{\pppg x^{-s} \partial^\b e^{t\Dx} \pppg x^{-s- \frac {\k d}2}}_{\Lc(L^2(\R^d),L^2(\R^d))}
\leq \nr{\pppg x^{-s} \partial^\b e^{t\Dx} \pppg x^{-s}}_{\Lc(L^1(\R^d),L^2(\R^d))} \lesssim t^{- \frac d 4 - \frac {\abs{\b}} 2 - \frac s 2}.
\end{equation*}
We similarly prove the cases $s_1 = \frac d 2$, $s_2 = 0$ and $s_1 = s_2 = \frac d 2$. Then the general case follows by interpolation, and the first estimate is proved. For the second estimate we write 
\[
\nr{\pppg x^{-\k s_1} \Dx e^{t\Dx} \pppg x^{-\k s_2}}_{\Lc(L^2(\R^d))} \leq \nr{\pppg x^{-\k s_1} e^{\frac {t\Dx}3}} \nr{\Dx e^{\frac {t\Dx} 3}} \nr{e^{\frac {t\Dx}3} \pppg x^{-\k s_2}}.
\]
On the right-hand side the middle factor is of size $O(t\inv)$ by functional calculus. For the first and third factors we use the previous estimate. This concludes the proof of the proposition.
\end{proof}

\begin{remark} \label{rem-s}
In this proof we can see the role of the parameter $s$ which appears in Theorems \ref{th-energy-decay} and \ref{th-chaleur}. When considering a spatial derivative of the solution, a factor of size $\abs{x-\tilde x}/(2t)$ appears in the kernel. We have to control this power of $x$. For this we can use the negative power of $t$ and the exponential factor $\exp \big(-\abs{x -\tilde x}^2/(4t^2)\big)$. However, in suitable weighted spaces, this power of $x$ is controled by the weights and the negative power of $t$ gives a better rate of decay for the energy.
\end{remark}

\subsection{Comparison between the damped wave and the heat equations} \label{sec-decay-difference}

Before the proofs of Theorems \ref{th-energy-decay} and \ref{th-chaleur}, we prove the following lemma.

\begin{lemma} \label{lem-res-time}
Let $\Kc$ be a Hilbert space and let $I$ be an open bounded interval of $\R$. Let $\n \geq 0$, $\nu_0 > \n$ and $C \geq 0$.
Let $\f \in C_0^\infty(I,\Kc)$ and $\p \in C^\infty(I,\C)$. Assume that for $m \in \N$ with $m \leq \n_0 + 1$ and $\t \in I$ we have  
\[
\nr{\f^{(m)}(\t)}_\Kc \leq C \left( 1 + \abs \t^{\n_0 - 1 -m}\right),
\] 
\[
\abs{\p^{(m)}(\t)} \leq C \quad \text{and} \quad \abs{\p'(\t)} \geq \frac 1 C.
\]
Then there exists $c \geq 0$ which only depends on $I$, $\n$, $\n_0$ and $C$ such that for all $t \geq 0$ we have 
\[
\nr{\int_{I} e^{-it\p(\t)} \f(\t)\, d\t}_{\Kc} \leq c \pppg t^{-\n} \exp\left(t \sup_I \Im (\p) \right).
\]
\end{lemma}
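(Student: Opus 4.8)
The plan is to prove this by a repeated integration by parts in the oscillatory integral, together with a careful bookkeeping of the singularity of $\f$ at $\t = 0$. The key observation is that $e^{-it\p(\t)}$ is an exponential whose $\t$-derivative produces a factor $-it\p'(\t)$, and since $\abs{\p'(\t)} \geq 1/C$ we may divide by $\p'$. Concretely, for $\t$ away from $0$ I would write
\[
e^{-it\p(\t)} = \frac{1}{-it\p'(\t)} \frac{d}{d\t}\left( e^{-it\p(\t)} \right),
\]
and integrate by parts; each such step gains a factor $1/t$ (which will ultimately produce the $\pppg t^{-\n}$ decay, using $\n < \n_0$, so that $\lceil \n \rceil$ or so integrations by parts suffice) at the cost of differentiating $\f(\t)/\p'(\t)$. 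Because $\p^{(m)}$ is bounded and $\abs{\p'} \geq 1/C$, differentiating powers of $1/\p'$ stays under control; the genuinely delicate factor is the differentiation of $\f$, which by hypothesis satisfies $\nr{\f^{(m)}(\t)}_\Kc \lesssim 1 + \abs\t^{\n_0 - 1 - m}$. The exponential weight $\exp(t\sup_I \Im\p)$ simply factors out because $\abs{e^{-it\p(\t)}} = e^{t\Im\p(\t)} \leq e^{t\sup_I\Im\p}$ pointwise; one should keep it attached throughout and note that integration by parts does not affect it beyond this bound (the boundary terms vanish since $\f \in C_0^\infty(I,\Kc)$).

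The main obstacle is that, after $N$ integrations by parts, the integrand contains terms like $\f^{(j)}(\t) / \p'(\t)^{k}$ with $j \leq N$, and the bound $\abs\t^{\n_0 - 1 - j}$ is not integrable on $I$ once $j$ is large enough (namely $j \geq \n_0$). So one cannot simply integrate by parts $\lceil \n \rceil$ times blindly. The standard fix — which I would follow — is to split the integral into a "near $0$" piece $\int_{\abs\t \leq 1/t}$ and a "far from $0$" piece $\int_{\abs\t \geq 1/t}$ (intersected with $I$). On the near piece, one estimates directly without integration by parts: the measure of $\{\abs\t \leq 1/t\}$ is $O(1/t)$ and, using $\nr{\f(\t)}_\Kc \lesssim 1 + \abs\t^{\n_0-1}$ with $\n_0 - 1 > \n - 1 > -1$, one gets a contribution of order $t^{-1} + \int_0^{1/t}\abs\t^{\n_0-1}\,d\t \lesssim t^{-\min(1,\n_0)} \lesssim t^{-\n}$ (adjusting constants; for $t \leq 1$ everything is bounded, which takes care of the $\pppg t$ versus $t$ distinction). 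On the far piece, one integrates by parts $N := \lceil \n \rceil$ times; each boundary contribution at $\abs\t = 1/t$ is controlled by $\nr{\f^{(j)}(1/t)}_\Kc \cdot t^{-(\text{number of IBP so far})} \lesssim t^{-j'} (1 + t^{1+j-\n_0})$ for appropriate indices, and the remaining integral $\int_{1/t \leq \abs\t} \abs\t^{\n_0 - 1 - N}\,d\t \lesssim t^{N - \n_0}$ when $N > \n_0$, so that multiplying by the $t^{-N}$ gained yields $t^{-\n_0} \leq t^{-\n}$; when $N \leq \n_0$ the integrand is already integrable near $0$ and one simply gets $t^{-N} \leq t^{-\n}$ up to a logarithm-free bound. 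Matching the two pieces at the threshold $\abs\t = 1/t$ is where the choice $\n < \n_0$ (strict) is used: it gives the slack needed to absorb the boundary terms.

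I would organize the writeup as follows. First, pull out the exponential weight and reduce to bounding $\nr{\int_I e^{-it\th(\t)}\f(\t)\,d\t}_\Kc$ where now I only track $\abs{e^{-it\th}} \leq 1$ (writing $\th$ for $\p$ after absorbing the imaginary part, or just carrying $\exp(t\sup\Im\p)$ as an overall factor). Second, dispose of $0 \leq t \leq 1$ trivially using that $\f$ is integrable on $I$ (again $\n_0 - 1 > -1$) and that $\pppg t^{-\n}$ is bounded below on $[0,1]$. Third, for $t \geq 1$, fix $N = \lceil \n \rceil \leq \n_0 + 1$ (so that the hypotheses on $\f^{(m)}$, $\p^{(m)}$ apply for all $m \leq N$), split at $\abs\t = 1/t$, estimate the near piece directly and the far piece by $N$-fold integration by parts, using throughout that the functions $\t \mapsto 1/\p'(\t)$ and their derivatives are bounded on $I$ by a constant depending only on $C$ (a short lemma: by Faà di Bruno / Leibniz, $\partial^m(1/\p')$ is a polynomial in $\p'', \dots, \p^{(m+1)}$ divided by a power of $\p'$, all bounded). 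Collect the powers of $t$: every term is $\lesssim t^{-\min(N,\n_0)} \leq t^{-\n}$ since $\n < \n_0$ and $\n \leq N$. This yields $c\,\pppg t^{-\n}\exp(t\sup_I\Im\p)$ with $c$ depending only on $I$, $\n$, $\n_0$, $C$, as claimed.
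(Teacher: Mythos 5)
Your toolkit is the right one (integration by parts using $\abs{\p'}\geq 1/C$, a splitting near the singularity of $\f$ at $\t=0$, and counting powers of $t$ against $\n<\n_0$), but the order in which you deploy it creates a genuine gap: you split at $\abs\t=1/t$ \emph{first} and only then integrate by parts on the far piece. With that order the near piece is estimated purely by the $L^1$ norm of $\f$ on $\{\abs\t\leq 1/t\}$, which gives at best $t^{-1}+\int_0^{1/t}\abs\t^{\n_0-1}\,d\t\lesssim t^{-\min(1,\n_0)}$. Your claim that this is $\lesssim t^{-\n}$ is false whenever $\n>1$ (e.g.\ $\n=2$, $\n_0=2.5$), and the case $\n>1$ is exactly the one needed in the application (there $\n=\tfrac12(s_1+s_2+\tilde s+1)$ or $\tfrac12(s_1+s_2+\tilde s+2)$, which exceeds $1$ as soon as the weights are nontrivial). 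The same defect reappears in your far piece: the boundary term produced by the \emph{first} integration by parts at $\t=\pm 1/t$ is of size $t^{-1}\nr{\f(\pm 1/t)}\lesssim t^{-1}(1+t^{1-\n_0})$, again only $O(t^{-1})$, and no further integration by parts can improve a boundary term. So the whole scheme caps out at $t^{-\min(1,\n_0)}$ rather than $t^{-\n}$. (A secondary issue: with the threshold exactly $1/t$ and $N=\lceil \n\rceil$ integrations by parts you hit logarithms in the borderline case $\n\in\N$ with $\n_0-1-N=-1$, which your ``logarithm-free bound'' remark does not address.)

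The fix is to reverse the order, which is what the paper does. After reducing to $\n_0\notin\N$ (harmless, since one may shrink $\n_0$ towards $\n$), set $m_0=\lfloor\n_0\rfloor$ and integrate by parts $m_0$ times over \emph{all} of $I$: there are no boundary terms because $\f\in C_0^\infty(I,\Kc)$, and every integrand $q_j\f^{(j)}$ with $j\leq m_0$ is still integrable since $\n_0-1-m_0=-\g$ with $\g\in\,]0,1[$. This extracts the full factor $t^{-m_0}$ before any splitting. Only the last, $(m_0+1)$-th, integration by parts — where $\f^{(m_0+1)}$ ceases to be integrable near $0$ — requires cutting out $\{\abs\t\leq t^{-\b}\}$; the near piece then contributes $t^{-m_0}\cdot t^{-\b(1-\g)}$ and the boundary terms $t^{-m_0}\cdot t^{\b\g-1}$, and letting $\b\to1^-$ gives a rate approaching $t^{-\n_0}$, hence beating $t^{-\n}$. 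Your proposal as written cannot reach rates beyond $t^{-1}$ and therefore does not prove the lemma.
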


Notice that for a fixed $\f$ we can replace $\n$ by any real number. The interest of this result is to give the uniform decay we can deduce from the uniform estimates on the derivatives of $\f$.
This kind of lemma was already used in \cite{boucletr14,royer-dld-energy-space,KhenissiRo}.

\begin{proof}
For $t > 0$ we set 
\[
\Phi(t) = \int_{I} e^{-it\p(\t)} \f(\t)\, d\t.
\]
Let 
\[
\s = \sup_I \Im(\p).
\]

Since we can replace $\n_0$ by any $\tilde \n_0 \in ]\n,\n_0[$, we can assume without loss of generality that $\n_0$ is not an integer. We first remark that for all $t \geq 0$ we have 
\[
\nr{\Phi(t)}_\Kc \leq  \nr{\f}_{L^1(I,\Kc)} e^{\s t},
\]
so the difficulty comes from large values of $t$.
Let $m_0 \in \N$ be the integer part of $\n_0$ and $\g = m_0 + 1 - \n_0 \in ]0,1[$. By integrations by parts we obtain  
\begin{equation*}
(it)^{m_0} \Phi(t) = \int_{I} e^{-it\p(\t)} (L^{m_0}\f)(\t)\, d\t,
\end{equation*}
where 
\[
L = \frac d{d\t} \frac 1 {\p'(\t)}.
\]
Then we can write 
\begin{equation} \label{somme-Phi}
(it)^{m_0} \Phi(t) = \sum_{j=0}^{m_0} \int_{I} e^{-it\p(\t)} q_j(\t) \f^{(j)}(\t)\, d\t,
\end{equation}
where $q_j$ is a smooth function with bounded derivatives on $I$, and the bounds only depend on $C$. For $j \in \{0,\dots,m_0-1\}$ we can do another integration by parts. We have 
\begin{align*}
\nr{\int_{I} e^{-it\p(\t)} q_j(\t) \f^{(j)}(\t)\, d\t}_\Kc
& = \frac 1 {t} \nr{\int_I e^{-it\p(\t)} L (q_j \f^{(j)})(\t) \, d\t}_\Kc\\
& \lesssim \frac {e^{\s t}} t \int_I \left( \|\f^{(j)}\|_\Kc + \|\f^{(j+1)}\|_\Kc \right)\, d\t\\
& \lesssim \frac {e^{\s t}} t \int_I \abs \t^{-\g} \, d\t\\
& \lesssim \frac {e^{\s t}} t.
\end{align*}
Since $\f^{(m_0+1)}$ is not uniformly integrable near 0, we cannot proceed similarly for the last term of \eqref{somme-Phi}. We separate the contribution of $\t$ close to 0. Let $\b \in ]0,1[$. We have  
\[
\nr{\int_{\abs \t \leq t^{-\b}} e^{-it\p(\t)} q_{m_0}(\t) \f^{(m_0)}(\t)\, d\t}_\Kc \lesssim e^{\s t} \int_{\abs \t \leq t^{-\b}} \abs \t^{-\g} \, d\t \lesssim {e^{\s t}}{t^{-\b(1-\g)}}.
\]
For the contribution of $\t \in I$ with $\abs \t \geq t^{-\b}$ we can use an integration by parts as above.
\begin{multline*}
\nr{\int_{\abs \t \geq t^{-\b}} e^{-it\p(\t)} q_{m_0}(\t) \f^{(m_0)}(\t)\, d\t}_\Kc\\
\leq \frac {e^{\s t}} t \left(\nr{\f^{(m_0)}(t^{-\b})} + \nr{\f^{(m_0)}(-t^{-\b})} + \int_{\abs \t \geq t^{-\b}} \abs \t^{-\g - 1} \, d\t \right) \lesssim  {e^{\s t}t^{\b \g-1}}.
\end{multline*}
Finally we have 
\[
t^{m_0} \nr{\Phi(t)} \lesssim e^{\s t} t^{\b (\g - 1)},
\]
and the conclusion follows if we choose $\b$ close enough to 1.
\end{proof}

Now we can finish the proofs of Theorems \ref{th-energy-decay} and \ref{th-chaleur}:

\begin{proof}[Proof of Theorems \ref{th-energy-decay} and \ref{th-chaleur}]
We use the notation of Theorem \ref{th-chaleur}. 
Let $k > 1$. Let $U_0 = (u_0,iu_1) \in \Dom(\AcN^k)$ and $W_0 = (\AcN-i)^k U_0$. Let $u$ and $v$ be the solutions of \eqref{wave-neumann} and \eqref{heat}, respectively. For $t \geq 0$ we set
\[
U(t) = \begin{pmatrix} u(t) \\ i\partial_t u(t) \end{pmatrix} = e^{-it\AcN} U_0 \quad \text{and} \quad V(t) = \begin{pmatrix} v(t) \\ i\partial_t v(t) \end{pmatrix} = \begin{pmatrix} e^{t\Dx } P_0 (au_0+u_1) \\ i \Dx   e^{t\Dx } P_0 (au_0+u_1) \end{pmatrix}.
\]
We prove that there exists $C \geq 0$ such that for all $t \geq 1$ we have 
\begin{equation} \label{estim-UV1}
\nr{\nabla u(t) - \nabla v(t)}_{L^{2,-\d_1}(\O)} \leq C \left( t^{-\frac 12 (1 +s_1 +s_2 + \tilde s)} + \frac {\ln(t)^{k/2+1}} {t^{k/2}} \right) \nr{W_0}_{\HHN^{\d_2}}
\end{equation}
and
\begin{equation} \label{estim-UV2}
\nr{\partial_t u(t) - \partial_t v(t)}_{L^{2,-\d_1}(\O)} \leq C \left( t^{-\frac 12 (2 +s_1 +s_2 + \tilde s)} + \frac {\ln(t)^{k/2+1}} {t^{k/2}} \right) \nr{W_0}_{\HHN^{\d_2}}.
\end{equation}
Estimates \eqref{estim-th-uv} and \eqref{estim-th-duv} will follow. Moreover, with \eqref{estim-th-v} and \eqref{estim-th-dv} this will give Theorem \ref{th-energy-decay}.

\stepp 
Let $t \geq 1$. Given $\m \in ]0,1[$, we start from the identity
\[
e^{-it\AcN} (\AcN-i)^{-k} W_0  = \frac 1 {2i\pi} \int_{\Im(z) = \m} \frac {e^{-itz}}{(z -i)^k} (\AcN-z)\inv W_0 \, dz
\]
(see \cite{burq98}).%
      \detail{
      By the resolvent identity we have 
      \[
      \frac {(\AcN-z)\inv W_0}{(z-i)^k} = \frac {(\AcN-i)\inv W_0}{(z-i)^k} + \frac {(\AcN-z)\inv (\AcN-i)\inv W_0}{(z-i)^{k-1}}.
      \]
      We multiply this equality by $\frac 1 {2i\pi}$ and integrate over the contour $\G_R$ defined as the boundary of $\singl{z \in \C \st \abs{\Re(z)} \leq R, \m \leq \Im(z) \leq R}$. On the one hand we have 
      \[
      \frac 1 {2i\pi} \int_{\G_R} \frac {(\AcN-z)\inv W_0}{(z-i)^k} \, dz \limt R {+\infty} \frac 1 {2i\pi} \int_{\Im(z) = \m} \frac {(\AcN-z)\inv W_0}{(z-i)^k} \, dz.
      \]
      On the other hand, by the residues theorem,
      \begin{multline*}
      {\frac 1 {2i\pi} \int_{\G_R} \left(\frac {(\AcN-i)\inv W_0}{(z-i)^k} + \frac {(\AcN-z)\inv (\AcN-i)\inv W_0}{(z-i)^{k-1}}\right) \, dz}\\
      = \frac 1 {(k-2)!} \frac {d^{k-2}}{dz^{k-2}} (\AcN-z)\inv (\AcN-i)\inv W_0 \Big|_{z=i} = (\AcN-i)^{-k} W_0.
      \end{multline*}
      This gives the equality when $t = 0$. Then 
      \[
      (\partial_t + i \AcN) \big(e^{-it\AcN} (\AcN-i)^{-k} W_0 \big) = 0,
      \]
      and 
      \[
      (\partial_t + i \AcN) \frac 1 {2i\pi} \int_{\Im(z) = \m} \frac {e^{-itz}}{(z -i)^k} (\AcN-z)\inv W_0 \, dz = \frac 1 {2\pi}  \int_{\Im(z) = \m} \frac {e^{-itz}}{(z -i)^k}W_0 \, dz
      \]
      This last integral can be seen as the limit when $R \to +\infty$ of the same integral over the boundary of $\singl{z \in \C \st \abs{\Re(z)} \leq R, -R \leq \Im(z) \leq \m}$. For each $R$ this integral is 0 by the residues theorem. This proves that both sides of the above equality satisfy the same Cauchy problem, hence they are equal.
      }%
By Proposition \ref{prop-inter-freq} there exists $\g > 0$ such that the result of Theorem \ref{th-high-freq} holds with $\t_0$ replaced by 1. 
Let $\vf \in C^0(\R)$ be such that $\vf(\t) = \m$ if $\t \in [-1,1]$, $\vf(\t) \in [-\g / 9, \m]$ if $\abs \t \in [1,2]$, $\vf(\t) = - \g /9$ if $\abs s \in [2,3]$ and $\vf(\t) = - \g / \t^2$ for $\abs \t \geq 3$. We can also assume that $\vf$ is smooth on $]-3,3[$. We set 
\[
\G = \singl{ \t + i \vf(\t),\t \in \R}.
\]
Then we have 
\[
e^{-it\AcN} (\AcN-i)^{-k} W_0  = \frac 1 {2i\pi} \int_{\G} \frac {e^{-itz}}{(z -i)^k} (\AcN-z)\inv W_0 \, dz.
\]
Let $\h_0 \in C_0^\infty(\R,[0,1])$ be supported in $]-3,3[$ and equal to 1 on $[-2,2]$, and $\h_1 = 1 - \h_0$. For $j \in \{0,1\}$ we set 
\[
I_j(t) =  \frac 1 {2i\pi} \int_{\G} \h_j(\Re(z)) \frac {e^{-itz}}{(z -i)^k} (\AcN-z)\inv W_0 \, dz.
\]

\stepp We begin with $I_0(t)$. We first notice that replacing $U_0$ by $W_0$ in order to introduce a factor $(z-i)^{-k}$ in the integral will be necessary to estimate the contribution $I_1(t)$ of high frequencies, but it is useless for low frequencies. It would not be a problem to obtain the uniform estimates of Theorem \ref{th-energy-decay} alone, but in order to prove the sharp result of Theorem \ref{th-chaleur} we have to go back to an expression in $U_0$ and without the factor $(z-i)^{-k}$. For this we use the identity 
\[
(\AcN-z)\inv (\AcN-i)^k = \sum_{j=1}^{k} (\AcN-i)^{k-j} (z-i)^{j-1} + (z-i)^k (\AcN-z)\inv.
\]
We set 
\[
\tilde \Th(z) = \begin{pmatrix} \tilde \Th_1(z) \\ \tilde \Th_2(z) \end{pmatrix} = \sum_{j=1}^{k} (\AcN-i)^{k-j} (z-i)^{j-1 - k} U_0 + \Th(z) U_0,
\]
where $\Th(z)$ is given by \eqref{def-Th}. Then by Theorem \ref{th-low-freq} we have $I_0(t) = I_\Heat (t) + I_\Th(t)$ where we have set
\[
I_\Heat(t) =  \frac 1 {2i\pi} \int_{\G}  e^{-itz} \h_0(\Re(z)) \RcC U_0 \, dz
\]
and
\[
I_\Th(t) = \begin{pmatrix} I_{\Th,1}(t) \\ I_{\Th,2}(t) \end{pmatrix} = \frac 1 {2i\pi} \int_{\G} e^{-itz} \h_0(\Re(z)) \tilde \Th(z) \, dz. 
\]

\stepp Let $\mathfrak s \in ]\tilde s, \min(2,d,\rho)[$. For $I_\Th$ we apply Theorem \ref{th-low-freq} (with $\mathfrak s$ instead of $s$) and Lemma \ref{lem-res-time}. This gives
\begin{equation} \label{estim-ITh1}
\nr{\nabla I_{\Th,1}(t)}_{L^{2,-\d_1}}  \lesssim e^{t\m}  t^{-\frac 12 (s_1 + s_2 + \tilde s + 1)} \nr{W_0}_{\HHN^{\d_2}}
\end{equation}
and
\begin{equation} \label{estim-ITh2}
\nr{I_{\Th,2}(t)}_{L^{2,-\d_1}}  \lesssim e^{t\m}  t^{-\frac 12 (s_1 + s_2 + \tilde s + 2)} \nr{W_0}_{\HHN^{\d_2}}.
\end{equation}

\stepp We now turn to $I_\Heat(t)$. We set
\[
I_\Heat(t) = 
\begin{pmatrix}
I_\Heat^1 (t) & I_\Heat^2 (t) \\
I_\Heat^3 (t) & I_\Heat^4 (t)
\end{pmatrix}.
\]
We consider the upper left coefficient $I_\Heat^1 (t) = i\Rcz P_0 a$. We denote by $E_N$ the spectral measure associated to $-\DN $ and $\Pi_N = E_N([0,\g/18])$. Then 
\begin{align*}
I_\Heat^1(t) u_0 =  i \int_{\Xi = 0}^{+\infty} \frac {1}{2i\pi} \int_{\G} \h_0(\Re(z)) \frac {e^{-itz}} {\Xi - i z}  \, dz \, dE_N(\Xi) P_0 a u_0.
\end{align*}
The integrand is holomorphic in $\big(]-2,2[ + i\R\big) \setminus \singl{-i\Xi}$ so by the residue theorem
\begin{align*}
\Pi_N I_\Heat^1(t) u_0
& = \int_{\Xi = 0}^{\frac \g {18}}   e^{-t\Xi}   \, dE_N(\Xi)P_0 a u_0  +  \frac {1}{2\pi} \int_{\Xi = 0}^{\frac \g {18}} \int_{\t = -3}^3 \h_0(\t) \frac {e^{-it(\t-i\g/9)}} {\Xi - i \t - \g /9}\, d\t \, dE_N(\Xi) P_0 a v_0  \\
& = \Pi_N e^{t\DN}  P_0 a u_0 +  \frac {1}{2\pi} \int_{\t = -3}^3 \h_0(\t) e^{-it\t- \g t/9} \Pi_N \big(-\DN  - i \t - \g/9\big)\inv P_0 a u_0 \, d\t.
\end{align*}
The second term can be considered as a rest:
\[
\nr{\frac {1}{2\pi} \int_{\t = -3}^3 \h_0(\t) e^{-it\t- \g t/9} \Pi_N \big(-\DN  - i \t - \g/9\big)\inv P_0 a u_0 \, d\t}_{L^2(\O)} \lesssim e^{-\frac {\g t} 9} \nr{u_0}_{L^2(\O)}.
\]
We also have $\nr{(1-\Pi_N) e^{t\DN} P_0 a u_0} \lesssim e^{- \frac {\g t} {18}} \nr{u_0}$ and after a change of contour (such that $\Im(z) = - \frac {\g}{36}$ if $\abs {\Re(z)} \leq 1$)
\[
\nr{(1-\Pi_N) I_\Heat^1(t) u_0} \lesssim e^{-\frac {\g t}{36}} \nr{u_0}.
\]
This yields
\[
\nr{\nabla  \big( I_\Heat^{1}(t)u_0 - e^{t\DN } P_0 a u_0\big)}_{L^2(\O)} \lesssim e^{-\frac {\g t} {36}} \nr{u_0}_{L^2(\O)}.
\]
We estimate similarly $iI_\Heat^2(t) u_1$, $I_\Heat^3(t) u_0$ and $i I_\Heat^4(t)u_1$ and finally obtain 
\begin{equation} \label{estim-Iheat}
\nr{I_\Heat(t) - V(t)}_{\EEN} \lesssim  e^{-\frac {\g t}{36}} \nr{U_0}_{\HHN}.
\end{equation}

\stepp For $I_1(t)$ we use the strategy of \cite{lebeau96,Burq-Hi-07}. We set
\[
\rho(t) = \left( \frac t {\ln(t)} \right)^{\frac 12}.
\]
Let $m > \frac d 2$, $c_2 \in \big]0, \sqrt {\g /(m+2)} \big[$ and $c_1 \in ]0,c_2[$. We can write 
\[
I_1(t) = \frac 1 {2i\pi} \frac 1 {\sqrt {2\pi}} \int_{z \in \G} \int_{\sigma \in \R} e^{-(z-\sigma)^2/2} \h_1(\Re(z)) \frac {e^{-itz}}{(z -i)^k} (\AcN-z)\inv W_0 \, d\sigma \, dz.
\]
Then we split $I_1(t)$ as $I_{11}(t) + I_{12}(t)$ where $I_{11}(t)$ and $I_{12}(t)$ are defined as $I_{1}(t)$ except that in $I_{11}(t)$ the integral on $\sigma$ is restricted to $[-c_1 \rho(t), c_1 \rho(t)]$ (and to its complement for $I_{12}(t)$).

\stepp We first consider $I_{11}(t)$. If $z \in \G$ is such that $\abs{\Re(z)} \leq c_2 \rho(t)$ then 
\[
\Im(z) \leq - \frac \g {c_2^2 \rho(t)^2},
\]
so 
\[
\nr{\frac {e^{-itz}}{(z -i)^k} (\AcN-z)\inv}_{\EEN} \lesssim e^{- \frac {t\g}{c_2^2 \rho(t)^2}}  \rho(t)^2  \lesssim t^{1 - \frac \g {c_2^2}}.
\]
On the other hand there exists $\e_0 > 0$ such that if $\abs \s \leq c_1 \rho(t)$ and $\abs{\Re(z)} \geq c_2 \rho(t)$ (with $t$ large enough) then we have 
\[
\Re \big( (\sigma-z)^2 \big) \geq 2 \e_0 (\rho(t)^2 + \Re(z)^2),
\]
so
\begin{equation} \label{estim-I11}
\nr{I_{11}(t)}_{\EEN} \lesssim \rho(t) t^{1 - \frac \g {c_2^2}} \nr{W_0}_{\EEN} + \int_{\substack {z \in \G \\\abs{\Re(z)} \geq c_2 \rho(t)}} e^{-2\e_0(\rho(t) + \Re(z)^2)} \nr{W_0}_{\EEN} \, dz  \lesssim t^{-m} \nr{W_0}_{\EEN}.
\end{equation}

\stepp It remains to estimate $I_{12}(t)$. In the integrand the factor $\frac 1 {(z-i)^k}$ is small when $\abs{\Re(z)}$ is large, and this is what we will use to obtain the time decay. On the other hand we observe that the factor $e^{-itz}$ is small when $\Im(z) \ll -1$, while $(\AcN-z)\inv$ is small for $\Im(z) \gg 1$. We also have to keep in mind that the factor $e^{-(z-\sigma)^2/2}$ can become large if $\Im(z-\sigma)$ is large compared to $\Re(z-\sigma)$.
For $\th \in [0,t]$ we set 
\[
J_t(\th) =  \frac 1 {2i\pi} \frac 1 {\sqrt {2\pi}} \int_{z \in \G} \int_{\abs {\sigma} \geq c_1 \rho(t)} e^{-(z-\sigma)^2/2} \h_1(\Re(z)) \frac {e^{-i\th z}}{(z -i)^k} (\AcN-z)\inv W_0 \, d\sigma \, dz
\]
and 
\[
K_t(\th) =  \frac 1 {2i\pi} \frac 1 {\sqrt {2\pi}} \int_{z \in \G} \int_{\abs {\sigma} \geq c_1 \rho(t)} e^{-(z-\sigma)^2/2} \h_1(\Re(z)) \frac {e^{-i\th z}}{(z -i)^k} W_0 \, d\sigma \, dz.
\]
We have 
\[
(\partial_\th + i \AcN) J_t(\th) = i K_t(\th),
\]
so 
\begin{equation} \label{dec-I12}
I_{12}(t) = J_t(t) = J_t(0) + i \int_0^t e^{-i(t-\th)\AcN} K_t(\th) \, d\th.
\end{equation}
It remains to estimate separately $J_t(0)$ (for which we no longer have the factor $e^{-itz}$) and the integral of $K_t(\th)$ (for which we no longer have the resolvent of $\AcN$).

\stepp We begin with $J_t(0)$. For this we follow \cite{lebeau96}. We write 
\begin{equation} \label{Jt0}
J_t(0) = \frac 1 {2i\pi}  \frac 1 {\sqrt {2\pi}}  \int_{\abs{\sigma} \geq c_1 \rho(t)} S(\sigma) \, d\sigma,
\end{equation}
where for $\sigma \in \R$ we have set 
\begin{equation} \label{def-Ssigma}
S(\sigma) = \int_{z \in \G} \h_1(\Re(z)) \frac {e^{-(z-\sigma)^2 / 2} }{(z-i)^k} (\AcN-z)\inv W_0 \, dz.
\end{equation}
In \eqref{Jt0} we estimate the integral over $[c_1 \rho(t),+\infty[$. The integral over $]-\infty,-c_1\rho(t)]$ is analogous.
We set 
\[
\O_t = \singl {\sigma \in \C \st \arg \big( \sigma - c_1\rho(t)  \big) \in \left[0 , \frac \pi 8 \right]}.
\]
For $\sigma \in \O_t$ we define the contour $\G_\sigma = \G_- \cup \G_{0,\sigma} \cup \G_{+,\sigma}$ where $\G_- = \G \cap \singl{\Re(z) \leq 3}$, $\G_{0,\sigma}$ is the line segment joining the points $3 - i\g /9$ and $3+i\Im(\sigma) + i$ and $\G_{+,\sigma}$ is the half-line $\singl{\Re(z) \geq 3, \Im(z) = \Im(\sigma) + 1}$. In \eqref{def-Ssigma} we replace the contour $\G$ by $\G_\sigma$, and we denote by $S_-(\sigma)$, $S_0(\sigma)$ and $S_+(\sigma)$ the contributions of $\G_-$, $\G_{0,\sigma}$ and $\G_{+,\sigma}$, respectively. There exists $\e_0 > 0$ such that for $\sigma \in \O$ and $z \in \G_- \cup \G_{0,\sigma}$ we have 
\[
\abs{e^{-(z-\sigma)^2/2}} \leq e^{-\e_0(\rho(t)^2 + \abs \sigma^2 + \abs z^2)},
\]
so 
\begin{equation} \label{estim-S-}
\nr{S_-(\sigma)} + \nr{S_0(\sigma)} \lesssim e^{-\e_0(\rho(t)^2 + \abs \sigma^2)} \nr{W_0}.
\end{equation}
On the other hand
\begin{equation} \label{estim-S+}
\nr{S_+(\sigma)} \lesssim \int_{\y = 3}^{+\infty} \frac {e^{-((\y-\Re(\sigma))^2 - 1 )/2}}{(1+ \y)^k} \frac 1 {\Im(\sigma)+1} \nr{W_0} \, d\y \lesssim \frac 1 {(1+\Re(\sigma))^k} \frac 1 {\Im(\sigma) +1}\nr{W_0}.
\end{equation}
The function $S_+$ is holomorphic in $\C$. We consider the contour
\[
\Sigma_t = \singl{c_1 \rho(t) + \y e^{\frac {i\pi}8} , \y \geq 0}.
\]
By \eqref{estim-S+} we have 
\begin{equation*} 
\int_{\sigma \geq c_1 \rho(t)} S_+(\sigma) \, d\sigma  =  \int_{\sigma \in \Sigma_t} S_+(\sigma) \, d\sigma
\end{equation*}
and
\[
\nr{\int_{\sigma \in \Sigma_t} S_+(\sigma) \, d\sigma} \lesssim \int_{\y \geq 0} \frac 1 {(c_1\rho(t) + \y)^k} \frac 1 {1 + \y}\nr{W_0} \, d\y \lesssim \frac {\ln(\rho(t))}{\rho(t)^k} \nr{W_0}.
\]
With \eqref{Jt0} and \eqref{estim-S-} this yields
\begin{equation} \label{estim-Jto}
\nr{J_t(0)}_{\EEN} \lesssim \frac{\ln(t)^{k/2+1}}{t^{k/2}} \nr{W_0}_{\EEN}.
\end{equation}

\stepp For the integral of $K_t(\th)$ we proceed similarly, but on the other side of the real axis. We write 
\begin{equation*} 
K_t(\th) = \frac 1 {2i\pi}  \frac 1 {\sqrt {2\pi}}  \int_{\abs{\sigma} \geq c_1 \rho(t)} S_\th^*(\sigma) \, d\sigma,
\end{equation*}
where for $\sigma \in \R$ we have set 
\begin{equation*} 
S_\th^*(\sigma) = \int_{z \in \G} \h_1(\Re(z)) e^{-(z-\sigma)^2 / 2} \frac {e^{-i\th z}}{(z-i)^k} W_0 \, dz.
\end{equation*}
Again we only consider the integral for $\s$ over $[c_1 \rho(t),+\infty[$. We denote by $\O_t^*$ the image of $\O_t$ by complex conjugation.
For $\sigma \in \O_t^*$ we define the contour $\G_\sigma^* = \G_- \cup \G_{0,\sigma}^* \cup \G_{+,\sigma}^*$ where $\G_{0,\sigma}^*$ is the line segment joining the points $3 - i\g /9$ and $3 + i\Im(\sigma)-i$ and $\G_{+,\sigma}^*$ is the half-line $\singl{\Re(z) \geq 3, \Im(z) = \Im(\sigma)-1}$. We denote by $S_{-,\th}^*(\sigma)$, $S_{0,\th}^*(\sigma)$ and $S_{+,\th}^*(\sigma)$ the contributions of $\G_-$, $\G_{0,\sigma}^*$ and $\G_{+,\sigma}^*$ in $S_\th^*(\sigma)$, respectively. As above 
\begin{equation} \label{estim-S-star}
\nr{S_{-,\th}^*(\sigma)} + \nr{S_{0,\th}^*(\sigma)} \lesssim e^{-\e_0(\rho(t)^2 + \abs \sigma^2)} \nr{W_0}.
\end{equation}
On the other hand
\begin{equation*}
\nr{S_{+,\th}^*(\sigma)} \leq \int_{\y = 3}^{+\infty} \frac {e^{-((\Re(\sigma)-\y)^2 - 1 )/2}}{(1+ \y)^k} e^{-\th (1+\abs{\Im(\s)})} \nr{W_0} \, d\y \lesssim \frac 1 {(1+\Re(\sigma))^k} e^{-\th (1+\abs{\Im(\s)})}\nr{W_0}.
\end{equation*}
Using the complex conjugation $\Sigma_t^*$ of $\Sigma_t$ we obtain
\begin{align*}
\int_{\th = 0}^t \nr{\int_{\sigma \in \Sigma_t^*} S_{+,\th}^*(\sigma) \, d\sigma} d\th 
& \lesssim \int_{\y \geq 0} \frac 1 {(c_1\rho(t) + \y)^k} \int_0^{+\infty} e^{-\th (1+\sin(\pi/8) \y)} \nr{W_0} \, d\th  \, d\y\\
& \lesssim \int_{\y \geq 0} \frac 1 {(c_1\rho(t) + \y)^k} \frac 1 {1 + \sin(\pi/8) \y} \nr{W_0}  \, d\y \\
& \lesssim \frac {\ln(\rho(t))}{\rho(t)^k} \nr{W_0},
\end{align*}
hence
\begin{equation*} 
\nr{\int_0^t e^{-i(t-\th)\AcN} K_t(\th) \, d\th}_{\EEN} \lesssim \int_0^t \nr{K_t(\th)}_{\EEN} \, d\th \lesssim \frac{\ln(t)^{k/2+1}}{t^{k/2}} \nr{W_0}_{\EEN}.
\end{equation*}
With \eqref{dec-I12} and \eqref{estim-Jto} this gives 
\begin{equation} \label{estim-I12}
\nr{I_{12}(t)}_{\EEN} \lesssim \frac{\ln(t)^{k/2+1}}{t^{k/2}} \nr{W_0}_{\EEN}.
\end{equation}

\stepp With \eqref{estim-ITh1}, \eqref{estim-ITh2}, \eqref{estim-Iheat}, \eqref{estim-I11} and \eqref{estim-I12} we obtain \eqref{estim-UV1} and \eqref{estim-UV2} up to a factor $e^{t\m}$. But all these estimates are uniform in $\m \in ]0,1[$, so we can take the limit $\m \to 0$. This gives \eqref{estim-UV1} and \eqref{estim-UV2} and concludes the proofs of Theorems \ref{th-energy-decay} and \ref{th-chaleur}.
\end{proof}

\section{Related problems} \label{sec-dirichlet}

In this section we discuss several problems close to \eqref{wave-neumann}. More precisely we explain how our analysis provides results about the global and local energy decay for damped wave equation on $\R^d$, on the wave guide with Dirichlet boundary conditions, and the damped Klein-Gordon equation.\\

\subsection{The damped wave equation in \texorpdfstring{$\R^d$}{Rd}}

We begin with Theorem \ref{th-Rd} about the energy decay on the Euclidean space. The analysis of low frequencies is the same as for the wave guide. Indeed we saw that the main contribution was given by functions which do not depend on $y \in \o$ but only on $x \in \R^d$. There is nothing else on $\R^d$. The main difference with the wave guide is that on $\R^d$ any classical trajectory goes to infinity and therefore meet the damping. Thus by Proposition \ref{prop-estim-res-Rd} we obtain that the corresponding resolvent $(\Ac-\t)\inv$ (where $\Ac$ is defined as $\AcN$ but on $\EE = \dot H^1(\R^d) \times L^2(\R^d)$ and there is no boundary so no boundary condition) is uniformly bounded on $\EE$ for $\abs \t \geq 1$. In this case we know that the contribution of high frequencies decays exponentially and without loss of derivative (see for instance \cite{royer-diss-schrodinger-guide} for a proof of the energy decay with a uniform bound on the resolvent in a slightly different setting). Thus the energy decay in 
this case is similar to the results 
on the wave guide except that the energy decay is only limited by the contribution of low frequencies. This is what was stated in Theorem \ref{th-Rd}.

\subsection{The damped wave equation on the wave guide with Dirichlet boundary condition}

Now we discuss the damped wave equation on the wave guide $\O$ with Dirichlet boundary condition:
\begin{equation} \label{wave-dirichlet} 
\begin{cases}
\partial_t^2 u  -\D u + a(x) \partial_t u = 0  & \text{on  }  \R_+ \times \O, \\
u = 0 & \text{on } \R_+ \times \partial \O,\\
\restr{(u , \partial_t u )}{t = 0} = (u_0, u_1) &  \text{on } \O.
\end{cases}
\end{equation}

Here we can replace the assumption \eqref{hyp-amort-inf} on the absorption index by the following weaker version: $a$ is bounded and there exist a compact subset $K$ of $\bar \O$ and $c_0 > 0$ such that 
\begin{equation} \label{hyp-amort-Dir}
\forall (x,y) \in \O \setminus K, \quad a(x,y) \geq c_0.
\end{equation}

Contrary to the Euclidean case, we will have the same problems as for \eqref{wave-neumann} for the contribution of high frequencies, but there is no longer any restriction due to low frequencies (which is why the absorption index no longer has to go to a constant at infinity). 

The operator $\AcD$ is defined by 
\[
\AcD = \begin{pmatrix} 0 & I \\ -\D & -i a \end{pmatrix}
\]
on the Hilbert space 
\[
\EED = H^1_0(\O) \times L^2(\O)
\]
with domain
\begin{equation} \label{dom-Ac-D}
\Dom(\AcD) = \singl{(u,v) \in \EED \st \AcD (u,v) \in \EED}.
\end{equation}

\begin{proposition} \label{prop-res-reelle-Dir}
The operator $\AcD$ is maximal dissipative on $\EED$. Moreover any $\t \in \R$ belongs to the resolvent set of $\AcD$.
\end{proposition}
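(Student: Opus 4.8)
The plan is to follow the proofs of Propositions \ref{prop-AcN-diss} and \ref{prop-inter-freq}, the only structural change being that, unlike $-\DN$, the operator $-\DDir$ is boundedly invertible: if $\TD$ denotes the Dirichlet realization of the Laplacian on $\o$, with eigenvalues $0 < \l_1^D \leq \l_2^D \leq \dots$, then $-\DDir \geq \l_1^D$ on $H^1_0(\O)$ by the Poincar\'e inequality (equivalently $\s(-\DDir) = [\l_1^D,+\infty[$ by separation of variables). First I would check that $\AcD$ is maximal dissipative exactly as in Proposition \ref{prop-AcN-diss}: for $U = (u,v) \in \Dom(\AcD)$ we have $\Im \innp{\AcD U}{U}_{\EED} = -\innp{av}{v}_{L^2(\O)} \leq 0$, so $(\AcD - i)$ is injective with closed range, and given $(f,g)$ with $f \in H^1_0(\O)$, $g \in L^2(\O)$, the equation $(\AcD - i)U = (f,g)$ is solved by $u = (-\DDir + a + 1)\inv\big(g + i(a+1)f\big)$, $v = iu + f$, which lies in $\Dom(\AcD)$ (here $-\DDir + a + 1 \geq 1$ is selfadjoint and invertible since $a \geq 0$); hence $\Ran(\AcD - i) = \EED$.

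Next I would show $0 \in \rho(\AcD)$, and this is where the Dirichlet case is genuinely simpler. Since $-\DDir$ is boundedly invertible, the equation $\AcD U = F = (f,g)$ is solved by $v = f$, $u = (-\DDir)\inv(g + iaf)$, which defines a bounded operator on $\EED$, and injectivity is immediate. In particular there is no singularity of the resolvent at the bottom of the spectrum, contrary to Theorem \ref{th-low-freq}, which is why no condition on $a$ at infinity beyond \eqref{hyp-amort-Dir} is needed.

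For $\t \in \R \setminus \singl 0$ I would reproduce the argument of Proposition \ref{prop-inter-freq}, whose only missing ingredient is the Dirichlet analogue of Proposition \ref{prop-RNt}: the resolvent $\RDt := (-\DDir - i\t a - \t^2)\inv$ is well defined and extends to an operator in $\Lc(H^{-1}(\O), H^1_0(\O))$. Since $a = a(x)$ depends only on $x$, I would use the separation of variables of Section \ref{sec-separation-variables} with $\TD$ in place of $\TN$, which reduces this to showing that $\t^2 - \l_k^D$ belongs to the resolvent set of $-\Dx - i\t a$ on $L^2(\R^d)$ for every $k$, with operator norms summable in $k$. For $\t \neq 0$ the whole real axis lies in $\rho(-\Dx - i\t a)$: there is no essential spectrum on $\R$ because $a \geq c_0$ outside a compact set (a Weyl sequence concentrating at infinity would contradict $\t \innp{a u_n}{u_n} \to 0$), and there is no real eigenvalue, by the identity $\t \int_{\R^d} a \abs u^2 = -\Im \innp{(-\Dx - i\t a - \m)u}{u}$ together with unique continuation (if $(-\Dx - i\t a - \m)u = 0$ then $u$ vanishes where $a > 0$, hence outside a ball, hence everywhere). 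Moreover $-\Dx - i\t a$ is $m$-accretive, so $\nr{(-\Dx - i\t a - (\t^2-\l_k^D))\inv}_{\Lc(L^2(\R^d))} \leq (\l_k^D - \t^2)\inv$ as soon as $\l_k^D > \t^2$, which yields the summability. The resolvent identity with $(-\DDir + 1)\inv$ then upgrades $\RDt$ to an operator in $\Lc(H^{-1}(\O), H^1_0(\O))$, as in Proposition \ref{prop-RNt}. Given this, the computation leading to Remark \ref{rem-expr-res} applies to $\t$, and since $\RDz$ depends analytically on $z$ near $\t$, the resolvent $(\AcD - z)\inv$ (expressed through $\RDz$ by the analogue of \eqref{expr-res}) is bounded uniformly for $z = \t + i\m$, $\m \in \,]0,1]$; because the resolvent blows up near the spectrum, $\t \in \rho(\AcD)$.

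The main obstacle is precisely the well-definedness of $\RDt$, that is, the absence of real spectrum for $-\DDir - i\t a$ when $\t \neq 0$. In the Neumann case this followed from a Weyl-type perturbation argument based on $a \to 1$ at infinity, which is unavailable here; the substitute is the separation of variables together with the uniform positivity of $a$ near infinity provided by \eqref{hyp-amort-Dir}, and the slightly delicate point is to keep the estimates uniform over all transverse modes (accretivity for large $\l_k^D$, continuity of the resolvent for the finitely many $k$ with $\t^2 - \l_k^D$ in a bounded set).
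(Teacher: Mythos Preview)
Your treatment of maximal dissipativeness and of the point $\t = 0$ matches the paper. The difference, and the gap, lies in the case $\t \neq 0$.

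You assume that $a = a(x)$ depends only on $x$, and this drives your whole argument via separation of variables. But hypothesis \eqref{hyp-amort-Dir} explicitly allows $a = a(x,y)$; the notation ``$a(x)$'' in \eqref{wave-dirichlet} is just shorthand. With a genuine $y$-dependence the transverse decomposition no longer diagonalizes $-\DDir - i\t a$, and your reduction to resolvent estimates for $-\Dx - i\t a$ on $\R^d$ collapses. So the proof, as written, does not cover the stated hypothesis.

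The paper fills exactly this gap with a different idea, working directly on $\O$. Set $\tilde a = a + c_0 \1_K$, so that $\tilde a \geq c_0$ everywhere. Then $-\D - i\t(\tilde a - c_0)$ is maximal dissipative, so its spectrum lies in the closed lower half-plane, and consequently $\s(-\D - i\t \tilde a) \subset \singl{\Im(\z) \leq -\t c_0}$. Now $a - \tilde a = -c_0 \1_K$ is compactly supported, hence a relatively compact perturbation, and the Weyl theorem (in the non-selfadjoint form of Appendix \ref{sec-weyl}) pushes the essential spectrum of $-\D - i\t a$ into the same half-plane. This replaces the unavailable ``$a-1$ relatively compact'' argument from the Neumann case without any assumption on the $y$-dependence of $a$. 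From there one concludes as in Proposition \ref{prop-RNt}: $\t^2$ could only be an eigenvalue, which is ruled out by unique continuation.

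Two smaller remarks on your argument in the restricted case $a = a(x)$. First, you do not need the mode-wise resolvent norms to be \emph{summable} in $k$; orthogonality of the $\tilde\f_k$ means a uniform bound suffices, and this is what your accretivity estimate actually gives. Second, your Weyl-sequence justification that $\R$ avoids $\s_\ess(-\Dx - i\t a)$ needs the fact that one can choose singular sequences escaping any compact; this is correct but deserves a reference, and the $\tilde a$ trick above is a cleaner route even in this special case.
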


\begin{proof}
\stepp For the maximal dissipativeness we proceed as with Neumann boundary conditions (see also \cite{alouik02} where the similar statement is proved in an exterior domain).

\stepp Now let $\t > 0$ (the case $\t < 0$ is analogous). Since we have weakened the assumption on the absorption index, we have to adapt the proof of Proposition \ref{prop-RNt}. Let $\tilde a = a + c_0 \1_{K}$. Then $\tilde a \geq c_0$ everywhere on $\O$. This implies that the operator $-\D - i\t( \tilde a - c_0)$ is maximal dissipative, and hence its spectrum is contained in the lower half-plane. Therefore the spectrum of $-\D -i\t \tilde a(x)$ (and in particular its essential spectrum) is a subset of $\singl{\Im(\z) \leq - \t c_0}$. Since $-\D -i\t a(x)$ is a bounded and relatively compact perturbation of $-\D -i\t \tilde a(x)$, we deduce by the Weyl Theorem (see Theorem \ref{th-weyl} in appendix, applied with $\Uc$ containing $\singl{\Im(z) > - \t c_0}$) that its essential spectrum is included in $\singl{\Im(\z) \leq - \t c_0}$. Then $\t^2$ belongs to the spectrum of $-\D -i\t a$ if and only it is an eigenvalue. As in the Neumann case we can check that this is not the case, that $(-\DDir - iza - z^2)\
\inv$ extends to a bounded operator in $\Lc(H^1(\O)',H^1(\O))$ uniformly in $z = \t + i\m$ with $\m \in [0,1]$, and consequently that 
$\t$ belongs to the resolvent set of $\AcD$.

\stepp It remains to prove that 0 also belongs to the resolvent set of $\AcD$. We use the separation of variables as in Section \ref{sec-separation-variables}. The only difference is that in this case the transverse operator is the Dirichlet Laplacian $\TD$ on $\o$, whose first eigenvalue is positive. We denote by $0 < \tilde \l_1 \leq \tilde \l_2 \leq \dots$ the eigenvalues for the transverse operator $\TD$ and by $(\tilde \f_k)_{k \in \N^*}$ a corresponding orthonormal sequence of eigenfunctions. Then as in Section \ref{sec-separation-variables} for $u = \sum_{k\in\N^*} u_k(x) \tilde \f_k (y) \in \Dom(-\DDir)$ we can write
\begin{align*}
\innp{-\DDir u}{u}_{L^2(\O)}
& = \sum_{k \in \N^*} \innp{(-\Dx + \tilde \l_k) u_k}{u_k}_{L^2(\R^d)}\\
& \geq \tilde \l_1 \sum_{k \in \N^*} \nr{u_k}_{L^2(\R^d)}^2\\
& \geq \tilde \l_1 \nr{u}_{L^2(\O)}^2.
\end{align*}
This proves that $-\DDir$ is invertible. Then, for $z \in \C$ small enough, this is then also the case for $(-\DDir - iza - z^2)$. We deduce that 0 is not in the spectrum of $\AcD$.
\end{proof}

With the same high frequency estimate as in Theorem \ref{th-high-freq} this proves that there exists $C \geq 0$ such that for all $t \in \R$ we have 
\begin{equation} \label{estim-res-dir}
\nr{(\AcD - \t)\inv}_{\Lc(\EED} \leq C \pppg \t^2.
\end{equation}
It only remains to apply the abstract result of \cite{BorichevTo10} to obtain the global energy decay for \eqref{wave-dirichlet}:

\begin{theorem} \label{th-energy-decay-dirichlet}
Let $k \in \N^*$. Then there exists $C \geq 0$ such that for all $U_0 \in \Dom(\AcD^k)$ and $t \geq 1$ we have 
\[
\nr{e^{-it\AcD} U_0}_{\EED} \leq C t^{-\frac k 2} \nr{\AcD^k U_0}_{\EED}.
\]
\end{theorem}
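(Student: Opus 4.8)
The plan is to obtain Theorem~\ref{th-energy-decay-dirichlet} as a formal consequence of the resolvent bound \eqref{estim-res-dir} through the quantified decay theorem of Borichev and Tomilov \cite{BorichevTo10} (see also \cite{Batty-Ch-To}). Since $\AcD$ is maximal dissipative by Proposition~\ref{prop-res-reelle-Dir}, the Hille--Yosida theorem shows that $-i\AcD$ generates a contraction semigroup $(e^{-it\AcD})_{t\geq 0}$ on the Hilbert space $\EED$, exactly as for the Neumann problem \eqref{wave-Ac}. In particular this semigroup is bounded, which is the first hypothesis required to apply \cite{BorichevTo10}.

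Next I would verify that the generator $-i\AcD$ has the imaginary axis in its resolvent set and satisfies a polynomial resolvent bound there. For $s\in\R$ one has $is\in\rho(-i\AcD)$ if and only if $-s\in\rho(\AcD)$, and the second part of Proposition~\ref{prop-res-reelle-Dir} guarantees that every real number, $0$ included, belongs to $\rho(\AcD)$; moreover
\[
\bigl(is-(-i\AcD)\bigr)\inv = -i\,(\AcD+s)\inv ,
\]
so that $\nr{(is-(-i\AcD))\inv}_{\Lc(\EED)} = \nr{(\AcD+s)\inv}_{\Lc(\EED)}$. Combining this identity with \eqref{estim-res-dir} and the fact that $\pppg \cdot$ is even gives
\[
\nr{\bigl(is-(-i\AcD)\bigr)\inv}_{\Lc(\EED)} \leq C\,\pppg s ^{2}, \qquad s\in\R ,
\]
so the resolvent of the generator grows at most like $\abs s^{2}$ on the imaginary axis, i.e.\ we are in the situation of \cite{BorichevTo10} with exponent $\alpha=2$.

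The Borichev--Tomilov theorem then yields $\nr{e^{-it\AcD}(-i\AcD)\inv}_{\Lc(\EED)} = O(t^{-1/2})$ as $t\to\infty$, and the version for powers of the generator follows by iteration: since $(-i\AcD)\inv$ is bounded and commutes with the semigroup, $\bigl(e^{-\frac{it}{k}\AcD}(-i\AcD)\inv\bigr)^{k} = e^{-it\AcD}(-i\AcD)^{-k}$, whence $\nr{e^{-it\AcD}(-i\AcD)^{-k}}_{\Lc(\EED)} = O(t^{-k/2})$. Because $0\in\rho(\AcD)$, the operator $\AcD^{-k}$ is bounded on $\EED$, it differs from $(-i\AcD)^{-k}$ only by a factor of modulus one, and for $U_0\in\Dom(\AcD^{k})$ we have $U_0 = \AcD^{-k}(\AcD^{k}U_0)$; applying the previous bound to $\AcD^{k}U_0$ gives, for $t\geq 1$,
\[
\nr{e^{-it\AcD}U_0}_{\EED} \leq \frac{C}{t^{k/2}}\,\nr{\AcD^{k}U_0}_{\EED},
\]
which is the claimed estimate.

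In this argument there is no real obstacle: once \eqref{estim-res-dir} is available the proof is purely formal, the only points to be careful about being the translation between the generator $-i\AcD$ and $\AcD$ and the bookkeeping of the exponent $\alpha=2$. The genuine work lies upstream, in establishing \eqref{estim-res-dir}, and more precisely the Dirichlet analogue of the high frequency estimate of Theorem~\ref{th-high-freq}: following the scheme of Proposition~\ref{prop-RNt-high-freq} one reduces to an absorption index depending only on $x$ and bounded below outside a compact set, applies the separation of variables of Section~\ref{sec-separation-variables} now with the transverse Dirichlet Laplacian $\TD$ (whose spectrum is bounded below by $\tilde\l_1>0$, which only helps), and invokes the Euclidean resolvent estimate of Proposition~\ref{prop-estim-res-Rd}; the remaining values of the spectral parameter, including $0$, are covered by Proposition~\ref{prop-res-reelle-Dir} together with continuity of the resolvent.
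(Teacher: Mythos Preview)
Your proposal is correct and follows exactly the approach of the paper: once the resolvent bound \eqref{estim-res-dir} is established from Proposition~\ref{prop-res-reelle-Dir} and the Dirichlet analogue of Theorem~\ref{th-high-freq}, the paper simply invokes the abstract result of \cite{BorichevTo10}, and your write-up makes the verification of its hypotheses and the passage to higher powers of the generator explicit.
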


\subsection{Neumann boundary condition when \texorpdfstring{$a$}{a} is constant on each section}

We have seen that on the Euclidean space the energy decay is limited by the contribution of low frequencies, while on the wave guide with Dirichlet boundary conditions it is mainly a high frequency problem. The difficulty in Section \ref{sec-decay-difference} about the wave guide with Neumann boundary condition was that we had to deal with both difficulties at the same time.

In the analysis of $\RNz$ is Section \ref{sec-resolvent} we saw that the main contribution is given by $\RNz P_0$ for low frequencies (see \eqref{def-RcC}). On the other hand, for high frequencies the difficulty come from the contribution of high transverse frequencies. More precisely, $\RNz (1-P_0)$ extends to a holomorphic function around 0 (as for Dirichlet boundary condition), while $\RNz P_0$ behaves nicely for high frequencies (as in the Euclidean space).

For $U = (u,v) \in \EEN$ we set 
\begin{equation} \label{def-Pc0}
\Pc_0 U = \begin{pmatrix} P_0 u \\ P_0 v \end{pmatrix} \quad \text{and} \quad \Pc_0^\bot U = \begin{pmatrix} (1-P_0) u \\ (1-P_0) v \end{pmatrix}.
\end{equation}

Because of the factor $a$ in \eqref{expr-res}, we cannot simply say that $(\AcN-z)\inv \Pc_0$ behaves nicely for high frequencies and that $(\AcN-z)\inv \Pc_0^\bot$ behaves nicely for low frequencies. However, if $a$ only depends on $x$, then it commutes with $P_0$ and in this case we can reduce the Neumann problem to two simpler analyses.

\begin{proposition} \label{prop-low-freq-bot}
Assume that $a$ does not depend on $y \in \o$. By restriction, the operator $\AcN$ defines a maximal dissipative operator $\Acbot$ on $\EEbot = \Pc_0^\bot \EEN$ (with domain $\Dom(\Acbot) = \Dom(\AcN) \cap \EEbot$) such that $\s(\Acbot) \cap \R = \emptyset$.
\end{proposition}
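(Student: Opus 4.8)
The plan is to use the hypothesis that $a$ depends only on $x$ to turn $\EEbot$ into an invariant subspace, and then to transfer to its restriction the spectral information already available for $\AcN$. First I would observe that since $a = a(x)$ commutes with the projection $P_0$ of \eqref{def-P0}, the operator $\AcN$ commutes with $\Pc_0^\bot$ on $\Dom(\AcN)$, and that $\Pc_0^\bot$ is a bounded projection of $\EEN$ which preserves $\Dom(\AcN)$: indeed $\nabla_x(P_0 u) = P_0 \nabla_x u$ and $\nabla_y(P_0 u) = 0$, so $\nr{\Pc_0^\bot U}_{\EEN} \leq \nr{U}_{\EEN}$, and the Neumann condition $\partial_\n(P_0 u) = 0$ is automatic. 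Applying $\Pc_0^\bot$ to $(\AcN - i)(\AcN - i)\inv U = U$ then shows that $(\AcN - i)\inv$ commutes with $\Pc_0^\bot$ as well. Hence $\EEbot = \Ran(\Pc_0^\bot)$ is a closed subspace invariant under both $\AcN$ and $(\AcN - i)\inv$, the restriction $\Acbot$ with domain $\Dom(\AcN) \cap \EEbot$ is a well-defined closed operator, it is dissipative as a restriction of $\AcN$, and it is maximal dissipative because $\restr{(\AcN - i)\inv}{\EEbot}$ is an everywhere defined bounded inverse of $(\Acbot - i)$ on $\EEbot$.

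Next, writing $\EEO = \Pc_0 \EEN$ so that $\EEN = \EEO \oplus \EEbot$, the same computation shows that for every $z$ in the resolvent set of $\AcN$ the resolvent $(\AcN - z)\inv$ commutes with $\Pc_0^\bot$, hence restricts to a bounded inverse of $(\Acbot - z)$ on $\EEbot$. Therefore $\s(\Acbot) \subseteq \s(\AcN)$, and Proposition \ref{prop-inter-freq} already gives $\s(\Acbot) \cap \R \subseteq \singl 0$.

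The remaining, and main, step is to show that $0$ itself lies in the resolvent set of $\Acbot$ — this is the phenomenon announced just before the proposition, namely that $\RNz(1-P_0)$ extends holomorphically across $0$. For this I would run the separation of variables of Section \ref{sec-separation-variables} on $P_0^\bot L^2(\O)$: for $u = \sum_{k \geq 1} u_k \otimes \f_k \in \Dom(\DN) \cap P_0^\bot L^2(\O)$,
\[
\innp{-\DN u}{u}_{L^2(\O)} = \sum_{k \geq 1} \innp{(-\Dx + \l_k) u_k}{u_k}_{L^2(\R^d)} \geq \l_1 \nr{u}_{L^2(\O)}^2,
\]
and $\l_1 > 0$, so $-\DN$ restricted to $P_0^\bot L^2(\O)$ is boundedly invertible. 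Consequently the family obtained by restricting $\RNz = (-\DN - iaz - z^2)\inv$ to $P_0^\bot L^2(\O)$ extends holomorphically through $z = 0$, where it equals $\restr{(-\DN)\inv}{P_0^\bot L^2(\O)}$, and is bounded uniformly for $\abs z$ small. Since $a = a(x)$ sends $P_0^\bot L^2(\O)$ into itself, inserting this into the expression for the resolvent (as in the proof of Proposition \ref{prop-AcN-diss}, where $(u,v)$ is recovered from $(f,g) \in \EEbot$ by $u = \RNz(g + iaf + zf)$ and $v = zu + f$) produces a bounded inverse of $(\Acbot - z)$ for $\abs z$ small, hence $0 \notin \s(\Acbot)$, which together with the previous step yields $\s(\Acbot) \cap \R = \emptyset$. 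I expect this last step to be the only delicate point, everything preceding it being bookkeeping about the reduction; and even here the only real content is the spectral gap $\l_1 > 0$ of the transverse Neumann Laplacian above its kernel, which is exactly what makes the restricted resolvent holomorphic at the origin.
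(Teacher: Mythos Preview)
Your proposal is correct and follows essentially the same route as the paper: invariance of $\EEbot$ under $\AcN$ and its resolvent (you phrase it via commutation with $\Pc_0^\bot$, the paper via a direct separation-of-variables computation), maximal dissipativity from the restricted resolvent, exclusion of nonzero real points by Proposition~\ref{prop-inter-freq}, and exclusion of $0$ via the spectral gap $\l_1>0$ of $\TN$ on $P_0^\bot L^2(\O)$. The paper is terser at the last step, simply referring back to the Dirichlet argument in Proposition~\ref{prop-res-reelle-Dir}; your write-up spells that step out, but the content is the same.
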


\begin{proof}
Let $(u,v) \in \Dom(\AcN) \cap \EEbot$. We define the sequences $\seq u k$ and $\seq v k$ as in \eqref{def-uk}. By \eqref{sec-separation-variables} we have 
\[
-\D u - ia v = \sum_{k \geq 1} \big( - \Dx  u_k + \l_k u_k - i a v_k \big) \otimes \f_k \in \Ran(1-P_0),
\]
so
\[
\AcN \begin{pmatrix} u \\ v \end{pmatrix} = 
\begin{pmatrix}
v \\
- \D u -ia v
\end{pmatrix}
\in \EEbot.
\]
Similarly $(\AcN-z)\inv$ leaves $\EEbot$ invariant for any $z \in \C_+$, and the restriction is a bounded inverse for $(\Acbot -z)$. This proves that $\Acbot$ is maximal dissipative on $\EEbot$.
As for $\AcN$, any $\t \in \R \setminus \singl 0$ belongs to the resolvent set of $\Acbot$. And for $\t = 0$ we follow the same proof as for $\AcD$.
\end{proof}

Thus when the absorption index only depends on $x$ we can deal sperately with the projections on $\Pc_0\EEN$ and $\Pc_0^\bot \EEN$ of the solution. On $\Pc_0 \EEN$ everything is exactly as in the Euclidean space and on $\EEbot$ the resolvent $(\Acbot-z)\inv$ satisfies the same estimate as in \eqref{estim-res-dir}. In particular, as for Theorem \ref{th-energy-decay-dirichlet} we can use \cite{BorichevTo10} and hence we have no logarithmic loss as was the case in Theorem \ref{th-energy-decay}.

\begin{theorem} \label{th-energy-decay-bis} 
Assume that $a$ only depends on $x \in \R^d$.
\begin{enumerate} [(i)]
\item \label{th-item-bot}
Let $k \in \N^*$. Then there exists $C \geq 0$ such that for $t \geq 1$ and $U \in \Dom(\AcN^k)$ we have 
\[
\nr{e^{-it\AcN} \Pc_0^\bot U}_{\EEN} \leq C t^{-\frac k 2} \nr{\Pc_0^\bot U}_{\Dom(A^k)}.
\]
\item 
Let $s_1,s_2 \in \big[0,\frac d 2]$, $\k > 1$, $s \in [0,\min(d,\rho)[$ with $s \leq 1$, $\d_1 \geq  \k s_1 + s$ and $\d_2 \geq  \k s_2 + s$. Then there exists $C \geq 0$ such that for $t \geq 1$, $u_0,u_1 \in C_0^\infty(\bar \O)$ we have 
\[
\nr{e^{-it\AcN} \Pc_0 U}_{L^{2,-\d_1}(\O)^2} \leq C t^{-\frac 12 (1 + s_1 + s_2 + s)} \nr{\Pc_0 U}_{L^{2,\d_2}(\O)^2}.
\]
\end{enumerate}
\end{theorem}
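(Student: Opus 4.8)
The plan is to use the fact that, since the absorption index depends only on $x$, it commutes with the transverse averaging projection $P_0$; hence $\Pc_0$ and $\Pc_0^\bot$ commute with $\AcN$ and with its resolvent, the space splits as an orthogonal sum $\EEN = \Pc_0\EEN \oplus \EEbot$ of two invariant closed subspaces, and $\AcN$ is block diagonal with respect to it. Consequently $e^{-it\AcN}$ splits along this decomposition, $\s(\AcN) = \s(\AcN|_{\Pc_0\EEN}) \cup \s(\Acbot)$, and the norms of the corresponding resolvents are dominated by $\nr{(\AcN-z)\inv}_{\Lc(\EEN)}$. So I would treat the two summands separately; each reduces to a situation already understood — the wave guide with Dirichlet conditions for $\EEbot$, and the Euclidean space for $\Pc_0\EEN$.

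For part~(i), I would work with $\Acbot$, which by Proposition~\ref{prop-low-freq-bot} is maximal dissipative with no real spectrum. Because $\Acbot$ is the restriction of $\AcN$ to an invariant summand, the high-frequency bound of Theorem~\ref{th-high-freq} passes to $\EEbot$, giving $\nr{(\Acbot-\t)\inv}_{\Lc(\EEbot)} \lesssim \t^2$ for $\abs\t$ large; and since the whole real line lies in $\rho(\Acbot)$, the resolvent $\t \mapsto (\Acbot-\t)\inv$ is analytic there, hence bounded on any compact interval. Gluing these two regimes yields $\nr{(\Acbot-\t)\inv}_{\Lc(\EEbot)} \leq C \pppg \t^2$ for all $\t \in \R$, exactly as in \eqref{estim-res-dir}. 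Since $-i\Acbot$ generates a bounded $C_0$-semigroup and $i\R$ lies in its resolvent set, the abstract theorem of \cite{BorichevTo10} then converts this polynomial resolvent growth into the decay $\nr{e^{-it\Acbot}\Acbot^{-k}}_{\Lc(\EEbot)} \lesssim t^{-k/2}$; restricting back and using that $\Pc_0^\bot$ commutes with powers of $\AcN$ gives the stated estimate. This step parallels the proof of Theorem~\ref{th-energy-decay-dirichlet}.

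For part~(ii), I would identify $\Pc_0\EEN$ with the Euclidean energy space $\EE = \dot H^1(\R^d) \times L^2(\R^d)$ by restriction to the variable $x$ (an isomorphism, up to the constant $\abs\o$ in the norms): a pair constant in $y$ automatically satisfies the Neumann condition, $-\DN$ acts on it as $-\Dx$, and multiplication by $a(x)$ is unchanged, so the restriction of $\AcN$ to $\Pc_0\EEN$ is, up to the factor $\abs\o$, the operator $\Ac$ on $\EE$ from the Euclidean setting discussed at the beginning of this section. Hence $e^{-it\AcN}\Pc_0 U$ corresponds under this identification to $e^{-it\Ac}$ applied to the image of $\Pc_0 U$, and the claimed decay is precisely Theorem~\ref{th-Rd}; translating the weighted norms between $\O$ and $\R^d$ only introduces the harmless constant $\abs\o$.

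The main obstacle — a mild one — is the first step of part~(i): checking carefully that Theorem~\ref{th-high-freq} genuinely descends to $\EEbot$ (which is where the block-diagonal structure is used) and assembling it with the near-zero analyticity into the clean global bound $\pppg\t^2$, so that all hypotheses of \cite{BorichevTo10} (bounded semigroup, $i\R \subset \rho$, polynomial resolvent growth, with the correct sign conventions) are in place. Part~(ii) is essentially bookkeeping: one only has to keep track of the factor $\abs\o$ and of which weighted norm on the data is required, with no new analysis beyond Theorem~\ref{th-Rd}.
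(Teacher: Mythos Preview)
Your proposal is correct and follows essentially the same approach as the paper. The paper does not give a separate formal proof of this theorem; it simply remarks in the text that on $\Pc_0\EEN$ ``everything is exactly as in the Euclidean space'' (your part~(ii) via Theorem~\ref{th-Rd}) and that on $\EEbot$ the resolvent satisfies the same bound as in \eqref{estim-res-dir}, so one can invoke \cite{BorichevTo10} exactly as for Theorem~\ref{th-energy-decay-dirichlet} (your part~(i)).
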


\subsection{The damped Klein-Gordon equation}

\newcommand{\KG}{\mathrm{KG}}

We finally discuss the damped Klein-Gordon equation 
\[
\partial_t^2 u - \D u + m^2 u + a \partial_t u = 0,
\]
where $m > 0$, either on $\R^d$ or on $\O$ with Neumann or Dirichlet boundary conditions. The corresponding operator is now 
\[
\AcKG = \begin{pmatrix} 0 & I \\ -\D + m^2 & -i a \end{pmatrix},
\]
with suitable domain depending on the context. The corresponding resolvent on $L^2(\O)$ is then 
\[
R_{\KG}(z) = \big( - \D + m^2 - i z a - z^2 \big)\inv.
\]

If $a = 0$ we can check that the spectrum of $\AcKG$ is $\R \setminus ]-m,m[$. Moreover for $z$ close to $m$ or $-m$ the resolvent $R_{\KG}(z)$ behaves like $(-\D -\z)\inv$ for $\z$ close to 0. This explains why we have the same rate of decay for the energy as for the Schr\"odinger equation. With compactly supported damping, we have the same kind of behavior (see \cite{Malloug16}).

For the damped version of the Klein-Gordon equation with damping effective at infinity, the situation is quite different. Indeed, for the same reason as for the wave equation, any $\t \in \R \setminus \singl 0$ belongs to the resolvent set of $\AcKG$ (in particular there is no singularity for $\t = \pm m$). But, for the same reason as for the undamped Klein-Gordon equation, 0 also belongs to the resolvent set of $\AcKG$. Thus, there is no ``low frequency effect'' at all for the damped Klein-Gordon equation. On the other hand, for $\abs \t \gg 1$ the resolvent $R_{\mathrm{KG}}(\t)$ behaves as the wave analog ($m^2$ is negligible compared to $\t^2$) so, finally, the energy decay for the damped Klein-Gordon equation is the same as for the contribution of high frequencies for the corresponding wave equation. Thus, on the Euclidean space we have a uniformly bounded resolvent on all the real axis, hence a uniform exponential global energy decay for the time dependant problem (this is in fact a particular case of \
cite{BurqJo}). And in a wave guide with Neumann or Dirichlet boundary conditions we obtain the same decay as in Theorem \ref{th-energy-decay-dirichlet}.

\appendix 

\section{High frequency resolvent estimates on \texorpdfstring{$\R^d$}{Rd} with damping at infinity}  \label{sec-high-freq-Rd}

In this appendix we give two proofs for Proposition \ref{prop-estim-res-Rd}. This high frequency resolvent estimate is well known in weighted spaces when $a$ is compactly supported (or decays suitably at infinity). This can be proved either with semiclassical defect measures and the now usual contradiction argument (see \cite{gerardl93,lebeau96,burq02,jecko04}) or with the Mourre theory (see \cite{mourre83,amrein}) for dissipative operators (see \cite{royer-mourre,boussaidg10,boucletr14,royer-mourre-formes}). Both methods can be adapted in this situation. Notice that the setting of Proposition \ref{prop-estim-res-Rd} is quite simple and that both methods will prove to be efficient here. However it is interesting to have both of them for more general situations. For instance, the method with semiclassical measures allows a Schr\"odinger operator whose non-selfadjoint part has no sign (see \cite{royer-nondiss}) and/or is supported by the boundary of the domain (see \cite{royer-diss-wave-guide}). On the other 
hand the Mourre method can be applied to more general operators and requires less regularity.\\

For both proofs we rewrite the problem with semiclassical notation. With $h = \t \inv$ the estimate reads
\begin{equation} \label{estim-res-semiclass}
\nr{(-h^2 \Dx  - i h \a - 1)\inv}_{\Lc(L^2(\R^d))} \lesssim \frac 1 h, \qquad 0 \leq h \ll 1.
\end{equation}
We refer to \cite{zworski} for general results about semiclassical analysis.

\begin{proof}[Proof 1 (with semiclassical measures)]
Assume by contradiction that \eqref{estim-res-semiclass} is wrong. Then we can find sequences $\seq h m \in ]0,1]^\N$, $\seq \th m \in \R^\N$, $\seq u m \in H^2(\R^d)^\N$ and $\seq f m \in L^2(\R^d)^\N$ such that $h_m \to 0$, $\th_m \to 1$, $\nr{f_m}_{L^2(\R^d)} = o(h_m)$ and for all $m \in \N$ we have $\nr{u_m}_{L^2(\R^d)} = 1$ and 
\[
\big(-h_m^2 \Dx - ih_m \a - \th_m \big) u_m = f_m.
\]
The sequence $(u_m)$ is bounded in $L^2(\R^d)$ so, after extraction of a subsequence if necessary, there exists a Radon measure $\m$ on $\R^{2d}$ such that for all $q \in C_0^\infty(\R^{2d})$ we have 
\[
\innp{\Opwm (q) u_m}{u_m} \limt m \infty \int_{\R^{2d}} q \, d\m.
\]
We first observe that
\begin{equation} \label{estim-a-um}
\nr {\a u_m}_{L^2(\R^d)}^2 \lesssim \nr{\sqrt \a u_m}_{L^2(\R^d)}^2 = - \frac 1 {h_m} \Im \innp {(-h_m^2 \Dx - i h_m \a -\th_m)u_m}{u_m} \limt m \infty 0.
\end{equation}
Now let $q \in C^\infty(\R^{2d})$ be such that $q$ and all its derivatives are bounded, and $q(x,\x) = 0$ if $\abs{\abs \x^2 -1} \leq \frac 12$. Then for $m$ large enough we can define $\tilde q_m : (x,\x) \mapsto q(x,\x) / (\abs \x^2 - \th_m)$ and write 
\begin{align} \label{estim-tilde-q}
\lim_{m \to \infty} \innp{\Opwm(q) u_m}{u_m}
& = \lim_{m \to \infty} \innp{\Opwm(\tilde q_m) (-h_m^2 \D_x -\th_m) u_m}{u_m}\\
\nonumber
& = \lim_{m \to \infty} \innp{\Opwm(\tilde q_m) (-h_m^2 \D_x -ih_m \a - \th_m) u_m}{u_m} = 0.
\end{align}
Let $\h \in C_0^\infty(\R^d,[0,1])$ be such that $\a(x) \geq c_0$ on a neighborhood of $\supp(1-\h)$. \eqref{estim-a-um} and \eqref{estim-tilde-q} imply that $\m$ is supported in $\supp(\h) \times \singl{\big| \abs \x^2 -1 \big| \leq \frac 12}$. Moreover for $\tilde \h \in C_0^\infty(\R^d,[0,1])$ such that $\tilde \h(\x) = 1$ if $\big| \abs \x^2-1 \big| \leq \frac 12$ we have 
\[
\int_{\R^{2d}} \h(x) \tilde \h (\x) \, d\m (x,\x) = \lim_{m \to \infty} \innp{\Opwm(\h(x) \tilde \h(\x)) u_m}{u_m} = \lim_{m \to \infty} \innp{u_m}{u_m} = 1,
\]
so $\m \neq 0$. To obtain a contradiction, it remains to show that in fact $\m = 0$. For this we use the invariance of the support of $\m$ by the classical flow. Let $q \in C_0^\infty(\R^{2d},\R)$ be supported in $\singl{(x,\x) \st \big| \abs \x^2 - 1 \big| \leq \frac 34}$. We have 
\begin{align*}
\frac d {dt} \int_{\R^{2d}} q \circ \vf^t \, d\m = \int_{\R^{2d}} \big\{\abs \x^2 , q \circ \vf^t \big\}  \, d\m,
\end{align*}
where $\vf^t (x,\x) = (x+2t\x,\x)$ and $\{p,q\}$ is the Poisson bracket $\partial_\x p \cdot \partial_x q - \partial_x q \cdot \partial_\x q$. With \eqref{estim-a-um} we obtain that for any $\tilde q \in C_0^\infty(\R^{2d},\R)$ we have 
\begin{align*}
\int_{\R^{2d}} \big\{ \abs \x^2 , \tilde q \big\}  \, d\m
& = \lim_{m \to \infty} \frac i {h_m} \innp{[-h_m^2 \Dx , \Opwm(\tilde q)] u_m}{u_m}\\
& = - \lim_{m \to \infty} \frac 2 {h_m} \Im \innp{\Opwm(\tilde q) u_m}{(-h_m^2 \Dx - i h_m \a - \th_m)u_m}\\
& = 0.
\end{align*}
This proves that the integral of $q \circ \vf^t$ does not depend on $t$. But for some $t$ large enough the supports of $q \circ \vf^t$ and $\m$ are disjoint. This yields $\int q \, d\m = 0$. Then $\m = 0$, which gives the contradiction and concludes the proof of the proposition.
\end{proof}

Now we give a proof based on the Mourre's commutators method. For a perturbation of a Laplace operator, it is usual to consider (a perturbation of) the generator of dilations 
\[
\tilde A_h = - \frac {ih}2 (x \cdot \nabla + \nabla \cdot x) = \Opw(x\cdot \x)
\]
as conjugate operator. Here, due to the damping at infinity, we will only need a ``localized version'' of $\tilde A_h$. We refer to \cite{royer-mourre-formes} for the general theorem.

\begin{proof}[Proof 2 (with Mourre's method)]
Let $\h \in C_0^\infty(\R^d,[0,1])$ be such that $\a(x) \geq c_0$ on a neighborhood of $\supp(1-\h)$. Let $\tilde \h \in C_0^\infty(\R,[0,1])$ be supported in $\big]\frac 12, 2\big[$ and equal to 1 on $\big[\frac 34,\frac 32\big]$. For $h \in ]0,1]$ we set 
\[
A_h = \Opw \big( (x \cdot \x) \h(x) \tilde \h (\abs \x^2) \big).
\]
This operator is selfadjoint and bounded on $L^2(\R^d)$. We check that it satisfies all the assumptions of \cite[Theorem 4.1]{royer-mourre-formes}. Let $h \in ]0,1]$ and $t \geq 0$. For $u \in C_0^\infty(\R^d)$ we have 
\begin{align*}
\nr{\nabla e^{-i \th A_h}u - e^{-i \th A_h} \nabla u}
& \leq \int_0^\th \nr{\frac d {ds} e^{-is A_h} \nabla e^{-i(\th-s)A_h} u} ds\\
& \leq \int_0^\th \nr{e^{-is A_h} [A_h,\nabla] e^{-i(\th-s) A_h} u}ds\\
& \lesssim \nr{u}.
\end{align*}
This proves that the form domain $H^1(\R^d)$ is invariant by $e^{-i\th A_h}$. By pseudo-differential calculus the commutators 
\[
[-h^2 \Dx , A_h], \quad [-h^2 \Dx -ih\a,A_h] \quad \text{and} \quad [[-h^2 \Dx -ih\a,A_h] , A_h]
\]
extend to bounded operators on $L^2(\R^d)$ uniformly in $h \in ]0,1]$. It remains to check the main point, namely the lower bound of the commutator. There exists $\b > 0$ such that for all $(x,\x) \in \R^{2d}$ we have 
\[
\left\{ \abs \x^2 , (x \cdot \x) \h(x) \tilde \h(\abs \x^2) \right\} + \b \a (x) = 2 \abs \x^2  \h(x) \tilde \h(\abs \x^2) + 2 (x\cdot \x) \x \cdot \nabla \h(x) \tilde \h (\abs \x^2)+ \b \a(x) \geq  \tilde \h(\abs \x^2).
\]
After quantization and multiplication by $h$ we obtain
\[
[-h^2 \Dx , i A_h] + \b h \a(x) \geq  h \tilde \h(-h^2\Dx) + O(h), 
\]
where the rest is estimated in $\Lc(L^2(\R^d))$. We set $J = \big[\frac 34, \frac 32\big]$. We compose this inequality by the spectral projection $\1_J(-h^2\Dx)$ on both sides, and for $h$ small enough we obtain
\[
\1_J (-h^2\Dx) \big([-h^2 \Dx , i A_h] + \b h \a(x) \big) \1_J (-h^2\Dx) \geq  \frac h 2 \1_J (-h^2\Dx). 
\]
This is the Mourre assumption in the dissipative and semiclassical setting. By \cite{royer-mourre-formes} we obtain that there exists $h_0 \in ]0,1]$ and $c \geq 0$ such that for $h \in ]0,h_0]$ we have 
\[
\nr{\pppg {A_h} \inv (-h^2\Dx -ih\a - 1)\inv \pppg {A_h}\inv}_{\Lc(L^2(\R^d))} \leq \frac c h.
\]
But $\pppg {A_h}$ is a bounded operator on $L^2(\R^d)$, so we easily deduce the same estimate without weight. This gives \eqref{estim-res-semiclass} and concludes the proof.
\end{proof}

Notice that we did not make standard use of the Mourre theory. Indeed we did not have to prove the limiting absorption principle in some weighted space, since here the resolvent is well defined even on the real axis. The point was only to use the parameter dependant version of the abstract result to obtain uniform estimate for this resolvent.

\section{Weyl's essential spectrum Theorem} \label{sec-weyl}

\newcommand{\disc}{{\mathrm{disc}}}
\newcommand{\ess}{{\mathrm{ess}}}

In this section we briefly discuss the essential spectrum of a non-selfadjoint operator. We first recall that there are different reasonable definitions which coincide for selfadjoint operators but not in the general case (see for instance \cite{Schechter66,GustafsonWei69,EdmundsEva}). 

Here we follow \cite{rs4}. Let $A$ be a closed operator on some Hilbert space $\Kc$. We denote by $\rho(A)$ the resolvent set of $A$. Let $\l$ in the spectrum $\s(A)$ of $A$. We say that $\l$ is in the discrete spectrum $\s_\disc(A)$ of $A$ if it is isolated in $\s(A)$ and if the projection
\[
\frac 1 {2i\pi} \int_{\abs{\s-\l} = r} (A-\s)\inv \, d\s
\]
(where $r > 0$ is such that $\s(A) \cap D(\l,2r)=\singl{\l}$) is of finite rank. Then we define the essential spectrum of $A$ by $\s_\ess(A) = \s(A) \setminus \s_\disc(A)$. It is a closed subset of $\C$.

With this definition, the essential spectrum of $A$ is preserved by perturbation by a relatively compact operator if $A$ is selfadjoint but not in the general case (see Corollary 2 and Example 1 in Section XIII.4 of \cite{rs4}). It may happen that a connected component of $\C \setminus \s_\ess(A)$ is included in the spectrum (and therefore in the essential spectrum) of the perturbed operator. However, we can check that this is the only problem which can occur. Thus, if we can show that some connected component of $\C \setminus \s_\ess(A)$ intersects the resolvent set of the perturbed operator, it has empty intersection with its essential spectrum:

\begin{theorem}[Weyl's essential spectrum theorem] \label{th-weyl}
Let $A$ and $B$ be closed operators such that $(B-A)$ is $A$-compact. Let $\Uc$ be a connected component of $\C \setminus \s_\ess(A)$. Then 
\[
\Uc \cap \rho(A) \cap \rho(B) \neq \emptyset \quad \implies \quad \Uc \cap \s_\ess(B) = \emptyset.
\]
\end{theorem}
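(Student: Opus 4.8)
The plan is to deduce the statement from the analytic Fredholm theorem applied to the holomorphic family $z\mapsto B-z$, exploiting that $B-z$ is a relatively compact perturbation of $A-z$.

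First I would record the elementary facts. Since $B-A$ is $A$-compact it is $A$-bounded with relative bound $0$; in particular $\Dom(B)=\Dom(A)$ with equivalent graph norms, and for every fixed $z_1\in\rho(A)$ the operator $(B-A)(A-z_1)\inv$ is a compact operator on $\Kc$. Moreover $A-z$ is Fredholm of index $0$ for every $z\in\Uc$: if $z\in\rho(A)$ this is trivial, and if $z\in\s_\disc(A)$ one splits $\Kc=\Ran P\oplus\Ker P$ along the finite-rank Riesz projection $P$ of $A$ at $z$, so that $A-z$ becomes $(A_1-z)\oplus(A_2-z)$ with $A_1-z$ acting on a finite-dimensional space and $A_2-z$ invertible. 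Since $B-z=(A-z)+(B-A)$ and $B-A$ is $(A-z)$-compact, the stability of the Fredholm property and of the index under relatively compact perturbations then gives that $B-z$ is Fredholm of index $0$ for all $z\in\Uc$.

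Next I would pass to bounded operators. Fixing $z_1\in\rho(A)$, set
\[
G(z):=(B-z)(A-z_1)\inv = \Id + (z_1-z)(A-z_1)\inv + (B-A)(A-z_1)\inv , \qquad z\in\Uc .
\]
This is a holomorphic $\Lc(\Kc)$-valued function on the connected open set $\Uc$, each $G(z)$ is Fredholm of index $0$ (because $(A-z_1)\inv$ is a bounded bijection of $\Kc$ onto $\Dom(B)$), and $G(z_0)$ is invertible since $z_0\in\rho(A)\cap\rho(B)$. The analytic Fredholm theorem then produces a set $S\subset\Uc$ with no accumulation point in $\Uc$ such that $G(z)$ is invertible for $z\in\Uc\setminus S$, so that $\Uc\setminus S\subseteq\rho(B)$, and $z\mapsto(B-z)\inv=(A-z_1)\inv G(z)\inv$ is finitely meromorphic on $\Uc$ --- meromorphic, with every coefficient of each principal part of finite rank. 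The one point requiring care is that $G$ is a holomorphic \emph{Fredholm}-valued family rather than literally $\Id$ plus a compact operator (the summand $(z_1-z)(A-z_1)\inv$ is not compact); one reduces to the classical ``$\Id+$ compact'' version by the local factorization $G(z)=(G(z)+F)\big(\Id-(G(z)+F)\inv F\big)$, with $F$ a finite-rank operator chosen so that $G(\,\cdot\,)+F$ is invertible near a given point, the second factor then being $\Id$ minus a holomorphic finite-rank operator, and one globalizes using the connectedness of $\Uc$.

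Finally, any $\l\in\Uc\cap\s(B)$ belongs to $S$; as $\Uc$ is open and $S$ has no accumulation point in $\Uc$, $\l$ is isolated in $\s(B)$, and its Riesz projection $\frac 1{2i\pi}\int_{\abs{z-\l}=r}(B-z)\inv\,dz$ equals, up to sign, the residue at $\l$ of the finitely meromorphic function $(B-z)\inv$, hence is of finite rank. Therefore $\l\in\s_\disc(B)$, which is exactly the assertion $\Uc\cap\s_\ess(B)=\emptyset$. I expect the main obstacle to be organizing this analytic-Fredholm step so that it applies on all of $\Uc$, including the points of $\s_\disc(A)$ where $A-z$ is only Fredholm and not invertible; the alternative of working first on the connected open set $\Uc\cap\rho(A)$, where $G(z)=\Id+(B-A)(A-z)\inv$ is genuinely a compact perturbation of the identity, still leaves one to treat the isolated points of $\s_\disc(A)\cap\Uc$ by a separate (block-decomposition) argument.
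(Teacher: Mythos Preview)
Your argument is correct and complete; the organization around the analytic Fredholm alternative for the family $G(z)=(B-z)(A-z_1)^{-1}$ works, and your local factorization $G(z)=(G(z)+F)\big(\Id-(G(z)+F)^{-1}F\big)$ is exactly the standard way to reduce a holomorphic Fredholm family to the classical ``$\Id+$ compact'' form. (One small remark: the claim that $A$-compactness forces relative bound $0$ is not needed and is delicate for general closed $A$; what you actually use, $\Dom(B)=\Dom(A)$, already follows from the hypothesis $z_0\in\rho(A)\cap\rho(B)$ together with the Fredholm alternative for $\Id+(B-A)(A-z_0)^{-1}$.)

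The paper's proof takes a genuinely different route. Instead of working with the unbounded family $B-z$, it passes to the bounded resolvents $R_A=(A-z_0)^{-1}$ and $R_B=(B-z_0)^{-1}$ at the common regular point $z_0$, observes that the difference $D=R_A-R_B=R_B(B-A)R_A$ is honestly compact, and uses the spectral mapping $z\mapsto z_0+z^{-1}$ to transport $\Uc$ to a connected set $\Vc$ on which the factorization $R_B-\zeta=(1-D(R_A-\zeta)^{-1})(R_A-\zeta)$ is literally of the form ``$\Id$ minus compact (meromorphic) family''. This is the classical Reed--Simon argument: the compact structure comes for free after passing to resolvents, so the meromorphic Fredholm theorem applies without any local factorization trick. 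Your approach trades that conformal change of variable for a slightly more general Fredholm alternative; it is more direct in the variable $z$ and also handles the points of $\s_\disc(A)\cap\Uc$ uniformly, at the cost of invoking (or reproving) the Gohberg--Sigal version of the theorem.
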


The proof follows the same lines as the proof of Theorem XIII.14 in \cite{rs4}. We recall the ideas.

\begin{proof}
Let $z_0 \in \Uc \cap \rho(A) \cap \rho(B)$, $R_A = (A-z_0)\inv$ and $R_B = (B-z_0)\inv$. By the resolvent identity the difference $D := R_A - R_B = R_B (B-A) R_A$ is compact. Let $\Vc$ be the (connected and open) set of $z \in \C^*$ such that $z_0 + z\inv \in \Uc$. By Lemma 2 in \cite[Section XIII.4]{rs4} we have $\Vc \cap \rho(R_A) \cap \rho(R_B) \neq \emptyset$ and it is enough to prove that $\Vc \subset \s_\ess(R_B)$.

Let $z \in \Vc$. If $z \in \rho(R_A)$ then $(R_B-z)\inv$ exists if and only if $(1 - D (R_A-z)\inv)\inv$ exists. This is the case for at least one point in $\Vc$. Moreover the map $z \mapsto 1 - D (R_A-z)\inv$ is meromorphic in $\Vc$. By the meromorphic Fredholm Theorem (Theorem XIII.13 in \cite{rs4}) we obtain that $(1 - D (R_A-z)\inv)\inv$ exists on $\Vc$ except for a discrete set where it has finite rank residues. Thus $R_B$ only has discrete spectrum in $\Vc$, and the conclusion follows.
\end{proof}


\begin{thebibliography}{{Wak}14}

\bibitem[ABG96]{amrein}
W.O. Amrein, A.~{Boutet de Monvel}, and V.~Georgescu.
\newblock {\em ${C}_0$-groups, Commutator Methods and Spectral theory of
  {$N$}-body Hamiltonians}, volume 135 of {\em Progress in mathematics}.
\newblock Birkh\"auser Verlag, 1996.

\bibitem[AIK15]{AlouiIbKh15}
L.~{Aloui}, S.~{Ibrahim}, and M.~{Khenissi}.
\newblock {Energy decay for linear dissipative wave equations in exterior
  domains.}
\newblock {\em {J. Differ. Equations}}, 259(5):2061--2079, 2015.

\bibitem[AK02]{alouik02}
L.~Aloui and M.~Khenissi.
\newblock Stabilisation pour l'\'equation des ondes dans un domaine
  ext\'erieur.
\newblock {\em Rev. Math. Iberoamericana}, 18:1--16, 2002.

\bibitem[AL14]{anantharamanl}
N.~Anantharaman and M.~L\'eautaud.
\newblock Sharp polynomial decay rates for the damped wave equation on the
  torus.
\newblock {\em Analysis and PDE}, 7(1):159--214, 2014.

\bibitem[BCT16]{Batty-Ch-To}
C.J.K. {Batty}, R.~{Chill}, and Y.~{Tomilov}.
\newblock {Fine scales of decay of operator semigroups.}
\newblock {\em {J. Eur. Math. Soc. (JEMS)}}, 18(4):853--929, 2016.

\bibitem[BG10]{boussaidg10}
N.~{Boussaid} and S.~{Gol\'enia}.
\newblock {Limiting absorption principle for some long range perturbations of
  Dirac systems at threshold energies.}
\newblock {\em {Commun. Math. Phys.}}, 299(3):677--708, 2010.

\bibitem[BH07]{Burq-Hi-07}
N.~{Burq} and M.~{Hitrik}.
\newblock {Energy decay for damped wave equations on partially rectangular
  domains.}
\newblock {\em {Math. Res. Lett.}}, 14(1):35--47, 2007.

\bibitem[BH12]{bonyh12}
J.-F. Bony and D.~H\"afner.
\newblock {Local Energy Decay for Several Evolution Equations on Asymptotically
  Euclidean Manifolds.}
\newblock {\em Annales Scientifiques de l' \'Ecole Normale Sup\'erieure},
  45(2):311--335, 2012.

\bibitem[BJ16]{BurqJo}
N.~Burq and R.~Joly.
\newblock Exponential decay for the damped wave equation in unbounded domains.
\newblock {\em Communications in Contemporary Mathematics}, 18(6), 2016.

\bibitem[BK08]{borisovk08}
D.~Borisov and D.~{Krej\v ci\v r\'\i k}.
\newblock {PT}-symmectric waveguides.
\newblock {\em Integral Equations and Operator Theory}, 68(4):489--515, 2008.

\bibitem[BLR92]{bardoslr92}
C.~{Bardos}, G.~{Lebeau}, and J.~{Rauch}.
\newblock {Sharp sufficient conditions for the observation, control, and
  stabilization of waves from the boundary.}
\newblock {\em {SIAM J. Control Optim.}}, 30(5):1024--1065, 1992.

\bibitem[Bou11]{bouclet11}
J.-M. Bouclet.
\newblock Low frequency estimates and local energy decay for asymptotically
  {E}uclidean laplacians.
\newblock {\em Comm. Part. Diff. Equations}, 36:1239--1286, 2011.

\bibitem[BR14]{boucletr14}
J.-M. Bouclet and J.~Royer.
\newblock Local energy decay for the damped wave equation.
\newblock {\em Jour. Func. Anal.}, 266(2):4538--4615, 2014.

\bibitem[BT10]{BorichevTo10}
A.~{Borichev} and Y.~{Tomilov}.
\newblock {Optimal polynomial decay of functions and operator semigroups.}
\newblock {\em {Math. Ann.}}, 347(2):455--478, 2010.

\bibitem[Bur98]{burq98}
N.~Burq.
\newblock {D\'ecroissance de l'\'energie locale de l'\'equation des ondes pour
  le probl\`eme ext\'erieur et absence de r\'esonance au voisinage du r\'eel.}
\newblock {\em Acta Math.}, 180(1):1--29, 1998.

\bibitem[Bur02]{burq02}
N.~Burq.
\newblock {Semi-classical estimates for the resolvent in nontrapping
  geometries}.
\newblock {\em Int. Math. Res. Not.}, 5:221--241, 2002.

\bibitem[EE87]{EdmundsEva}
D.~E. Edmunds and W.~D. Evans.
\newblock {\em Spectral Theory and Differential Operators}.
\newblock Oxford University Press, New York, 1987.

\bibitem[GL93]{gerardl93}
P.~G{\'e}rard and E.~Leichtnam.
\newblock {Ergodic properties of eigenfunctions for the Dirichlet problem}.
\newblock {\em Duke Math. J.}, 71(2):559--607, 1993.

\bibitem[GW69]{GustafsonWei69}
K.~{Gustafson} and J.~{Weidmann}.
\newblock {On the essential spectrum.}
\newblock {\em {J. Math. Anal. Appl.}}, 25:121--127, 1969.

\bibitem[HO04]{hosonoo04}
T.~{Hosono} and T.~{Ogawa}.
\newblock {Large time behavior and $L^{p}$-$L^{q}$ estimate of solutions of
  2-dimensional nonlinear damped wave equations.}
\newblock {\em {J. Differ. Equations}}, 203(1):82--118, 2004.

\bibitem[{Ike}02]{Ikehata02}
R.~{Ikehata}.
\newblock {Diffusion phenomenon for linear dissipative wave equations in an
  exterior domain.}
\newblock {\em {J. Differ. Equations}}, 186(2):633--651, 2002.

\bibitem[ITY13]{IkehataToYo13}
R.~Ikehata, G.~Todorova, and B.~Yordanov.
\newblock Optimal decay rate of the energy for wave equations with critical
  potential.
\newblock {\em J. Math. Soc. Japan}, 65(1):183--236, 2013.

\bibitem[Jec04]{jecko04}
{\relax Th}.~Jecko.
\newblock {From classical to semiclassical non-trapping behaviour.}
\newblock {\em C. R., Math., Acad. Sci. Paris}, 338(7):545--548, 2004.

\bibitem[Khe03]{khenissi03}
M.~Khenissi.
\newblock {\'E}quation des ondes amorties dans un domaine ext\'erieur.
\newblock {\em Bull. Soc. Math. France}, 131(2):211--228, 2003.

\bibitem[KK05]{KrejcirikKr05}
D.~{Krej\v ci\v r\'\i k} and J.~{K\v r\'\i \v z}.
\newblock {On the spectrum of curved planar waveguides.}
\newblock {\em {Publ. Res. Inst. Math. Sci.}}, 41(3):757--791, 2005.

\bibitem[KR]{KhenissiRo}
M.~Khenissi and J.~Royer.
\newblock Local energy decay and smoothing effect for the damped
  {S}chr\"odinger equation.
\newblock Preprint arXiv:1505.07200.

\bibitem[KR14]{KrejcirikRa14}
D.~{Krej\v{c}i\v{r}\'{\i}k} and N.~{Raymond}.
\newblock {Magnetic effects in curved quantum waveguides.}
\newblock {\em {Ann. Henri Poincar\'e}}, 15(10):1993--2024, 2014.

\bibitem[Leb96]{lebeau96}
G.~Lebeau.
\newblock {\'E}quation des ondes amorties.
\newblock {In : A. Boutet de Monvel and V. Marchenko (editors), \emph{Algebraic
  and geometric methods in mathematical physics}, 73-109. Kluwer Academic
  Publishers}, 1996.

\bibitem[LL]{leautaudl}
M.~L\'eautaud and N.~Lerner.
\newblock Energy decay for a locally undamped wave equation.
\newblock {\em Ann. Fac. Sci. Toulouse. Math.}
\newblock To appear.

\bibitem[LR97]{lebeaur97}
G.~{Lebeau} and L.~{Robbiano}.
\newblock {Stabilisation de l'\'equation des ondes par le bord.}
\newblock {\em {Duke Math. J.}}, 86(3):465--491, 1997.

\bibitem[Mal16]{Malloug16}
M.~Malloug.
\newblock Local energy decay for the damped klein-gordon equation in exterior
  domain.
\newblock {\em Applicable Analysis}, 2016.

\bibitem[{Mat}76]{Matsumura76}
A.~{Matsumura}.
\newblock {On the asymptotic behavior of solutions of semi-linear wave
  equations.}
\newblock {\em {Publ. Res. Inst. Math. Sci.}}, 12:169--189, 1976.

\bibitem[Mel79]{melrose79}
R.~Melrose.
\newblock Singularities and energy decay in acoustical scattering.
\newblock {\em Duke Math. Journal}, 46(1):43--59, 1979.

\bibitem[MN03]{marcatin03}
P.~{Marcati} and K.~{Nishihara}.
\newblock {The $L^{p}$--$L^{q}$ estimates of solutions to one-dimensional
  damped wave equations and their application to the compressible flow through
  porous media.}
\newblock {\em {J. Differ. Equations}}, 191(2):445--469, 2003.

\bibitem[{Moc}76]{Mochizuki76}
K.~{Mochizuki}.
\newblock {Scattering theory for wave equations with dissipative terms.}
\newblock {\em {Publ. Res. Inst. Math. Sci.}}, 12:383--390, 1976.

\bibitem[Mou83]{mourre83}
E.~Mourre.
\newblock Op\'erateurs conjugu\'es et propri\'et\'es de propagation.
\newblock {\em Comm. Math. Phys.}, 91:279--300, 1983.

\bibitem[MRS77]{morawetzrs77}
C.S. Morawetz, J.V. Ralston, and W.A. Strauss.
\newblock Decay of the solution of the wave equation outside non-trapping
  obstacles.
\newblock {\em Comm. on Pure and Applied Mathematics}, 30:447--508, 1977.

\bibitem[{Nar}04]{narazaki04}
T.~{Narazaki}.
\newblock {$L^p$-$L^q$ estimates for damped wave equations and their
  applications to semi-linear problem.}
\newblock {\em {J. Math. Soc. Japan}}, 56(2):585--626, 2004.

\bibitem[{Nis}03]{nishihara03}
K.~{Nishihara}.
\newblock {$L^p$-$L^q$ estimates of solutions to the damped wave equation in
  3-dimensional space and their application.}
\newblock {\em {Math. Z.}}, 244(3):631--649, 2003.

\bibitem[{Nis}09]{Nishiyama-09}
H.~{Nishiyama}.
\newblock {Polynomial decay for damped wave equations on partially rectangular
  domains.}
\newblock {\em {Math. Res. Lett.}}, 16(5-6):881--894, 2009.

\bibitem[NZ09]{nonnenmacherz09}
S.~Nonnenmacher and M.~Zworski.
\newblock {Quantum decay rates in chaotic scattering.}
\newblock {\em Acta. Math.}, 203, 2009.

\bibitem[OZP01]{OrivePaZu01}
R.~{Orive}, E.~{Zuazua}, and A.F. {Pazoto}.
\newblock {Asymptotic expansion for damped wave equations with periodic
  coefficients.}
\newblock {\em {Math. Models Methods Appl. Sci.}}, 11(7):1285--1310, 2001.

\bibitem[Ral69]{ralston69}
J.~Ralston.
\newblock Solution of the wave equation with localized energy.
\newblock {\em Comm. on Pure and Applied Mathematics}, 22:807--823, 1969.

\bibitem[Roy]{royer-diss-wave-guide}
J.~Royer.
\newblock Local energy decay and diffusive phenomenon in a dissipative wave
  guide.
\newblock Preprint, arXiv:1601.05299.

\bibitem[Roy10]{royer-mourre}
J.~Royer.
\newblock Limiting absorption principle for the dissipative {H}elmholtz
  equation.
\newblock {\em Comm. Part. Diff. Equations}, 35(8):1458--1489, 2010.

\bibitem[Roy14]{royer-nondiss}
J.~Royer.
\newblock Uniform resolvent estimates for a non-dissipative {H}elmholtz
  equation.
\newblock {\em Bulletin de la S.M.F.}, 142(4):591--633, 2014.

\bibitem[Roy15]{royer-diss-schrodinger-guide}
J.~Royer.
\newblock Exponential decay for the {S}chr\"odinger equation on a dissipative
  wave guide.
\newblock {\em Ann. Henri Poincar\'e}, 16(8):1807--1836, 2015.

\bibitem[Roy16a]{royer-dld-energy-space}
J.~Royer.
\newblock Local decay for the damped wave equation in the energy space.
\newblock {\em Journal of the Institute of Mathematics of Jussieu}, 2016.
\newblock To appear, available online:
  \url{http://dx.doi.org/10.1017/S147474801600013X}.

\bibitem[Roy16b]{royer-mourre-formes}
J.~Royer.
\newblock Mourre's commutators method for a dissipative form perturbation.
\newblock {\em Journal of Operator Theory}, 76(1):351--385, 2016.

\bibitem[RS79]{rs4}
M.~Reed and B.~Simon.
\newblock {\em Method of Modern Mathematical Physics}, volume IV, Analysis of
  Operator.
\newblock Academic Press, 1979.

\bibitem[RT74]{raucht74}
J.~Rauch and M.~Taylor.
\newblock Exponential decay of solutions to hyperbolic equations in bounded
  domains.
\newblock {\em Indiana Univ. Math. J.}, 24(1):79--86, 1974.

\bibitem[{Sch}66]{Schechter66}
M.~{Schechter}.
\newblock {On the essential spectrum of an arbitrary operator. I.}
\newblock {\em {J. Math. Anal. Appl.}}, 13:205--215, 1966.

\bibitem[{Sch}11]{schenck11}
E.~{Schenck}.
\newblock {Exponential stabilization without geometric control.}
\newblock {\em {Math. Res. Lett.}}, 18(2):379--388, 2011.

\bibitem[TY09]{TodorovaYo09}
G.~{Todorova} and B.~{Yordanov}.
\newblock {Weighted $L^2$-estimates for dissipative wave equations with
  variable coefficients.}
\newblock {\em {J. Differ. Equations}}, 246(12):4497--4518, 2009.

\bibitem[{Wak}14]{Wakasugi14}
Y.~{Wakasugi}.
\newblock {On diffusion phenomena for the linear wave equation with
  space-dependent damping.}
\newblock {\em {J. Hyperbolic Differ. Equ.}}, 11(4):795--819, 2014.

\bibitem[Wun]{Wunsch}
J.~Wunsch.
\newblock Periodic damping gives polynomial energy decay.
\newblock {\em Math. Res. Lett.}
\newblock To appear.

\bibitem[Zwo12]{zworski}
M.~Zworski.
\newblock {\em Semiclassical Analysis}, volume 138 of {\em Graduate Studies in
  Mathematics}.
\newblock American Mathematical Society, 2012.

\end{thebibliography}
\end{document}